\documentclass[a4paper,UKenglish,cleveref,autoref]{lipics-v2021}
\usepackage{amsmath,amssymb,mathtools,xcolor}
\usepackage{thmtools,thm-restate}
\usepackage{microtype}
\usepackage[T1]{fontenc}
\usepackage{lmodern}
\hideLIPIcs
\title{Finite-on-Top-of-Temporal  CSPs \\ in Fixed-Point Logic}
\titlerunning{Finite on Top of Temporal CSPs in Fixed-Point Logic}
\author{Manuel Bodirsky}{Institut f\"ur Algebra, TU Dresden, Germany}{manuel.bodirsky@tu-dresden.de}{https://orcid.org/0000-0001-8228-3611}{}
\author{Philipp A. Grzywaczyk}{Institut f\"ur Algebra, TU Dresden, Germany}{philipp_alexander.grzywaczyk@tu-dresden.de}{https://orcid.org/0009-0007-5316-9386}{}
\authorrunning{M. Bodirsky and P. A. Grzywaczyk}
\Copyright{Manuel Bodirsky and Philipp A. Grzywaczyk}
\funding{Both authors have been funded by the European Research Council (Project POCOCOP, ERC Synergy Grant 101071674). Views and opinions expressed are however those of the authors only and do not necessarily reflect those of the European Union or the European Research Council Executive Agency. Neither the European Union nor the granting authority can be held responsible for them.}
\acknowledgements{The authors would like to thank Antoine Mottet for the discussions on the topic of this paper.}
\makeatletter
\renewcommand{\p@enumii}{}
\makeatother
\ccsdesc[500]{Theory of computation~Constraint and logic programming}
\ccsdesc[500]{Theory of computation~Finite Model Theory}
\keywords{Constraint satisfaction, temporal constraints, fixed-point logic,
counting, polymorphisms, bounded width, sampling}
\EventEditors{}\EventNoEds{0}\EventLongTitle{}
\EventShortTitle{}\EventAcronym{}\EventYear{}
\EventDate{}\EventLocation{}
\SeriesVolume{}\ArticleNo{}
\newcommand{\conditionref}[1]{\textup{(\ref{#1})}}

\newcommand{\ignore}[1]{}
\newcommand{\FP}{\mathrm{FP}}
\newcommand{\FPC}{\mathrm{FPC}}
\newcommand{\FPnum}{\mathrm{FP}^{*}}
\newcommand{\B}{\mathfrak B}
\newcommand{\D}{\mathfrak D}
\newcommand{\E}{\mathfrak E}
\newcommand{\Pf}{\mathfrak P}
\newcommand{\Q}{\mathbb Q}
\newcommand{\bS}{\mathfrak S}

\newcommand{\A}{\mathfrak A}
\newcommand{\bD}{\mathfrak D}

\newcommand{\mi}{\mathrm{mi}}
\newcommand{\tll}{\mathrm{ll}}

\newcommand{\ppop}{\mathrm{pp}}
\DeclareMathOperator{\Pol}{Pol}
\DeclareMathOperator{\Inv}{Inv}
\DeclareMathOperator{\Aut}{Aut}
\DeclareMathOperator{\CSP}{CSP}
\DeclareMathOperator{\Sg}{Sg}
\DeclareMathOperator{\ar}{ar}
\DeclareMathOperator{\pr}{pr}

\nolinenumbers
\begin{document}
\maketitle
\begin{abstract}
We study definability in fixed-point logic with counting (FPC) for a class of constraint
satisfaction problems (CSPs) that simultaneously generalises  finite-domain CSPs and temporal CSPs. 
The templates for the CSPs 
are first-order expansions of 
the structure $([k] \times {\mathbb Q};<_1,\dots,<_k,\sim)$ where
$<_i \, \coloneqq \{((i,x),(i,x')) \mid x<x'\}$ and $\sim \, \coloneqq \{((i,x),(i,x')) \mid i \in [k], x,x' \in {\mathbb Q}\}$, i.e., $k$ copies of $({\mathbb Q};<)$. 

Such a CSP is definable in FPC precisely when the
template pp-constructs neither the ternary temporal relation $X$, which holds on three elements if the minimum
is attained exactly twice, nor a nontrivial finite affine structure.
Equivalently, the template has global polymorphisms that  witness the tractability of the temporal CSPs on the copies, 
and that provide 3-4 weak near-unanimity operations on the set of copies of ${\mathbb Q}$. 
In this case, we prove that
the CSP is even expressible in the less expressive logic FP$^*$, which is the restriction of FPC without counting terms (i.e., the extension of first-order logic 
by a fixed
point operator and an ordered auxiliary number sort). 

Our algorithm samples finitely many values in each copy and then performs nested
constraint-singleton consistency tests (which is a uniform polynomial-time algorithm for finite-domain CSPs). The correctness proof uses finite-domain CSP theory to combine consistency certificates by a single
polymorphism term, fixing each variable to one copy without destroying
the certificates needed for the temporal algorithm. 
We then use a known normal form for algebraic products to control the remaining interactions between different copies.
\end{abstract}

\medskip 
{\bf{AI Disclosure.}}
OpenAI's ChatGPT was used for initial algorithm and proof development, as well as drafting an initial LaTeX file. It contributed the ideas of the outer colour-pair testing in the algorithm, and using the syntactical normal forms for algebraic products of $(\Q;<)$. 
Many of the other ideas, e.g., using the combination of sampling and consistency algorithms, determining the colours via cycle-consistency of the induced finite CSP, and using palette polymorphisms came from the authors, and some others emerged during discussions with ChatGPT.
The authors reviewed and revised all arguments produced by LLMs; the proofs have been streamlined and rewritten. In particular, the entire text of the paper has been written by the authors. The authors verified the mathematical content, claims and citations and take full responsibility for the paper.

%%%%%%%%%%%%%%%%%%%%%%%%%%%%%%%%%%%%%%%%%%%%%%%%%%%%%%%%%%%%%%%%%%%%%%%%%%%%%%%%%%%%%%%%%%%%%%%%%%%%%%%%%%%%%%%%%%%%%%%%%%%%%%
\begingroup
\setcounter{tocdepth}{1}
\newpage
\tableofcontents
\endgroup

%\clearpage
%\setcounter{page}{1}
\section{Introduction}\label{sec:introduction}
A constraint satisfaction problem (CSP) is a computational problem
where we are given a finite set of variables and a finite set of constraints, and the question 
is 
whether variables can be assigned
values in a fixed relational structure so that all the  constraints are satisfied. 
One of the central research topics about CSPs is their \emph{descriptive complexity}  depending on the type of constraints that is allowed in the input (see, e.g., the survey article~\cite{Dualities}).
A prominent formalism in descriptive complexity is \emph{fixed-point logic} (FP),
a powerful extension of first-order logic~\cite{Libkin}. Every problem that can be expressed in fixed-point logic is \emph{in P}, i.e., can be solved in polynomial time. 
And indeed, many of the important polynomial-time algorithms for CSPs that are based on establishing some form of local consistency can be formulated in fixed-point logic. 
However, it is known that some CSPs in P cannot be expressed in fixed-point logic: this is for instance the case for the problem of deciding whether a given system of linear equations has a solution over the two-element field~\cite{AtseriasBulatovDawar} (which can be solved in polynomial time by Gaussian elimination). 

To discuss the current state of the art, it will be convenient to formulate CSPs as \emph{homomorphism problems}: for a fixed relational structure $\B$, 
called the \emph{template}, 
we view $\CSP(\B)$ as the class of all finite structures $\A$ that have a homomorphism to $\B$. We may now ask:  
for which structures $\B$ is there a sentence $\Phi$ in FP such that a finite structure $\A$ has a homomorphism to $\B$ if and only if $\A$ satisfies $\Phi$? If this is the case, we say that $\CSP(\B)$ is \emph{in FP} (and use analogous conventions for other logics discussed below). 

If $\B$ is a finite structure, it turns out that $\CSP(\B)$ is in FP if and only if it is in Datalog (an extremely restricted \emph{fragment} of FP) if and only if it is in FPC (an extremely powerful \emph{extension} of FP by a counting mechanism)~\cite{AtseriasBulatovDawar,BartoKozik}. 
In general, these logics are known to have different expressive power: for instance, \emph{Parity}, which is the class of all structures whose domain has even cardinality, is expressible in FPC, but not in FP. 
This stunning gap in the descriptive complexity of finite-domain CSPs coincides with a fundamental border in universal algebra, which can be characterized in numerous different ways. 
To state one of these characterisations, which will be of fundamental importance for our results, we need the concept of a  \emph{polymorphism} of $\B$; this is an operation
$B^s\to B$ preserving every relation from the signature of $\B$ coordinatewise.
Focussing on formulations that will be important later, we mention that the following are equivalent:
\begin{itemize}
    \item $\CSP(\B)$ is in FP; 
    \item $\CSP(\B)$ is in Datalog; 
    \item for every prime $p$, the structure $\B$ cannot pp-construct $({\mathbb Z}_p;+,1)$; 
    \item $\B$ has polymorphisms 
    %(see  Section~\ref{sect:pol}) 
    $g_1 \colon B^3 \to B$ and $g_2 \colon B^4 \to B$ that 
    are \emph{3-4 weak near-unanimity operations}, i.e., they satisfy, 
    for all $x,y \in B$,
    \begin{align*}
        & g_1(x,x,y) = g_1(x,y,x) = g_1(y,x,x) \\ = \; & g_2(x,x,x,y) = g_2(x,x,y,x) = g_2(x,y,x,x) = g_2(y,x,x,x).
    \end{align*}
\end{itemize}

If $\B$ is an arbitrary structure over an infinite domain, all of the mentioned equivalences fail~\cite{RydvalDescr}. They even fail if we focus on a model-theoretically very tame class of templates, namely on \emph{temporal structures}, which are structures with domain ${\mathbb Q}$ whose relations are (parameter-free) first-order definable in $({\mathbb Q};<)$. 
We give two concrete examples of temporal structures, because their CSPs, which are called \emph{temporal CSPs}, exhibit different descriptive complexities.

\begin{example}
    Let $R_{\min} \coloneqq \{(x,y,z) \in {\mathbb Q}^3 \mid x > y \vee x > z \}$. 
    Then $\CSP({\mathbb Q};R_{\min})$ is in FP~\cite{RydvalDescr}, but not in Datalog~\cite{Book}. 
\end{example}

\begin{example}
    Let $X = \{(x,y,z) \in {\mathbb Q}^3 \mid (x=y \wedge y< z) \vee (y=z \wedge z <x) \vee (z=x \wedge x<y) \}$. 
    Then $\CSP({\mathbb Q};X)$ is in P~\cite{tcsps-journal}, but not in FPC~\cite{RydvalDescr}. 
\end{example}

Also none of the algebraic characterisations mentioned above capture containment in Datalog, FP, or FPC (and are also inequivalent).
Nonetheless, the descriptive complexity of temporal CSPs has been classified.
Also for temporal CSPs, there is a gap in the descriptive complexity landscape: they are either in FP, or not even in FPC~\cite{RydvalDescr}. 
For arbitrary CSPs, it is an open problem whether containment in FPC implies containment in FP~\cite{RydvalDescr}.

\subsection{Finite on Top of Temporal}
In this paper, we address a class of CSPs which is of fundamental importance for advancing our understanding of the descriptive complexity of CSPs in general. 
The class naturally generalises both the class of temporal CSPs and the class of finite-domain CSPs. 

\begin{definition}\label{def:totof} 
Fix $k\geq1$ and let $B=[k]\times\Q$, where $[k]=\{1,\ldots,k\}$.
Then $C_i\coloneqq \{i\}\times\Q$ carries a strict rational order $<_i$;
the relation $\sim$ says that two elements belong to the same part.
Formally, 
\begin{alignat*}{2}
 \mathfrak D_k=(B;<_1,\ldots,<_k,\sim),\qquad & 
 (i,a)<_j(i',b)\ & &\Longleftrightarrow\ i=i'=j\text{ and }a<b \\
&  (i,a) \sim (i',b) \ & &\Longleftrightarrow\ i=i' 
\end{alignat*} 
A structure is called \emph{finite-on-top-of-temporal} if it is a first-order expansion 
of $\mathfrak D_k$ with a finite relational signature.
\end{definition} 

In the context of this definition, we call $[k]$ the \emph{colour set}, 
and if $(i,a) \in B$, then we call $i$ the \emph{colour} of $(i,a) \in C_i$. 
We may thus view $\bD_k$ as $k$ distinguished copies of $({\mathbb Q};<)$, and additionally an equivalence relation that relates elements that have the same colour. 

The class of CSPs for templates that are finite-on-top-of-temporal  
is
contained in a larger class of CSPs from the literature, and in fact is quite prototypical in this class, which will be discussed in the following. 

\subsection{Restricted Orbit Growth}
A fruitful way of obtaining classification results in model theory, 
and consequently of restricting the class of templates for CSPs, 
is to focus on structures $\B$ such that for every number of variables $n$, there is a restricted number of \emph{essentially different solutions}. 
Formally, let $f_{\B}(n)$ be the number of orbits of $\Aut(\B)$ in its natural action on the set of $n$-element subsets of $B$.

\begin{example}
    A structure $\B$ is finite if and only if $f_{\B}(n)$ is eventually $0$. 
\end{example}

\begin{example} 
A famous result of Cameron~\cite{Cameron5} states that for any countably infinite structure $\B$, we have 
$f_{\B}(n) = 1$ for all $n \in {\mathbb N}$ if and only if $\B$ is isomorphic to a temporal structure. 
\end{example} 

\begin{example} 
By the theorem of Ryll-Nardzewski, Engeler, and Svenonius (see, e.g.,~\cite{Hodges}) a countable structure $\B$ is 
$\omega$-categorical\footnote{A relational structure $\B$ with finite signature is called \emph{$\omega$-categorical} if its first-order theory, i.e., the set of all first-order sentences that hold in $\B$, has only one countable model up to isomorphism.} if and only if 
$f_{\B}(n)$ is finite for all $n \in {\mathbb N}$.
\end{example} 

It is easy to see that if $\B$ is finite-on-top-of-temporal, then $f_{\B}(n)$ is polynomial: 
if $\B$ is a first-order expansion of $\bD_k$ for some $k \geq 1$, then 
$f_{\B}(n) = f_{\bD_k}(n) =\binom{n+k-1}{k-1}$ is the number of ways how to write $n$ as a sum $n_1 + \cdots + n_k$ of non-negative integers $n_1,\dots,n_k$. 
Another example of structures $\B$ where $f_{\B}$ is polynomial are \emph{cellular structures}: this is the smallest class ${\mathcal C}$ of structures that contains countable disjoint unions of a finite structure and is closed under finite disjoint unions and finite-index first-order reducts~\cite{Bodor24}.

Countable structures where $f_{\B}(n)$ is bounded by a polynomial have been classified by Thiery and Falque~\cite{PolynomialGroups}. 
It turns out that besides the structures mentioned above 
these are precisely the structures 
that have an expansion by finitely many constants which is bidefinable with a disjoint union of a finite-on-top-of-temporal structure 
and a cellular structure~\cite[Theorem 1.6]{BODOR_2026}.

This programme sits within a broader hierarchy of orbit growth.
For countable homogeneous $\omega$-categorical relational structures,
growth bounded by $2^{O(n)}$ is conjectured to be equivalent to
\emph{monadic NIP}, meaning that NIP persists under arbitrary expansions
by unary predicates~\cite[Conjecture~1]{MonadicNIP}.
The implication from this growth bound to monadic NIP is already
known~\cite[Theorem~1.2]{MonadicNIP}.
Further up the hierarchy, every first-order reduct of a homogeneous
structure in a finite relational signature has subset-orbit growth
bounded by $2^{n^{O(1)}}$: if the maximum relation arity is $r\geq1$,
there are at most $2^{O(n^r)}$ labelled induced structures on $n$
elements, and homogeneity identifies isomorphism types with subset
orbits. The polynomial-growth case is thus a natural starting point
for a systematic classification within this much larger class.

\subsection{The Classification}
Let $\mathfrak X=(\Q;X)$, where $X(x,y,z)\ \Leftrightarrow\ x=y<z\ \lor\ y=z<x\ \lor\ z=x<y$. 
For a nontrivial abelian group $M$, let
$\mathfrak A_M=(M;R_M,(\{a\})_{a\in M})$, where
$R_M(x,y,z,t)$ means $x+y=z+t$; we call these structures \emph{affine}.
To formulate our result, we also need the concept of \emph{pp-constructions}, defined in Section~\ref{sec:preliminaries}.
Such constructions transfer CSP
definability in $\FPC$ \cite{RydvalDescr}.

\begin{theorem}\label{thm:main}
Let $\B$ be finite-on-top-of-temporal.
%a finite-signature first-order expansion of $\mathfrak D_k$. 
Then the following are equivalent.
\begin{enumerate}
    \item $\CSP(\B)$ is definable in $\FPC$; 
    \item $\B$ pp-constructs
neither $\mathfrak X$ nor any 
finite affine structure. 
    \item $\CSP(\B)$ is definable in $\FP^*$, i.e., in 
fixed-point logic with an ordered auxiliary number sort and without counting terms (see, e.g., \cite{graedel}, \cite{otto}).
\end{enumerate}
\end{theorem}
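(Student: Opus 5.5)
The proof is a cycle of implications. The implication (3)$\Rightarrow$(1) is immediate, since $\FP^*$ is a fragment of $\FPC$.

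\textbf{(1)$\Rightarrow$(2).} I would use that pp-constructions preserve definability of the CSP in $\FPC$~\cite{RydvalDescr}. It is known that $\CSP(\mathfrak X)$ is not in $\FPC$~\cite{RydvalDescr}. It is also known that for every finite nontrivial abelian group $M$, $\CSP(\mathfrak A_M)$ is not in $\FPC$: this is Atserias--Bulatov--Dawar~\cite{AtseriasBulatovDawar} for $\mathbb Z_p$. Since every finite nontrivial abelian $M$ pp-constructs some $(\mathbb Z_p;+,1)$, the general case follows as well. Hence if $\B$ pp-constructed one of these structures, $\CSP(\B)$ would not be in $\FPC$.

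\textbf{(2)$\Rightarrow$(3).} This is the real content of the theorem, and I would split it into an algebraic half and an algorithmic half.

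For the algebraic half, I would translate the non-construction hypotheses into polymorphisms. Take the finite structure $\B/{\sim}$ on the colour set $[k]$, enriched with all relations induced by first-order definable relations of $\B$. Any affine structure pp-constructed from this finite quotient is also pp-constructed from $\B$. So the finite theory (Barto--Kozik, \cite{BartoKozik}) gives 3-4 weak near-unanimity polymorphisms on the colours. Since $\B$ is $\omega$-categorical, a compactness argument lifts these to polymorphisms $g_1,g_2$ of $\B$ that preserve $\sim$ and act as the 3-4 WNUs on $[k]$. Next I would apply Rydval's temporal classification to each copy $C_i$, or rather to the temporal structures induced on $C_i$ by pp-definitions with other variables fixed to colours. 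Not pp-constructing $\mathfrak X$ should then give, for each colour, polymorphisms of type $\min$, $\mathrm{mi}$, $\mathrm{mx}$ or $\mathrm{ll}$, or their duals. Finally I would combine the two families into single global polymorphisms via canonisation and product normal forms.

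For the algorithmic half, I would design an $\FP^*$ procedure. First, sample finitely many values per colour: a fixed finite set of orbit representatives suffices because of $\omega$-categoricity, and these can be created with the ordered number sort. This yields a finite-domain instance. Then run nested (singleton) consistency on the sampled instance to determine the set of possible colours of each variable. Within the resulting colour-restricted instance, run the temporal FP algorithm on each copy. Rejecting whenever a consistency test fails is sound. All steps are fixed-point definable without counting, because the sample is finite and the temporal algorithms for the tractable non-$\mathfrak X$ cases are in FP~\cite{RydvalDescr}.

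The main obstacle is the completeness direction: if all tests pass, a solution exists. My plan is to collect, for each variable, consistency certificates, namely partial solutions under different colour choices. These would be combined using a single term built from the WNU-type polymorphisms $g_1,g_2$; by the Barto--Kozik argument, this fixes each variable to one colour while still satisfying the finite projection. The crux is that this combining term must also preserve the temporal certificates, i.e.\ act as a tractable temporal polymorphism on each copy. Cross-colour relations must then be handled through a syntactic normal form of relations definable in products of $(\Q;<)$, so that the remaining inter-copy constraints reduce to independent temporal instances. I expect the construction of these "palette" polymorphisms to be the hardest part. They must be simultaneously WNU on colours and of the correct temporal type on every copy.
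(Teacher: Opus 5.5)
\textbf{There is a genuine gap, in the algorithmic half of (2)$\Rightarrow$(3).}

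Your (3)$\Rightarrow$(1) and (1)$\Rightarrow$(2) match the paper. Your algebraic half is in substance the paper's (2)$\Rightarrow$(4) (Lemma~\ref{lem:local_closure} and Proposition~\ref{prop:fp-inexpr}), up to fixable slips:
the finite quotient on $[k]$ must carry the relations whose $\pi$-preimages are \emph{pp}-definable in $\B$; with first-order definable ones you get every relation on $[k]$, and the transfer to $\B$ fails. Also, $\mathrm{mx}$ does not belong in your list, since it preserves $X$. Finally, nothing requires merging the two polymorphism families into single global polymorphisms.

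The algorithm you describe does not work as stated, for two reasons.
\begin{itemize}
\item Consistency only yields \emph{sets} of possible colours. A fixed-point formula cannot select one consistent colouring, yet running a temporal algorithm ``on each copy'' presupposes a fixed colouring.
\item Even with colours fixed, the constraints of $\B$ link different copies. In Theorem~\ref{thm:product-normal}, clauses at a $\min$/$\mi$-coordinate contain disequalities at $\tll$-coordinates, so the instance does not split into independent temporal instances.
\end{itemize}
The paper's algorithm never computes colours. It runs one uniform procedure: CSAC on the sample $[k]\times[n]$, with outer tests pinning each tuple of each pair constraint $\psi_{xy}$. The colouring $c$ exists only in the correctness proof.

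In that proof you also misplace the crux. You ask for a combining term that acts as a tractable temporal operation on every copy, and for palette polymorphisms that are simultaneously WNU on colours. You give no construction of these, and none is needed. The missing idea is Lemma~\ref{lem:products}: applying \emph{any} polymorphism of $\B$ coordinatewise to finite CSAC tuber systems gives a finite CSAC tuber system. So the $\{g_1,g_2\}$-term that fixes the colours needs no temporal property at all.

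The paper finds this term via Kozik's cycle-consistency theorem (Theorem~\ref{thm:external-cycle}). It applies the theorem to an auxiliary instance over the subalgebra of $\E^\Lambda$ generated by the colour projections of the pair certificates. The outer pair tests supply these certificates. They also ensure that the subalgebra projects onto each $D_x$ and contains each $Q_{\psi_{xy}}$ on pairs of coordinates. Together with SAC of the projected colour system, this makes the auxiliary instance arc- and cycle-consistent.

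Only after that, on a colour-determined CSAC system, are the $f_i$ used. They serve only to build \emph{partial} palette totally symmetric maps of $\B_c$, colour by colour: $L$ on the $\tll$-coordinates, and $M$ or $\min$ elsewhere. These maps are not polymorphisms of $\B$ and are unrelated to $g_1,g_2$. Because $L$ reflects disequalities, they preserve the cross-colour clauses of the normal form, and Theorem~\ref{thm:palette-csac} then yields a solution.
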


Note that in general, $\FP\subsetneq\FPnum\subsetneq\FPC$. While $\FPnum$ can express for example parity of the domain of a structure, it cannot express parity of a unary predicate interpreted in a structure, which $\FPC$ can express. 

We also give an algebraic characterization in terms of polymorphisms. Preservation
of the relation $\sim$ from Definition~\ref{def:totof} gives an induced operation
$\widetilde f \colon [k]^s\to[k]$, with
$$\widetilde f(\pi(a_1),\ldots,\pi(a_s))\coloneqq \pi(f(a_1,\ldots,a_s)),$$where $\pi((i,a))\coloneqq i$.
Since $f$ preserves the colours,  the operation $\widetilde f$ is \emph{idempotent}, i.e., satisfies $\widetilde f(c,\dots,c) = c$ for all $c \in [k]$. Let $\mathcal T$ consist of
the temporal operations $\min$, $\mi$, $\tll$, and their duals
(defined in Section~\ref{sec:preliminaries}).

\begin{theorem}\label{thm:algebraic}
Let $\B$ be finite-on-top-of-temporal. 
Then the equivalent items from Theorem~\ref{thm:main}
are also equivalent to
\begin{enumerate}[4.]
%\item $\CSP(\B)$ is definable in $\FPC$;
\item both of the following conditions hold:
\begin{enumerate}[(a)]
\item\label{cond:temporal} for each $i\in[k]$, some binary
$f_i\in\Pol(\B)$ restricts on $C_i^2$ to
$e_i\circ\mathfrak{t}_i$, where $\mathfrak{t}_i\in\mathcal T$ and
$e_i\colon C_i\hookrightarrow C_i$ is order-preserving;
\item\label{cond:colour} 
$\Pol(\B)$ contains $g_1,g_2$
such that $\widetilde g_1$ and $\widetilde g_2$ are 
3-4 weak near-unanimity operations.
\end{enumerate}
\end{enumerate}
\end{theorem}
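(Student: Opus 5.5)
The plan is to prove $2 \Rightarrow 4$ and $4 \Rightarrow 2$ separately, using known facts about temporal structures and finite idempotent algebras.

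\emph{Direction $4\Rightarrow 2$.} Suppose condition~\ref{cond:temporal} and condition~\ref{cond:colour} hold. We use the standard fact that pp-constructibility is witnessed by minor-preserving (height-1 clone) homomorphisms. It then suffices to show two things: $\Pol(\B)$ has no minor-preserving map to $\Pol(\mathfrak X)$, and none to $\Pol(\mathfrak A_M)$ for finite $M$.

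For the affine case, we use that the map $f\mapsto\widetilde f$ is a minor-preserving map to the idempotent clone on $[k]$. By the finite theory (the Barto--Kozik/Maróti--McKenzie equivalence quoted in the introduction), condition~\ref{cond:colour} alone is not enough, since the identities must hold in $\Pol(\B)$ itself, not only in $\widetilde{\Pol(\B)}$. Instead we argue directly. Suppose $\xi\colon\Pol(\B)\to\Pol(\mathfrak A_M)$ is minor-preserving. The images $\xi(g_1),\xi(g_2)$ are affine maps $\sum\lambda_i x_i$ with $\sum \lambda_i=1$. The colour identities hold in $\Pol(\B)$ only up to the temporal coordinate. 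We therefore compose with $f_i$ and use canonisation: by a Ramsey/canonical-function argument we may assume $g_1,g_2$ are canonical with respect to $\Aut(\D_k)$. Then the 3-4 WNU identities hold modulo $\overline{\Aut(\B)}$, which suffices, because pp-constructibility of finite structures is detected by identities holding up to automorphisms (the Barto--Pinsker loop lemma/ BPP theorem for $\omega$-categorical structures). In the affine clone, this is impossible: the identities would force $3\lambda = 1$ and $4\mu=1$ in every cyclic $p$-group factor, which fails for some $p$.

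For $\mathfrak X$, we restrict to a single colour $C_i$, which is pp-definable via $\sim$ with a constant-free formula $x\sim x$. The polymorphism $f_i$ acts there as a tractable temporal operation up to an embedding. The temporal classification~\cite{RydvalDescr,tcsps-journal} states that a temporal structure with a $\min,\mi,\tll$ or dual polymorphism does not pp-construct $\mathfrak X$. We must then glue the colours: if $\B$ pp-constructed $\mathfrak X$, then projecting via the colour clone onto a minimal absorbing/subalgebra colour would yield a single colour $C_i$ pp-constructing $\mathfrak X$. This step uses condition~\ref{cond:colour} to exclude affine behaviour in between.

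\emph{Direction $2\Rightarrow 4$.} This is the contrapositive. If some $C_i$ lacks a polymorphism of the form in~\ref{cond:temporal}, then the induced structure on $C_i$ is a temporal structure. This induced structure is pp-definable in $\B$ with $C_i$ defined by $\sim$ to a fixed colour; after adding constants via the model-complete core, the restriction is legitimate. By the temporal dichotomy it pp-constructs $\mathfrak X$. Next, if~\ref{cond:colour} fails, then the finite idempotent algebra on $[k]$ (the colour algebra $\widetilde{\Pol(\B)}$) has no 3-4 WNU. By Barto--Kozik, some subquotient is affine, and $([k];\text{relations preserved by }\widetilde{\Pol(\B)})$ pp-constructs some $\mathfrak A_M$. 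Moreover, this colour structure is pp-interpretable in $\B$ via the quotient by $\sim$. The colours themselves are pp-definable up to idempotency, which we handle by passing to the expansion by constants of $\B$'s core.

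The main obstacle I expect is the affine half of $4\Rightarrow 2$, namely lifting identities that hold only on colours to genuine identities modulo automorphisms in $\Pol(\B)$. I would first try canonisation of $g_1,g_2$ together with the $f_i$, using a single combined term as the abstract hints.
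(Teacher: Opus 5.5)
\textbf{The direction $4\Rightarrow 2$ has a genuine gap, and it is exactly where the paper's main work lies.} The paper never proves $4\Rightarrow 2$ algebraically. It proves $4\Rightarrow 3$ by showing that sampling followed by nested constraint-singleton arc consistency is correct and expressible in $\FPnum$ (Proposition~\ref{prop:monochromatic}, Lemma~\ref{lem:temporal}, Proposition~\ref{prop:fp-upper}). It then obtains $3\Rightarrow 1\Rightarrow 2$, because pp-constructions transfer $\FPC$-definability and neither $\CSP(\mathfrak X)$ nor the CSP of a finite affine structure is in $\FPC$. Your algebraic substitute does not go through.

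\emph{Affine half.} Condition (b) and canonisation give no control over how $g_1,g_2$ act inside a single colour. For example, $g_1$ may act as $(x,y,z)\mapsto x$ on $C_i^3$ while $\widetilde g_1$ is a weak near-unanimity operation, and canonising does not change this. So no 3-4 weak near-unanimity identities modulo $\overline{\Aut(\B)}$ follow. Even granting such pseudo-identities, a minion homomorphism to $\Pol(\mathfrak A_M)$ need not respect composition with unary operations on the left. The Barto--Pinsker-type results you invoke concern pseudo-Siggers operations, i.e.\ not pp-constructing \emph{all} finite structures. They do not give the transfer you need for affine targets.

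\emph{$\mathfrak X$ half.} The gluing step is only asserted. A pp-construction of $\mathfrak X$ may use coordinates from several colours at once. ``Projecting onto a minimal absorbing colour'' is not an operation on pp-constructions, and no result reduces the construction to a single $C_i$. Handling this interaction between colours is precisely what the paper does: it merges CSAC certificates with one $\{g_1,g_2\}$-term obtained from a cycle-consistent auxiliary finite CSP. Then, with the colours fixed, it uses the normal form for algebraic products and palette polymorphisms.

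\textbf{The direction $2\Rightarrow 4$ follows the paper's Proposition~\ref{prop:fp-inexpr}.} Your treatment of a failure of (b) matches it. For a failure of (a), two steps are missing.

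First, the right temporal structure on $C_i$ consists of all relations on $C_i$ preserved by $\Pol(\B)$, i.e.\ pp-definable in $\B$. The induced substructure is the wrong object: existential quantification over other colours can pp-define hard relations on $C_i$, such as betweenness, even when the induced substructure on $C_i$ is just $(C_i;<_i)$. The polymorphism clone of the right structure is the local closure of $\{f|_{C_i}\mid f\in\Pol(\B)\}$. To conclude from the failure of (a) that no $\mathfrak t\in\mathcal T$ lies in this closure, you need the compactness argument of Lemma~\ref{lem:local_closure}. You must then select finitely many relations to obtain a finite signature.

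Second, Theorem~\ref{thm:temporal_descriptive} only gives that this structure pp-constructs $\mathfrak X$ \emph{or} $(\{0,1\};\operatorname{1IN3})$. In the second case you obtain a finite affine structure, not $\mathfrak X$.

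Finally, no constants or model-complete-core argument are needed, since $C_i$ is pp-definable by $\exists y\,(x<_i y)$. Note that $x\sim x$ holds for every $x$, so it does not define $C_i$.
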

\subsection{Algorithm and Proof Idea}
For an instance $\A$ of $\CSP(\B)$ with $n$ variables, 
let $\bS_n$ be the substructure of $\B$ with domain $[k] \times [n]$.
Note that $\bS_n$ has the property that $\A$ homomorphically maps to $\B$ if and only if it maps to $\bS_n$. 
We then use 
an algorithm which takes both $\A$ and $\bS_n$ as input and runs in polynomial time in the size of $(\A,\bS_n)$. Examples of such algorithms are the arc-consistency procedure, the path consistency procedure, AIP, BLP, AIP-BLP and various singleton versions of these algorithms. 
This technique was introduced under the name \emph{sampling} by Bodirsky, Macpherson, and Thapper~\cite{BMTsampling} 
and further developed by Bodirsky and Greiner, Mottet, and Zhuk; see Section~\ref{sect:relwork}.

In our setting, the uniform algorithm is as follows. 
Add universal constraints on all variables and on
every pair of variables. Run constraint-singleton arc-consistency
(CSAC), with an additional outer test that fixes each available 
tuple in an auxiliary pair constraint
and reruns CSAC. 
The algorithm is polynomial; we show that it has a uniform fixed-point implementation in Section~\ref{sect:FP}.

To prove the correctness of our algorithm, we use the sets of relations on $S$ computed by the algorithm. These relations need not be closed under
the polymorphisms $g_1,g_2,f_1,\dots,f_k$ witnessing the condition of Theorem~\ref{thm:algebraic}. We close the 
relations under $ g_1, g_2$ (using this closure
only in the proof). Every generated tuple lies in a finite CSAC system.
 An auxiliary finite CSP,
 invariant under the 3-4 weak near-unanimity polymorphisms, produces a \emph{single term} combining these certificates.
Its lift to the infinite template produces a finite CSAC system with one
available colour per variable. For prescribed colours, normal forms for algebraic
products \cite{Products} and temporal palette maps \cite{ZhukSingleton}
give a solution.

Our contribution is to combine the temporal
certificates across a nontrivial finite quotient while retaining their
consistency properties. Another new idea is to use algebraic products and results from~\cite{Products} to handle \emph{prescribed}
argument sorts; here the sorts themselves
are part of the solution. The classification unifies the known finite
and temporal obstructions without requiring the same temporal operation
on every part.

We present only first-order expansions of
%$(B;<_1,\ldots,<_k,\sim)$. 
$\bD_k$.
The full version will prove the corresponding
classification for arbitrary first-order reducts of this structure.

\subsection{Related Work}
\label{sect:relwork}
Bodirsky, Macpherson, and Thapper introduced sampling for infinite-domain
CSPs and combined samples with arc-consistency \cite{BMTsampling}.
Bodirsky and Greiner developed sampling for combinations of theories
\cite{BGsampling}. Mottet~\cite{Mottet} used  sampling to obtain
new algorithms for the tractable temporal CSPs described in~\cite{tcsps-journal}; Zhuk identified
singleton consistency procedures for the min, mi, and ll cases and their
duals and introduced criteria for the correctness of these algorithms using palette maps~\cite{ZhukSingleton}.

\section{Preliminaries}\label{sec:preliminaries}
\subsection{Fixed-Point Logic}
$\FP$ is first-order logic with inflationary fixed points and arbitrary
finite nesting. We use least fixed points for monotone operators.
$\FPC$ additionally has an ordered number sort $N_n=\{0,\ldots,n\}$
and counting terms on inputs of size $n$ \cite{graedel}.
$\FPnum$ is its fragment without counting terms, but retaining $N_n$ and
mixed-sort fixed points. 

\subsection{Relational Structures, CSPs and Polymorphisms}
\label{sect:pol}
A \emph{relational signature} is a set of relation symbols $R$, each with \emph{arity} $k \in {\mathbb N}_{\geq 1}$. A \emph{relational $\tau$-structure} $\mathfrak{A}$ consists of a set $A$, called the \emph{domain}, together with a relation $R^\mathfrak{A} \subseteq A^k$ for every symbol $R \in \tau$ of arity $k$. A structure is called \emph{finite} if its domain is finite. A relational $\tau'$-structure $\A'$ with the same domain as $\A$ and $\tau\subseteq\tau'$ is an \emph{expansion of $\A$} if $R^\mathfrak{A} = R^{\mathfrak{A}'}$ for all $R \in\tau$, and $\A$ is called a \emph{reduct} of $\A'$. 

An \emph{atomic formula of $\tau$} is a formula which uses only relation symbols from $\tau$ and equality. A first-order \(\tau\)-formula is built from atomic formulas of $\tau$, using Boolean connectives, i.e., negation, conjunction, and disjunction, as well as the quantifiers \(\exists x\) and \(\forall x\). The quantifiers range over elements of the domain of the structure. If $\phi(x_1,\dots,x_n)$ is a $\tau$-formula and $\A$ is a $\tau$-structure, then $\phi^{\A}$ denotes the relation $\{ (a_1,\dots, a_n) \in A^n \mid \mathfrak{A} \models \phi(a_1, \dots, a_n)\}$, called the \emph{relation defined by $\phi$ over $\A$}.
A formula is called \emph{primitive positive} if it is built exclusively from atomic formulas, existential quantifiers, and conjunctions.
A relation $R$ is \emph{first-order definable} over $\mathfrak{A}$ if there exists a first-order $\tau$-formula $\phi(x_1, \dots, x_n)$ such that $R$ equals $\phi^\A$.
If the formula $\phi$ can be chosen to be primitive positive, then we say that 
$R$ is \emph{pp-definable} over $\mathfrak{A}$.
If every new relation of an expansion $\A'$ of $\A$ is definable by a first-order formula, then $\A'$ is an \emph{fo-expansion}.
A reduct of an fo-expansion is called an \emph{fo-reduct}.
A \emph{substructure} of a $\tau$-structure $\A$ is a $\tau$-structure $\A'$ with domain $A'\subseteq A$ and $R^{\A'}=R^\A\cap A'^k$ for each $k$-ary $R\in\tau$. We also say that $\A'$ is \emph{induced on $A'$ by $\A$}, and write $\A[A']$ for $\A'$.
For relational structures $\mathfrak{A}$ and $\mathfrak{B}$ of the same signature, a homomorphism from $\mathfrak{A}$ to $\mathfrak{B}$ is a function $h\colon A \rightarrow B$ that \emph{preserves} all relations of $\mathfrak{A}$, that is, for every $R\in\tau$, if $(a_1,\dots , a_k) \in R^\mathfrak{A}$, then $(h(a_1), \dots, h(a_k)) \in R^\mathfrak{B}$. If there are homomorphisms from $\A$ to $\B$ and vice versa then we call $\A$ and $\B$ \emph{homomorphically equivalent}.
A \emph{pp-power} of $\A$ is a relational structure with domain $A^d$, for some $d\geq 1$, whose $r$-ary relations, viewed as $dr$-ary relations on $A$, are pp-definable over $\A$. We say that $\A$ \emph{pp-constructs} $\B$ if $\B$ is homomorphically equivalent to a pp-power of $\A$. Pp-constructability is transitive \cite{Wonderland}. 
%All structures we consider are $\omega$-categorical.

The \emph{constraint satisfaction problem (CSP) of $\B$} is the computational problem of deciding whether a given finite structure $\mathfrak{A}$ with the same signature as $\B$ and domain $A$, called the \emph{instance}, maps homomorphically to $\B$, which is called the \emph{template}. If a homomorphism $s\colon\mathfrak{A}\rightarrow\mathfrak{B}$ exists, then $s$ is called a \emph{solution} of $\mathfrak{A}$ for $\CSP(\mathfrak{B})$. A  \emph{constraint of $\A$} is an  atomic formula $\phi(a_1,\dots,a_s)$
whose variables $a_1,\dots,a_s$ are elements of $\A$ such that $\A \models \phi(a_1,\dots,a_s)$. 
We denote the set of all constraints of $\A$ by $\Phi$.
%\bluesout{A \emph{constraint} is an atomic formula $\phi(x_1,\ldots,x_s)=R(x_{i_1},\ldots,x_{i_r})$, where $x_1,\ldots,x_s$ are its distinct variables, listed in order of first occurrence, and $i_1,\ldots,i_r\in[s]$.}\todo{Ph: Unnötig mit der anderen Def, oder?} 
For a constraint $\phi(x_1,\dots,x_s)$ of $\A$, we put $s_\phi=s$, and $V_\phi=\{x_1,\ldots,x_s\}$.
%\blue{The elements of $A$ occurring in constraints $\phi\in\Phi$ are also called \emph{variables}.}\todo{hmh weiss nicht}
An $n$-ary \emph{polymorphism} of a relational structure $\mathfrak{A}$ is a function $f\colon A^n \rightarrow A$ such that for every $R$ in the signature of $\A$, if $\bar{a}_1,\dots,\bar{a}_n\in R^\mathfrak{A}$, then $f(\bar{a}_1,\dots,\bar{a}_n)\in R^\mathfrak{A}$. We denote by $\Pol(\mathfrak{A})$ the set of all polymorphisms of $\mathfrak{A}$, and conversely, for a set of operations $\mathscr{S}$, we write $\Inv(\mathscr{S})$ for the set of relations which are preserved by all operations in $\mathscr{S}$. The set $\Pol(\mathfrak{A})$ is a \emph{clone}, that is, a set of operations which is closed under composition and which contains all projections (see \cite{szendrei_clones}, \cite{poeschel}). A non-empty relation $R$ is pp-definable in a countable $\omega$-categorical structure $\A$ if and only if it is preserved by all polymorphisms of $\A$ (\cite{bod_nes}).
For a set of operations $\mathscr{S}$ on $A$, we say that $\mathscr{S}$ is \emph{locally closed} if for all $n\geq 1$ and all $n$-ary operations $g$ we have that $g\in\mathscr{S}$ whenever for all finite subsets $F\subseteq A^n$, there is $f\in \mathscr{S}$ such that $f$ and $g$ agree on $F$. A clone is locally closed if and only if it is the polymorphism clone of a relational structure (see, e.g.,~\cite{szendrei_clones}).

\subsection{Algebras and Terms}
A \emph{functional signature} is a set of function symbols $f$, each with an arity $r\in {\mathbb N}$. An \emph{algebra} $\mathbb A$ of  signature $\sigma$ consists of a non-empty domain $A$ and an operation $f^\mathbb A\colon A^r\to A$ for each $r$-ary symbol $f\in\sigma$.
A \emph{subalgebra} of $\mathbb A$ is an algebra $\mathbb B$ of the same signature $\sigma$ whose domain $B\subseteq A$ is non-empty and closed under all operations of $\sigma$, with $f^\mathbb B=f^\mathbb A|_{B^r}$ for every $r$-ary $f\in\sigma$. For a non-empty $X\subseteq A$, the \emph{subalgebra generated by $X$}, denoted by $\operatorname{Sg}_\mathbb A(X)$, is the smallest subalgebra of $\mathbb A$ whose domain contains $X$.

An $r$-ary operation $f$ on $A$ is \emph{idempotent} if
$f(x,\ldots,x)=x$ holds for every $x\in A$. An algebra is \emph{idempotent} if all operations from its signature are idempotent. The \emph{power} $\mathbb A^I$ of $\mathbb A$ for a finite non-empty set $I$ is an algebra with domain $A^I$ and operations defined coordinatewise as $f^{\mathbb A^I}(a_1,\dots,a_r)(i) = f^\mathbb A(a_1(i),\dots,a_r(i))$ for all $i\in I$.

A \emph{$\sigma$-term} is a variable or an expression
$f(t_1,\dots,t_r)$, where $f\in\sigma$ is $r$-ary and
$t_1,\dots,t_r$ are $\sigma$-terms. A term $t(x_1,\dots,x_n)$ is \emph{interpreted} in a $\sigma$-structure $\A$ by recursively replacing each symbol $f$ by $f^\A$. We denote the resulting operation $A^n\to A$ by $t^\A$. 

\subsection{Temporal CSPs}
A \emph{temporal structure} is a first-order reduct of $(\Q;<)$. A \emph{temporal CSP} is a problem $\CSP(\B)$ where $\B$ is a temporal structure with finite signature.
We use the following binary operations on $\Q$. The operation $\min\colon\Q^2\to\Q$ is defined by taking the minimal value of $x$ and $y$ with regard to $<$. The operation $\mi\colon\Q^2\to\Q$ is defined by
\[\mi(x,y)\coloneqq
\begin{cases}
\alpha(x) & \text{if }x=y,\\
\beta(y)  & \text{if }x>y,\\
\gamma(x) & \text{if }x<y,
\end{cases}\]where $\alpha,\beta,\gamma\colon\Q\hookrightarrow\Q$ are increasing embeddings such that $\alpha(x)<\beta(x)<\gamma(x)<\alpha(y)$ whenever $x<y$.
An \emph{ll-operation} is an operation $\tll\colon\Q^2\to\Q$ such that, for all $x,y,x',y'\in\Q$, it holds that $\tll(x,y)<\tll(x',y')$ whenever \begin{itemize}
    \item $x\leq 0$ and $x<x'$, or
    \item $x\leq 0 $ and $ x=x' $ and $ y<y'$, or
    \item $x>0 $ and $ x'>0 $ and $ y<y'$, or
    \item $x>0 $ and $ y=y' $ and $ x<x'$.
\end{itemize}
We define \emph{pp} as an arbitrary binary operation on $\Q$ such that $\operatorname{pp}(x,y)\leq\operatorname{pp}(x',y')$ if and only if either $x\leq 0$ and $ x\leq x'$, or $x>0$, $x'>0 $, and $y\leq y'$.
A binary operation $f$ on $\Q$ is called a \emph{lex-operation} if $f(x,y) < f(x',y')$ if and only if either $x<x'$, or $x=x'$ and $y<y'$.
For a $k$-ary operation $f$ on $\Q$, we call the operation given by $-f(-x_1,\dots,-x_k)$ the \emph{dual of $f$}. To distinguish them from their duals, we call $\min$, $\mi$ and $\tll$ \emph{primal}.
%\blue{\todo{Ph: Evtl streichen?}An \emph{ll-Horn clause} is a formula of the form \[ \bigvee_{i=1}^{m}x_i\neq y_i\;\lor\;\bigvee_{j=1}^{\ell}z_j<z_0\;\lor\;\left(\bigwedge_{j=1}^{\ell}z_j=z_0\right),\]where $m,\ell\geq0$, and the last disjunct may be omitted.An \emph{ll-Horn formula} is a finite conjunction of ll-Horn clauses.}
\begin{proposition}[{\cite[Propositions 25, 27]{tcsps-journal}, \cite[Lemma 12.4.4]{Book}}]\label{prop:temporal-generation}
For a temporal structure $\B$:
\begin{enumerate}
\item If $\Pol(\B)$ contains $\min$ or $\mi$, then $\Pol(\B)$ contains $\ppop$.
\item If $\Pol(\B)$ contains $\tll$, then $\Pol(\B)$ contains a lex-operation.
\item If $\Pol(\B)$ contains both $\ppop$ and a lex-operation, then $\Pol(\B)$ contains $\tll$.
\end{enumerate}
\end{proposition}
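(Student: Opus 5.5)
The plan is to realise each of the three target operations as an explicit composition of the given operation with unary order-preserving self-embeddings of $(\Q;<)$. Such embeddings are available because $\Pol(\B)$ is locally closed (it is the polymorphism clone of a relational structure) and contains $\Aut(\Q;<)$, as $\B$ is a first-order reduct of $(\Q;<)$. Every order-preserving injection $a\colon\Q\to\Q$ agrees on each finite set with some automorphism, so $a\in\Pol(\B)$. After that, each item is a direct check of the defining order conditions.

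For item 1, fix order-preserving injections $a,b\colon\Q\to\Q$. Choose $a$ to map $(-\infty,0]$ into $(-\infty,0)$ and $(0,\infty)$ into $(1,\infty)$, and choose $b$ to map $\Q$ into $(0,1)$. If $\min\in\Pol(\B)$, put $\ppop(x,y)\coloneqq\min(a(x),b(y))$. For $x\le 0$ this equals $a(x)<0$, and for $x>0$ it equals $b(y)\in(0,1)$. So it is strictly increasing in $x$ on the region $x\le 0$ and in $y$ on the region $x>0$, and every value from the first region lies below every value from the second. This is exactly the ``iff'' characterisation of $\ppop$. If $\mi\in\Pol(\B)$, put $\ppop(x,y)\coloneqq\mi(a(x),b(y))$. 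Then $a(x)\neq b(y)$ always holds, and the value is $\gamma(a(x))$ for $x\le 0$ and $\beta(b(y))$ for $x>0$. Both expressions are increasing in the relevant argument. Moreover $\gamma(u)<\alpha(v)<\beta(v)$ whenever $u<v$, and $a(x)<b(y')$ when $x\le 0$, so values from the first region again lie below values from the second.

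For item 2, let $a$ be an order-preserving injection of $\Q$ into $(-\infty,0]$ and set $\mathrm{lex}(x,y)\coloneqq\tll(a(x),y)$. Since $a(x)\le 0$, the first two clauses in the definition of an ll-operation give $\mathrm{lex}(x,y)<\mathrm{lex}(x',y')$ whenever $x<x'$, or $x=x'$ and $y<y'$. In the remaining cases either the pairs are equal or the reverse strict inequality holds by symmetry, so the ``iff'' required of a lex-operation follows.

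For item 3, I would take
\[\tll(x,y)\coloneqq \mathrm{lex}\bigl(\ppop(x,y),\mathrm{lex}(x,y)\bigr)\]
and verify the four clauses.
\begin{itemize}
\item If $x,x'\le 0$, then $\ppop$ compares $x$ with $x'$, and on ties the inner $\mathrm{lex}$ compares $y$ with $y'$.
\item If $x\le 0<x'$, then $\ppop(x,y)<\ppop(x',y')$ strictly, since $\ppop(x',y')\le\ppop(x,y)$ would require $x'\le 0$ or $x>0$.
\item If $x,x'>0$, then $\ppop$ compares $y$ with $y'$, and on ties the inner $\mathrm{lex}$ compares $x$ with $x'$.
\end{itemize}
These cases cover all four clauses. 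The only point needing care in the whole proof is the justification, via local closure, that the auxiliary unary embeddings are polymorphisms. The rest is routine case checking.
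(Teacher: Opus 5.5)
Your proof is correct. Each target operation is obtained by composing the given one with self-embeddings of $(\Q;<)$, which lie in $\Pol(\B)$ because the clone is locally closed and contains $\Aut(\Q;<)$, and your case checks go through, including $\tll(x,y)=\mathrm{lex}(\ppop(x,y),\mathrm{lex}(x,y))$. The paper gives no proof of its own and only cites Bodirsky--K\'ara and the book; to my recollection those proofs follow this same composition-with-embeddings route, though I could not check them against the paper itself.
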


The importance of these operations in our context is that if $\B$ is a temporal structure with at least one of the polymorphisms $\min$, $\mi$, and $\tll$, then $\CSP(\B)$ can be expressed in FPC~\cite{RydvalDescr}.
\begin{theorem}[{\cite[Theorem 1.3]{RydvalDescr}}]\label{thm:temporal_descriptive}
    Let $\B$ be a temporal structure with finite relational signature. The following are equivalent: \begin{enumerate}
        \item $\CSP(\B)$ is expressible in $\FP$.
        \item $\CSP(\B)$ is expressible in $\FPC$.
        \item $\B$ does not pp-construct $(\{0,1\};\operatorname{1IN3})$ and $\mathfrak X$.
        \item $\B$ is preserved by $\min$, $\mi$, $\tll$, the dual of one of these operations, or by a constant operation.
    \end{enumerate}
\end{theorem}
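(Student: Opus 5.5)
The plan is to prove the cycle of implications $(1)\Rightarrow(2)\Rightarrow(3)\Rightarrow(4)\Rightarrow(1)$. The step $(1)\Rightarrow(2)$ is immediate, since $\FP\subseteq\FPC$.

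\textbf{$(2)\Rightarrow(3)$.} I would use that pp-constructions preserve $\FPC$-definability of CSPs \cite{RydvalDescr}. The template $(\{0,1\};\operatorname{1IN3})$ pp-constructs every finite structure, in particular $({\mathbb Z}_2;+,1)$. Linear equations over ${\mathbb Z}_2$ are not in $\FPC$ \cite{AtseriasBulatovDawar}, so $\CSP(\{0,1\};\operatorname{1IN3})$ is not in $\FPC$ either. For $\mathfrak X$ I would invoke the known lower bound that $\CSP(\mathfrak X)$ is not in $\FPC$ \cite{RydvalDescr}. If one has to reprove it, I would give an $\FPC$-reduction, for example via Cai--F\"urer--Immerman style gadgets, from ${\mathbb Z}_2$-equations to $\CSP(\mathfrak X)$. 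The idea is that $X$ encodes parity-like behaviour through which of the three coordinates is the unique maximum, while the instances stay locally symmetric.

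\textbf{$(3)\Rightarrow(4)$.} This is the classification step. By the Bodirsky--K\'ara classification \cite{tcsps-journal}, a temporal $\B$ with finite signature either pp-constructs $(\{0,1\};\operatorname{1IN3})$, or it is preserved by a constant, by $\min$, $\mi$, $\mathrm{mx}$, $\tll$, or by the dual of one of these. It remains to handle a structure preserved by $\mathrm{mx}$ (or its dual) but by none of the other operations, and to show that it pp-defines $X$ or $1\mathrm{IN}3$. Here I would use the syntactic description of $\mathrm{mx}$-closed relations as conjunctions of linear-equation-like clauses over the order. I would then argue that the absence of $\min$, $\mi$ and $\tll$ forces a relation whose projection, after identifying variables and using pp-definable $\neq$ and $<$, yields $X$. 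I expect this to be the main obstacle. It requires fine control of the polymorphism clone: for instance, one must show that any locally closed clone containing $\mathrm{mx}$ that avoids $\mathrm{pp}$ (Proposition~\ref{prop:temporal-generation}) and avoids the lex-operations must preserve nothing beyond $\Inv$ of the clone generated by $\mathrm{mx}$, and that this invariant relation set contains $X$.

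\textbf{$(4)\Rightarrow(1)$.} A constant polymorphism makes every instance trivially satisfiable, which is expressible in $\FP$. For $\min$ and $\mi$ (and their duals) I would express in $\FP$ the known iterative algorithm: repeatedly compute the set of variables that can be the minimum, remove them, and iterate. Each round is a Datalog-like least fixed point computing which constraints can be satisfied with given variables set to the minimal value, and the rounds are nested through an inflationary fixed point. For $\tll$ I would use Proposition~\ref{prop:temporal-generation}, which gives both $\ppop$ and a lex-operation. The algorithm is then the $\min$-type elimination, with the subproblems on removed variable sets solved through the lex-operation. Those subproblems reduce to a Horn-like propagation, again formulated as a least fixed point, so they avoid Gaussian elimination. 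The delicate point in this direction is keeping every step free of choice, so that it is definable without an order. I would handle this by always processing the full set of currently admissible minimal variables simultaneously, rather than picking one.
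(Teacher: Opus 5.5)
The paper does not prove this theorem; it cites it from \cite{RydvalDescr}. Taken as a standalone proof, your outline has genuine gaps in both non-trivial directions.

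\textbf{The direction $(4)\Rightarrow(1)$.} Your choice-free device is to eliminate the \emph{whole} set of admissible minimal variables at once. This is sound only for $\min$, because there the sets of variables that can be simultaneously minimal are closed under union. It fails for $\mi$. The relation $\neq$ is preserved by $\mi$: a short case check works, since $\alpha,\beta,\gamma$ are injective with pairwise disjoint images. In the instance $x\neq y$, each of $x$ and $y$ can be the minimum, but not both at once. For $\mi$-closed relations these sets are closed under non-empty intersection, not under union. So there is no canonical set to remove, and avoiding choice needs a genuinely different idea.

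Your $\tll$ case rests on a false premise. Proposition~\ref{prop:temporal-generation} gives a lex-operation from $\tll$, but not $\ppop$; item~3 goes the other way. In fact $\tll$ does not imply $\ppop$. The relation $R=\{(x,y,z_0,z_1)\in\Q^4\mid x\neq y\lor z_1\leq z_0\}$ is preserved by $\tll$, because $\tll$ is injective and monotone. But $\ppop$ applied to $(1,2,0,5),(3,3,7,1)\in R$ gives a tuple with $x=y$ and $z_0<z_1$. So a $\ppop$-based elimination scheme is not available for $\tll$-templates. Since $\tll$ also preserves $\neq$, the simultaneous-removal problem reappears there as well.

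\textbf{The direction $(3)\Rightarrow(4)$.} The case left open by the Bodirsky--K\'ara classification is a structure preserved by $\mathrm{mx}$ or its dual, but by none of $\min,\mi,\tll$, their duals, or a constant. Handling it is a central part of the cited result, and your sketch does not do so. As formulated it cannot work, for two reasons.

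First, $\mathrm{mx}$ already generates $\ppop$. Apply it to $a(x),b(y)$, where $a,b$ are self-embeddings of $(\Q;<)$ with disjoint images and $a$ maps non-positive numbers below, and positive numbers above, the image of $b$. Hence no clone containing $\mathrm{mx}$ avoids $\ppop$.

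Second, a pp-definition of $X$ requires every polymorphism of $\B$ to preserve $X$, i.e.\ $\Pol(\B)\subseteq\Pol(\Q;X)$. Knowing that $X$ lies in $\Inv$ of the clone generated by $\mathrm{mx}$ says nothing about the possibly larger clone $\Pol(\B)$.

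What is missing is an actual lemma: an $\mathrm{mx}$-closed temporal structure with none of the other listed polymorphisms pp-constructs $\mathfrak X$. The dual case follows from this, since the dual of $X$ is isomorphic to $\mathfrak X$ via $x\mapsto -x$. Proving the lemma needs a real analysis of $\mathrm{mx}$-closed relations, which you flag as the main obstacle but do not carry out.

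Your steps $(1)\Rightarrow(2)$ and $(2)\Rightarrow(3)$ are fine given the cited lower bounds. In the constant-polymorphism case, note only that an empty relation can still occur in instances; this is first-order detectable.
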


\section{Nested Singleton Consistency}
Throughout the following, let $\B$ 
be finite-on-top-of-temporal, i.e., a finite-signature first-order expansion of $\D_k$,
and additionally assume that $\B$ satisfies both conditions from item (4) of Theorem \ref{thm:algebraic}.
Let $\A$ be an instance of $\CSP(\B)$, with variables $A$ and constraints $\Phi$. Our goal is to 
describe an algorithm that decides whether there exists a homomorphism from $\A$ to $\B$; equivalently, whether $\Phi$ is satisfiable in $\B$.
%prove that $\CSP(\B)$ can be solved in $\FP^*$.

\subsection{Tuber Systems}
Let $\A$ be an instance of $\CSP(\B)$. A \emph{tuber system} $L=(L_\phi)_{\phi\in\Phi}$ of $\A$ is a collection of sets such that $L_\phi\subseteq \phi^{\B}$ for $\phi\in\Phi$. Assume that every element $x\in A$ appears in unary constraints of $\Phi$; if not, add the (trivial) unary constraint $\top^\A(x)$. Define, for each $x\in A$, the set $L_x\coloneqq \bigcap_{\substack{\phi\in\Phi\\V_\phi=\{x\}}} L_\phi$, the intersection of the $L_\phi$ for all unary constraints $\phi$ on $x$. %If $x$ does not appear in any unary constraint, then set $L_x\coloneqq B$.
The tuber system $L$ is \emph{finite} if every set $L_\phi$ with $\phi\in\Phi$ is finite, and \emph{non-empty} if every set is non-empty. 
For $x_i\in A$ and $\phi\in\Phi$ with $\A\models\phi(x_1,\dots,x_i,\dots,x_s)$, we define $\pr_{x_i}(L_\phi) \coloneqq \{b\in B\mid\exists u\in L_\phi\colon u_i=b\}$. Similarly, for a tuple $t$ satisfying a constraint $\phi(x_1,\dots,x_s)$, we denote by $\pr_{x_i}(t)$ the element $t_i$. A \emph{tuber subsystem} $K\leq L$ is a tuber system of $\A$ such that for all $\phi\in\Phi$, we have $K_\phi \subseteq L_\phi$.

A tuber system $L$ of $\A$ is \emph{arc-consistent} (AC) if $\pr_x(L_\phi) = L_x$ holds for every $\phi\in\Phi$ and every $x\in V_\phi$, and \emph{constraint-singleton arc-consistent} (CSAC) if it is AC and if for each $\phi \in \Phi$ and every $u\in L_\phi$, there is an AC tuber subsystem $K^{\phi,u}\leq L$ with $K^{\phi,u}_\phi=\{u\}$. The subsystem $K^{\phi,u}$ is then called an \emph{AC-certificate} for $(\phi,u)$. We call $L$ \emph{singleton arc-consistent} if we only require AC-certificates for all values of $L_x$ for each $x\in A$. 
An instance $\A$ of $\CSP(\B)$ is \emph{(constraint-singleton) arc-consistent} if it has a non-empty (constraint-singleton) arc-consistent tuber system $L$. Arc-consistency of a tuber system $L$ can be obtained by successively deleting tuples $u$ from $L_\phi$ if there is $x\in V_\phi$ such that for some constraint $\psi$ with $x\in V_\psi$, we have $\pr_x(u)\notin\pr_x(L_\psi)$, until AC is reached. CSAC is obtained by repeatedly \emph{restricting a constraint $\phi$ from $\Phi$ to a tuple $u$} from $L_\phi$, i.e., temporarily replacing $L_\phi$ by $\{u\}$, and obtaining AC on this instance; removing $u$ from $L_\phi$ if obtaining AC yields the empty set. If obtaining (CS)AC yields the empty set, then 
there is no homomorphism from $\A$ to $\B$ compatible with the initial tuber system $L$, and the instance is rejected. The algorithm obtaining (CS)AC stops iterating when it reaches a (CS)AC system. The proof for expressibility in FP is given in Section~\ref{sect:FP}. 

A \emph{constraint walk} in $\A$ is a sequence of variables from $A$ and constraints from $\Phi$ of the form $W=x^W_0 \phi_1 x^W_1 \dots \phi_l x^W_l$, where $x^W_{j-1},x^W_j\in V_{\phi_j}$. Note that distinct variables $x_i^W, x_j^W$ of the walk might represent the same element $x\in A$ of the instance. A constraint walk $W$ is \emph{realised} in a tuber system $L$ of $\A$ if there is a sequence of values $(a_0,\dots,a_l)\in L_{x^W_0}\times\dots\times L_{x^W_l}$ such that for all $i\in[l]$, there is a tuple $t\in L_{\phi_i}$ with $\pr_{x^W_{i-1}}(t) = a_{i-1}$ and $\pr_{x^W_{i}}(t) = a_i$. A map $r\colon\{0,\dots,l\}\to  B$ with $r(i)\in L_{x_i^W}$ sending each index $i$ of the walk to the element $a_i\in L_{x^W_i}$ witnessing the above property is called a \emph{realisation of $W$ in $L$}. An arc-consistent tuber system $L$ of $\A$ is \emph{cycle-consistent} if for every $x\in A$ and every $a\in L_{x}$, every constraint walk with $x^W_0=x^W_l=x$ has a realisation $r$ in $L$ with $r(0) = r(l)=a$. Note that SAC implies cycle consistency: Since the instance obtained by restricting $x$ to $a$ has an AC subsystem $K^a\leq L$ for every $a\in L_{x}$, we can find a realisation by starting with $r(0)=a$ and taking $r({i})$ as any value $b_i\in K^a_{x^W_{i}}$ such that there is $t\in K^a_{\phi_i}$ with $\pr_{x^W_{i-1}} (t) = r({i-1})$ and $\pr_{x^W_{i}} (t) = b_i$, the existence of which is guaranteed by AC. Since we will ultimately reach $x^W_l = x$, the only choice for $r(l)$ is $a$.
\begin{theorem}[Finite cycle consistency {\cite{Kozik}}]
\label{thm:external-cycle}
Let $\A$ be a finite relational structure with 3-4 weak near-unanimity polymorphisms. Then every cycle-consistent instance of $\CSP(\A)$ has a solution.
\end{theorem}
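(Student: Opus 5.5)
Theorem~\ref{thm:external-cycle} is a purely finite statement, and the paper uses it only as a black box from~\cite{Kozik}. I would therefore not re-derive it in full, but the proof I would follow goes through the absorption theory of Barto and Kozik. It proceeds by induction on $\sum_{x\in A}|L_x|$, where $L$ is a cycle-consistent tuber system of the instance. In each step we shrink some $L_x$ to a proper subset, keep cycle consistency, and stop when all $L_x$ are singletons; at that point arc consistency yields a solution.

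\textbf{Preliminary reductions.} First, replace $\A$ by its core and add all singleton unary relations. The template still has 3-4 weak near-unanimity polymorphisms (now idempotent), and cycle-consistent instances stay cycle consistent. Each $L_x$ and each $L_\phi$ is pp-definable from $\A$ together with the instance, so after a standard closure step we may assume all these sets are subuniverses of the idempotent algebra $\mathbb A=(A;\Pol(\A))$.

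Since $g_1,g_2$ are weak near-unanimity operations of coprime arities $3$ and $4$, no subalgebra of $\mathbb A$ has a nontrivial affine quotient. Hence $\mathbb A$ generates a congruence meet-semidistributive variety, i.e.\ it has bounded width.

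\textbf{Reduction step.} The key algebraic input is the absorption theorem for such algebras: every subalgebra $L_x$ with $|L_x|\ge 2$ has a proper subuniverse $L'_x\subsetneq L_x$ of one of two kinds. Either $L'_x$ absorbs $L_x$ (with respect to a term in the clone), or it is a proper block of a suitable ``centre''-type reduction obtained from a binary relation arising from $L$ itself.

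Given such an $L'_x$, define a new system $L'$ by restricting every $L_y$ to the set of values reachable from $L'_x$ along constraint walks in $L$, restricting each $L_\phi$ accordingly, and then re-establishing arc consistency. Absorption, together with the consistency of the original system, ensures that the new sets are again subuniverses, and that every $L'_y$ absorbs $L_y$ with respect to the same term. The total size strictly decreases because $L'_x\subsetneq L_x$.

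\textbf{Main obstacle.} The hard step is showing that $L'$ is non-empty and still cycle consistent, i.e.\ that every closed walk from $x$ realised in $L$ with endpoint $a\in L'_y$ can be realised within $L'$. For this I would use the ``Prague instance'' technique. First upgrade cycle consistency to its pointed, tree-walk version using connectivity of the relations $\pr_{x_{i-1},x_i}(L_{\phi_i})$. Then apply the absorption argument along walks: if a walk from $a\in L'_y$ leaves $L'$, compose it with sufficiently long back-and-forth walks. Absorption forces the composite relation, restricted to absorbing subuniverses, to contain a loop, which is a Loop-Lemma-type argument, so we can return to $L'$.

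Once this preservation lemma holds, the induction closes immediately and yields a solution. I expect essentially all of the difficulty to sit in this preservation lemma, and in verifying the case where no absorbing subuniverse exists but only a centre does.
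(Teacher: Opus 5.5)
The paper does not prove this statement; it imports it from Kozik, so your opening remark about using it as a black box is the paper's entire treatment. The absorption argument you sketch on top of that does not amount to a proof, and two of its steps fail as stated.

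First, the reduction step does not make progress. Suppose $L'_y$ is the set of values reachable from $L'_x$ along arbitrary constraint walks in $L$. These walks include closed walks returning to $x$, so the new set at $x$ is $\bigcup_p (L'_x+p)$, taken over closed walks $p$ at $x$. Cycle consistency gives $L'_x\subseteq L'_x+p$, but the union is typically much larger. Already for a single subdirect, linked binary constraint $R(x,y)$, going back and forth along $R$ from any non-empty $L'_x$ reaches all of $L_x$. So $\sum_x|L_x|$ need not decrease, and the absorbing (or ``centre'') set you started from is lost. You have to propagate the chosen subuniverse in a controlled way, for instance a minimal absorbing set pushed along fixed patterns. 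You then still have to prove that the resulting subsystem is non-empty and consistent.

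Second, the preservation lemma is where the whole theorem lives, and your sketch does not supply it. The mechanism you indicate also does not deliver the invariant you want to carry. Appending long back-and-forth walks until absorption forces a loop shows only that a value $a$ returns to itself inside the restricted system along some iterated walk built from the given ones. That is a statement of the form $a\in a+j(p+q)+p$ for some $j$, not $a\in a+p$ for the given closed walk $p$. So the induction hypothesis ``$L'$ is cycle-consistent'' is never re-established. The known proof is organised around a weaker, iterated condition of this $j(p+q)+p$ type, which cycle consistency implies with $j=0$, rather than around cycle consistency itself. There the absorption-free case is handled by the Absorption Theorem for linked subdirect relations. Your proposal leaves both the choice of a preservable invariant and the absorption-free (``centre'') case open.

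A minor point: the step from $g_1,g_2$ to congruence meet-semidistributivity should go through the Kozik--Krokhin--Valeriote--Willard characterisation by 3-4 weak near-unanimity terms. The heuristic about affine quotients of subalgebras of $\mathbb A$ alone does not establish it.
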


\subsection{The Algorithm}\label{sec:algorithm}
We may assume that the instance $\A$ is not trivial, i.e., that $n\coloneqq |A| \geq 1$ and that all constraints $\phi \in \Phi$ have arity at least $1$. Denote by $S$ the finite subset of $B$ containing $n$ elements from each $C_i$, i.e., $S = \{(i,1),\dots,(i,n)\mid 1 \leq i \leq k\}$, and let $\bS\coloneqq \B[S]$ be the induced finite substructure of $\B$ on $S$. In this context, we also call the sets $(L_\phi)_{\phi\in \Phi}$ of a tuber system \emph{constraint lists}. The initial constraint lists are $L_\phi = \phi^\B \cap S^{\ar(\phi)}$, where $\ar(\phi)$ denotes the arity of $\phi$. Additionally, we add for every ordered pair of variables $x,y\in A$ a constraint $\psi_{xy}$, whose constraint list we initialize as the full binary relation $S^2$ if $x\neq y$, and as $\{(a,a)\mid a\in S\}$ if $x=y$. We denote the set of the added pair constraints by $\Psi$, and we define $L_x\coloneqq \pr_x(L_{\psi_{xx}})$. In the context of this algorithm, a tuber system $L$ of $\A$ always also contains the sets $(L_\psi)_{\psi\in\Psi}$. %\bluesout{We write $\A'$ for the instance obtained by adding the \blue{relations corresponding to the} pair constraints,
%\blue{i.e., the binary relation defined by $\top$,}}

The algorithm is described in the following.
\begin{enumerate}
\item Obtain CSAC on the current tuber system $L$. Reject if an empty set is derived.
\item For each $\psi\in\Psi$ and $u\in L_\psi$, restrict $\psi$ to $u$ in a copy of $L$ and obtain CSAC on the copy.
\begin{enumerate}
    \item If the empty set is derived, remove the tuple $u$ from $L_\psi$ and restart at step~1.
    \item If CSAC accepts the modified instance, continue.
\end{enumerate}
\item If the algorithm terminates without deriving the empty set, accept.
\end{enumerate}
If the algorithm accepts, then the accepted tuber system $L'$ is finite and CSAC. Additionally, for every $\psi\in\Psi$ and every $u\in L'_\psi$, there is a non-empty CSAC subsystem $K^{\psi,u}\leq L'$ with $K^{\psi,u}_\psi=\{u\}$, called the \emph{CSAC-certificate} for $(\psi,u)$. The algorithm terminates in polynomial time in the size of the input.

\section{Determining the Colours}\label{sec:colour}
%Let $\B$ be finite-on-top-of-temporal satisfying both conditions of item (ii) from Theorem~\ref{thm:algebraic}, and $\A$ an instance of $\CSP(\B)$ with variables $A$ and constraints $\Phi$. 
Let $\A$, $\B$ and $\Phi$ be as in the previous section. 
Let $g_1, g_2$ be the polymorphisms witnessing condition~\conditionref{cond:colour} of Theorem~\ref{thm:algebraic}, and let $\E=([k];\widetilde g_1, \widetilde g_2)$ be the finite algebra on the colour set $[k]$ and with the two operations $\widetilde g_1, \widetilde g_2$. 
\begin{lemma}\label{lem:products}
    Let $f\in\Pol(\B)$ be $s$-ary, and let $L^1,\dots,L^s$ be finite CSAC tuber systems of $\A$. Then the tuber system $H\coloneqq f(L^1,\dots,L^s)$, defined for each constraint $\phi\in\Phi$ as \[H_\phi \coloneqq f(L^1_\phi,\dots,L^s_\phi) = \left\{ f(a^1,\dots,a^s) \mid a^j \in L_\phi^j, j\in[s]\right\},\]is finite and CSAC.
\end{lemma}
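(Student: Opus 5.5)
The plan is to check the three properties in turn: $H$ is a tuber system, $H$ is finite, and $H$ is CSAC. Finiteness is immediate, since each $H_\phi$ is the image of the finite set $L^1_\phi\times\dots\times L^s_\phi$ under $f$. For the tuber-system property, take $a^j\in L^j_\phi\subseteq\phi^\B$. Because $f\in\Pol(\B)$ and $\phi$ is atomic (or one of the auxiliary pair constraints, whose relations are $\top$ or equality, both preserved by any operation), we get $f(a^1,\dots,a^s)\in\phi^\B$. If $L$ also carries the lists $L_\psi$ for $\psi\in\Psi$, these are treated in exactly the same way.

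Next comes the key identity, which is where arc-consistency enters. For a constraint $\phi$ and $x\in V_\phi$, we have $\pr_x(H_\phi)=f(\pr_x(L^1_\phi),\dots,\pr_x(L^s_\phi))$, since $f$ acts coordinatewise: the inclusion $\subseteq$ is trivial, and $\supseteq$ follows by choosing for each $j$ a tuple in $L^j_\phi$ with the prescribed $x$-coordinate. Each $L^j$ is AC, so $\pr_x(L^j_\phi)=L^j_x$, and hence $\pr_x(H_\phi)=f(L^1_x,\dots,L^s_x)$, independently of $\phi$.

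It remains to identify this common set with $H_x$, and I expect this to be the main (mild) obstacle, because $H_x$ is an intersection of images and not an image of intersections. I would handle it as follows. For every unary constraint $\phi$ on $x$, AC of $L^j$ gives $L^j_\phi=\pr_x(L^j_\phi)=L^j_x$, so all unary lists on $x$ coincide within each $L^j$. Consequently every $H_\phi$ with $V_\phi=\{x\}$ equals $f(L^1_x,\dots,L^s_x)$, so $H_x=f(L^1_x,\dots,L^s_x)$ as well, and $H$ is AC. (In the algorithm's setting the same argument applies with $L_x=\pr_x(L_{\psi_{xx}})$.) Non-emptiness of the $L^j$, where relevant, transfers directly to $H$.

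For CSAC, fix $\phi$ and $u=f(a^1,\dots,a^s)\in H_\phi$ with $a^j\in L^j_\phi$, and let $K^j\coloneqq K^{\phi,a^j}\le L^j$ be the AC-certificates, so $K^j_\phi=\{a^j\}$. Set $K\coloneqq f(K^1,\dots,K^s)$. Then $K\le H$ holds componentwise, $K_\phi=\{f(a^1,\dots,a^s)\}=\{u\}$, and the AC argument above applies verbatim to $K$, since it used only AC of the arguments and not CSAC. So $K$ is an AC-certificate for $(\phi,u)$ in $H$, and therefore $H$ is CSAC.
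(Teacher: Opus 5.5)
Your proof is correct and follows the paper's argument: the polymorphism property gives the tuber-system property, and projections commute with the coordinatewise application of $f$, so AC of the $L^j$ yields AC of $H$. The image $f(K^1,\dots,K^s)$ of the AC-certificates then serves as the certificate for $(\phi,f(a^1,\dots,a^s))$. The paper additionally routes through the product system $P$ before applying $f$, which is only cosmetic. Your explicit check that $H_x$ (an intersection of images) equals $f(L^1_x,\dots,L^s_x)$, because all unary lists of each $L^j$ on $x$ coincide, makes precise a step the paper states as $f(P_x)=H_x$ without comment.
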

\begin{proof}
    Let $L^1,\dots,L^s$ be finite CSAC tuber systems of $\A$. We define the system $P$ by \[P_\phi \coloneqq \left\{ \left((a^1_i,\dots,a^s_i)\right)_{i\in V_\phi}\mid (a_i^j)_{i\in V_\phi} \in L_\phi^j, j\in[s]\right\}\]for every $\phi\in\Phi$. Applying $f$ coordinatewise to every tuple of $P_\phi$ yields $H_\phi$. Since, for arbitrary $\phi\in\Phi$, each $a^j\in L^j_\phi$ satisfies $\phi$, and $f$ is a polymorphism of $\B$ by assumption, we get that also each $f(a^1,\dots,a^s)$ satisfies $\phi$. By construction, $P$ and $H$ are finite tuber systems.
    For every $x\in V_\phi$, we have \[ \pr_x(P_\phi) = \prod_{j=1}^s \pr_x(L^j_\phi) = \prod_{j=1}^s L_x^j = P_x,\]where $\prod$ denotes the Cartesian product of sets, and the middle equality holds since each $L^j$ is CSAC and hence AC. Similarly, \[\pr_x(H_\phi) = \left\{f(a^1_x,\dots,a_x^s) \mid a^j \in L_\phi^j, j\in[s] \right\} = f(\pr_x(P_\phi)) = f(P_x) = H_x,\]so it follows that both $P$ and $H$ are AC. 
    To prove CSAC of $H$, fix an arbitrary constraint $\phi\in\Phi$ and a tuple $a\in P_\phi$. We write $a_x$ for the tuple $(a_x^1,\dots,a_x^s)$. Note that for every $j\in[s]$, we have $a^j = (a_x^j)_{x\in V_\phi} \in L_\phi^j$. Choose for each $j$ an AC-certificate $K^j\leq L^j$ with $K_\phi^j = \{a^j\}$. As before, we define the tuber system $Q\coloneqq \prod_{j=1}^s K^j$, and $N\coloneqq f(Q)$. By the same argument as for $P$ and $H$, we get that $N$ and $Q$ are AC. Since $Q\leq P$ and $N \leq H$, and $Q_\phi = \{a\}$ and $N_\phi = \{f(a)\}$ by construction, it follows that each tuple of every constraint list of $P$ and $H$ has an AC-certificate. Thus, $P$ and $H$ are CSAC.
\end{proof}
Assume the algorithm from Section~\ref{sec:algorithm} accepts $\A$, and let $S$ be the tuber system at the end of the algorithm. For every constraint $\theta\in\Phi\cup\Psi$, let $T_\theta$ denote the smallest relation containing $S_\theta$ and closed under both $g_1$ and $g_2$. Since $g_1,g_2\in\Pol(\B)$, all tuples in $T_\theta$ satisfy $\theta$. The sets $T_x$ for all $x\in A$ are defined similarly. Furthermore, we set $D_x\coloneqq \pi(T_x)$ for all $x\in A$ and $Q_{\psi_{xy}}\coloneqq \{(\pi(a),\pi(b))\mid(a,b)\in T_{\psi_{xy}}\}$ for each ${\psi_{xy}}\in\Psi$.
\begin{lemma}\label{lem:certificates}
    \begin{enumerate}[(i)]
        \item\label{cert:colours} Each $D_x$, $x\in A$, is a non-empty subalgebra of $\E$.
        \item The tuber system $W\coloneqq \{D_x\mid x\in A\}\cup\{Q_{\psi_{xy}}\mid{\psi_{xy}}\in\Psi\}$ is SAC.
        \item\label{cert:prescribed} For every $\psi_{xy}\in\Psi$ and every $(a,b)\in Q_{\psi_{xy}}$, there is a finite CSAC tuber system $L\leq T$ with $\pi(L_x)=\{a\}$ and $\pi(L_y)=\{b\}$.
    \end{enumerate}
\end{lemma}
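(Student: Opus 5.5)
Everything will rest on one observation: each tuple of $T_\theta$ is obtained from tuples of $S_\theta$ by a term $t$ over $\{g_1,g_2\}$, so we can carry certificates along that term.

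\textbf{Key claim.} Every tuple $u\in T_\theta$, for $\theta\in\Phi\cup\Psi$, lies in $t(K^1,\dots,K^m)_\theta$, where $t$ is such a term and $K^1,\dots,K^m\le S$ are finite CSAC systems with $K^j_\theta=\{u^j\}$ and $u=t(u^1,\dots,u^m)$. Since $t$ is a composition of polymorphisms, $t\in\Pol(\B)$. Lemma~\ref{lem:products} then shows that $t(K^1,\dots,K^m)$ is finite and CSAC. Every list of this system lies in $T$, because $T$ is closed under $g_1,g_2$ and each $K^j\le S$. So we obtain a finite CSAC system $L\le T$ with $u\in L_\theta$.

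We take the certificates as follows. For $\theta=\psi\in\Psi$, use the CSAC-certificates $K^{\psi,u^j}$ produced by the algorithm. For $\theta\in\Phi$, use $S$ itself restricted via an AC-certificate. The cleanest route is to first restrict $\psi_{xy}$ to a tuple projecting to $u^j$ and use that CSAC-certificate. With this choice, $t(K^{\psi,u^1},\dots,K^{\psi,u^m})_\psi=\{u\}$ exactly, which is what item~(iii) requires.

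\textbf{Item (i).} $T_x$ is non-empty because the accepted $S$ is non-empty. $T_x$ is closed under $g_1,g_2$, and $\pi$ commutes with the operations: $\pi(g_i(a_1,\dots))=\widetilde g_i(\pi(a_1),\dots)$. Hence $D_x=\pi(T_x)$ is closed under $\widetilde g_1,\widetilde g_2$, i.e. it is a subalgebra of $\E$.

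\textbf{Item (iii).} Given $(a,b)\in Q_{\psi_{xy}}$, pick $u\in T_{\psi_{xy}}$ with $\pi(u)=(a,b)$ and write $u=t(u^1,\dots,u^m)$ with $u^j\in S_{\psi_{xy}}$. The system $L:=t(K^{\psi_{xy},u^1},\dots,K^{\psi_{xy},u^m})\le T$ is finite and CSAC, and $L_{\psi_{xy}}=\{u\}$. Because $L$ is AC, $L_x=\pr_x(L_{\psi_{xy}})=\{u_x\}$, so $\pi(L_x)=\{a\}$, and likewise $\pi(L_y)=\{b\}$.

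\textbf{Item (ii).} We regard $W$ as a tuber system for the colour instance whose constraints are the $\psi_{xy}$.

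For arc-consistency, we need $\pr_x(Q_{\psi_{xy}})=D_x$. The inclusion $\subseteq$ holds because $T$ is AC: projecting the generating set $S_{\psi_{xy}}$ onto $x$ gives $S_x$, and both closure and projection commute with the coordinatewise operations, so $\pr_x(T_{\psi_{xy}})=T_x$. Applying $\pi$ gives the equality.

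For SAC, fix $a\in D_x$ and some $(a,a)\in Q_{\psi_{xx}}$. Item~(iii) gives a CSAC system $L\le T$ with $\pi(L_x)=\{a\}$. Projecting along $\pi$, the system $(\pi(L_z))_z\cup(\pi(L_{\psi}))_\psi$ is a subsystem of $W$, and it is AC because $\pi$ commutes with projections. So it is the required AC-certificate for $(x,a)$.

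\textbf{Main obstacle.} The delicate step is the key claim. We must make sure that combining certificates by a term stays inside $T$, and that for $\theta\in\Psi$ the combined system has exactly the singleton $\{u\}$ at $\psi$. This is why the pair-constraint certificates from the outer loop are used rather than arbitrary AC-certificates.
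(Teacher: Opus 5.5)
Your proposal is correct and is essentially the paper's proof: you write elements of $T$ as $\{g_1,g_2\}$-term images of tuples from $S$, push the algorithm's CSAC-certificates through the term with Lemma~\ref{lem:products}, and project along $\pi$. The only differences are in (ii), where you get arc-consistency of $W$ from the fact that generation commutes with the projection homomorphism and take the SAC certificate from item (iii) at $\psi_{xx}$ rather than from CSAC of the paper's systems $L^t$. One caveat: the $\theta\in\Phi$ case of your key claim is not justified as stated, because CSAC of $S$ only supplies AC-certificates for $\Phi$-tuples and pinning a pair constraint does not in general make $K^j_\theta$ a singleton; however, nothing in (i)--(iii) uses it, and $t(S,\dots,S)$ would give a CSAC system in $T$ containing $u$ anyway.
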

\begin{proof}
To prove (i), note that the projection $\pi\colon(B;g_1,g_2)\to([k];\widetilde g_1,\widetilde g_2)$ is a homomorphism, and observe that $D_x = \pi(T_x)$, where $T_x$ is the closure of the set $S_x$ under $g_1,g_2$ in $B$. Since $\widetilde g_1, \widetilde g_2$ are the only operations of $\E$ and $S_x$ is non-empty, it follows that $D_x$ is a non-empty subalgebra of $\E$.

For (ii), we first show that the tuber system $T$ is non-empty and AC. We define for every $\{g_1,g_2\}$-term $t$ with arity $q_t$ the tuber system $L^t$ via \[ L^t_\theta \coloneqq \{t^\B (u_1,\dots,u_{q_t}) \mid u_1,\dots,u_{q_t}\in S_\theta\},\] for each $\theta\in\Phi\cup\Psi$, and where $t^\B$ denotes the operation defined by $t$ in $\B$. Since $S$ is non-empty, also $L^t$ is non-empty for every $\{g_1,g_2\}$-term $t$. By Lemma~\ref{lem:products}, each $L^t$ is also finite and CSAC. For every $\theta\in\Phi\cup\Psi$, we have that $T_\theta = \cup_t L_\theta^t$. For $\theta\in\Phi\cup\Psi$, we get $\pr_x (T_\theta) = \pr_x(\cup_t L_\theta^t) = \cup_t L_x^t = T_x$, so $T$ is AC, and non-empty, since $S_\theta \subseteq T_\theta$ for each $\theta$. In particular, for every $\psi_{xy}\in\Psi$, we have $\pr_1(Q_{\psi_{xy}}) = \pi(\pr_1(T_{\psi_{xy}})) = \pi(T_x) = D_x$, and $\pr_2(Q_{\psi_{xy}})=D_y$. Thus, the tuber system $W$ is an AC tuber system. For SAC, fix $x\in A$ and a value $a\in D_x$. Choose $b\in T_x$ with $\pi(b)=a$, and choose a $\{g_1,g_2\}$-term $t$ such that $b\in L^t_x$. Since $L^t$ is CSAC, there is an AC tuber subsystem $K^t\leq L^t$ with $K_x^t=\{b\}$. For each $y\in A$, set $U_y = \pi(K^t_y)$ and $U_{\psi_{yz}} = \{(\pi(c),\pi(d))\mid (c,d) \in K^t_{\psi_{yz}}\}$ for $\psi_{yz}\in\Psi$. By construction, $U$ is non-empty and a subsystem of $W$, and since $\pr_1(U_{\psi_{yz}}) = \pi(\pr_1(K^t_{\psi_{yz}})) = \pi(K^t_{y}) = U_y$ and $\pr_2(U_{\psi_{yz}}) = U_z$, we get that $U$ is an AC-certificate, thus proving SAC of $W$.

To prove (iii), consider $(a,b)\in Q_{\psi_{xy}}$ and choose a tuple $u\in T_{\psi_{xy}}$ with $\pi(u_x)=a$ and $\pi(u_y)=b$. By construction, we obtain $u$ as $u=t^\B(u_1,\dots,u_q)$ for a $\{g_1,g_2\}$-term $t$ of arity $q$, and with $u_1,\dots,u_q\in S_{\psi_{xy}}$. Since $S$ is a tuber system the algorithm accepted, there is, for every $j\in[q]$, a finite non-empty CSAC tuber system $L^j\leq S$ with $L^j_{\psi_{xy}}=\{u_j\}$. By Lemma~\ref{lem:products}, the tuber system $L\coloneqq t^\B(L^1,\dots,L^q)$ is a finite CSAC subsystem of $T$, which by construction satisfies $L_{\psi_{xy}} = \{u\}$, and thus also $\pi(L_x)=\{a\}$ and $\pi(L_y)=\{b\}$. 
\end{proof}
%%%%%%%%%%%%%%%%%%%%%%%%%%%%%%%%%%%%%%%%%%%%%%%%%%%%%%%%%%%%%%%%%%%%%%%%%%%%%%%%%%%%%%%%%%%%%%%%%%%%%%%%%%%%%%%%%%%%%%%%%%%%%%%%
\begin{proposition}\label{prop:monochromatic}
Let $\A$ be an instance of $\CSP(\B)$ which the algorithm from Section~\ref{sec:algorithm} accepts, and let $S$ be the non-empty tuber system on which the algorithm terminates. Then there is a map $c\colon A\to[k]$ as well as a finite non-empty CSAC tuber system $U$ of $\A$ such that for every $\phi\in\Phi$, \begin{equation}\label{eq:monochromatic}
    U_\phi \subseteq \phi^\B \cap \prod_{j=1}^{s_\phi} C_{c(x_j)}
\end{equation}holds. A tuber system satisfying property~(\ref{eq:monochromatic}) is called \emph{$c$-colour-determined}.
\end{proposition}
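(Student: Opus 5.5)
The plan is to first pick the colour map $c$ by solving an auxiliary finite CSP on the colour set, and then to combine the certificates from Lemma~\ref{lem:certificates}\ref{cert:prescribed} into a single colour-determined system using one term built from $g_1,g_2$.

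\textbf{Step 1: the auxiliary finite CSP.} Consider the finite structure $\mathfrak F$ on $[k]$ whose relations are all unary relations $D_x$ and all binary relations $Q_{\psi_{xy}}$. The instance $\mathfrak I$ of $\CSP(\mathfrak F)$ has variables $A$, unary constraints $D_x(x)$ and binary constraints $Q_{\psi_{xy}}(x,y)$. These relations are images under $\pi$ of $\{g_1,g_2\}$-closed relations, and $\pi$ is a homomorphism from $(B;g_1,g_2)$ to $\E$. Hence every relation of $\mathfrak F$ is a subuniverse of $\E$ or $\E^2$, so $\widetilde g_1,\widetilde g_2$ are 3-4 weak near-unanimity polymorphisms of $\mathfrak F$.

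By Lemma~\ref{lem:certificates}(ii), the system $W$ is SAC, and SAC implies cycle consistency. Theorem~\ref{thm:external-cycle} then yields a solution $c\colon A\to[k]$ with $c(x)\in D_x$ and $(c(x),c(y))\in Q_{\psi_{xy}}$ for all $x,y$.

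\textbf{Step 2: a certificate for each pair.} For each ordered pair $(x,y)$, Lemma~\ref{lem:certificates}\ref{cert:prescribed} gives a finite CSAC system $L^{xy}\le T$ with $\pi(L^{xy}_x)=\{c(x)\}$ and $\pi(L^{xy}_y)=\{c(y)\}$. The remaining task is to merge these $n^2$ systems into one system in which \emph{every} variable is confined to its chosen colour.

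\textbf{Step 3: one term that merges them.} This is the main obstacle. I would look for a single $\{g_1,g_2\}$-term $t$ of arity $m=n^2$ such that, for every $x$,
\[
\widetilde t\bigl(\pi(L^{p_1}_x),\dots,\pi(L^{p_m}_x)\bigr)=\{c(x)\}
\]
whenever the arguments are colours drawn from the respective sets. For each $x$, at least the arguments indexed by pairs involving $x$ are already forced to equal $c(x)$.

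Setting $U:=t^\B(L^{p_1},\dots,L^{p_m})$, Lemma~\ref{lem:products} makes $U$ finite, non-empty and CSAC, and $U\le T$ because $T$ is closed under $g_1,g_2$. Every tuple in $U_\phi$ satisfies $\phi$, and its $x$-coordinate has colour $c(x)$; this is property~(\ref{eq:monochromatic}).

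To produce $t$, I would use the finite-algebra fact that idempotent algebras with 3-4 weak NU terms have bounded width. I would try to show that the subalgebra of $\E^{m}$ consisting of the colour profiles of the combined systems absorbs enough to force the value $c(x)$.

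An alternative for Step 3 is to encode the requirement as a second finite CSP over $\E$: its variables are the terms, or equivalently the elements of the free algebra on $m$ generators restricted to the relevant profiles. Its constraints would be of the form ``the value on profile $(\pi(L^{p}_x))_p$ is $c(x)$''. This instance would again be solved by cycle consistency via Theorem~\ref{thm:external-cycle}. Cycle consistency should follow from $c$ being a solution to $\mathfrak I$ together with the pair certificates. The delicate part is verifying this consistency, since the sets $\pi(L^{p}_x)$ for pairs not involving $x$ may contain several colours.

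If a direct term is hard to find, I would proceed iteratively. Fix variables one at a time: combine the current system $U$ with $L^{xy}$ via a binary-style term built from $g_1$, using the weak NU identities so that variables already fixed stay fixed. Choose the term so that the new variable becomes fixed as well, and repeat until every variable is monochromatic.
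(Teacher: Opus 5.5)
There is a genuine gap, and it starts in Step~1, not only in the Step~3 you flag as the main obstacle. Your instance $\mathfrak I$ sees only the sets $D_x$ and the binary colour projections $Q_{\psi_{xy}}$. A solution $c$ of $\mathfrak I$ can therefore violate constraints of arity at least $3$, and then no merging term can exist. Concretely, let $k=2$ and expand $\D_2$ by the first-order definable relation $N=\{(a,b,d)\in B^3\mid \{a,b,d\}\not\subseteq C_2\}$. Conditions (a) and (b) are witnessed by the following polymorphisms: $g_1((i_1,a_1),(i_2,a_2),(i_3,a_3))=(\min(i_1,i_2,i_3),a_1+a_2+a_3)$, its $4$-ary analogue $g_2$, and $f((i,a),(j,b))=(\min(i,j),\min(a,b))$. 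For the instance $N(x,y,z)$ the algorithm accepts. Solutions coloured $(2,2,1)$, $(2,1,2)$ and $(1,2,2)$ survive in the sample, so $(2,2)$ lies in every $Q_{\psi_{xy}}$ and $2\in D_x$ for all $x$. Hence $c\equiv 2$ solves $\mathfrak I$, and Theorem~\ref{thm:external-cycle} gives you no control over which solution you get. But $N\cap C_2^3=\emptyset$, so for this $c$ no non-empty $c$-colour-determined system exists, and in particular no term $t$ as in Step~3. Indeed, each $L^{xy}$ necessarily puts the third variable into colour~$1$. Your alternative (a second finite CSP with the targets $c(x)$ fixed) and your iterative merging both keep this $c$ fixed, so they fail on this instance as well.

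The missing idea is to obtain $c$ and the merging term \emph{simultaneously}. The paper does this as follows.

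Take certificates $L^{xy,ab}\leq T$ for \emph{all} $(a,b)\in Q_{\psi_{xy}}$, not only for $(c(x),c(y))$, and enumerate them as $L^1,\dots,L^s$. Index the colour profiles by $\Lambda=\{(x,u)\mid x\in A,\ u\in\prod_l L^l_x\}$ with $r^l_{(x,u)}=\pi(u_l)$, and let $R=\Sg_{\E^\Lambda}(r^1,\dots,r^s)$.

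Now consider the finite instance with variables $\Lambda$, unary constraints $D_x$, constraints $\Delta_{D_x}$ tying together all $(x,u)$ with the same $x$, and a single $R$-constraint. It is arc- and cycle-consistent. This uses SAC of $W$ together with $Q_{\psi_{xy}}\subseteq\pr_{(x,u)(y,v)}(R)$, which is exactly why certificates for all pairs $(a,b)$ are needed.

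Theorem~\ref{thm:external-cycle} then yields $p\in R$ that is constant on each fibre. Membership $p\in R$ means $p=t^{\E}(r^1,\dots,r^s)$ for some $\{g_1,g_2\}$-term $t$. Setting $c(x):=p(x,u)$ and $U:=t^{\B}(L^1,\dots,L^s)$ then does the job by Lemma~\ref{lem:products}. Because $c$ is realised by a term applied to genuine CSAC systems, constraints of every arity are respected automatically, which a separately chosen $c$ cannot guarantee.
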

\begin{proof}
Let $T$, $D$, $Q$ and $W$ be defined as for Lemma~\ref{lem:certificates}. For each $\psi_{xy}\in\Psi$ and each $(a,b)\in Q_{\psi_{xy}}$, we choose a CSAC tuber system $L^{xy,ab}\leq T$ such that $\pi(L^{xy,ab}_x)=\{a\}$ and $\pi(L^{xy,ab}_y)=\{b\}$. By Lemma~\ref{lem:certificates} (iii), such tuber systems exist for all constraints and values. We enumerate those systems as $L^1,\dots,L^s\leq T$.
Define $P_x \coloneqq \prod_{l=1}^s L_x^l$, and define $\Lambda\coloneqq \{(x,u)\mid x\in A, u\in P_x\}$ and $r_{(x,u)}^l=\pi(u_l)$.
% Falls ohne Algebren: R als kleinste pp defbare relation die die r tupel enthält
We define a relation $R$ on $\E^\Lambda$, a power of the finite colour algebra $\E$, by $R\coloneqq \Sg_{\E^\Lambda}(r^1,\dots,r^s)$, i.e., $R$ is the subalgebra of $\E^\Lambda$ generated by the $|\Lambda|$-tuples $r^1,\dots,r^s$. By construction, we have that $R\subseteq \prod_{(x,u)\in\Lambda}D_x$. 

We claim that for every $(x,u)\in\Lambda$, also $\pr_{(x,u)}(R)=D_x$ holds. 
Fix an arbitrary $x\in A$ and $a\in D_x$. Since \(Q_{\psi_{xx}}=\{(b,b)\mid b\in D_x\}\), we have \((a,a)\in Q_{\psi_{xx}}\). By construction of the algorithm, there exists a CSAC tuber system $L^l$ with \(\pi(L_x^l)=\{a\}\).

Take an arbitrary $u\in P_x = \prod_{j=1}^s L_x^j$. It follows that $u_l\in L_x^l$, and thus $r_{(x,u)}^l = \pi(u_l) = a$. Importantly, since $l$ does not depend on $u$, we know that also $r_{(x,u')}^l = a$ holds for all $u'\neq u$. Since $r^l\in R$ by definition, we know that $a \in \pr_{(x,u)}(R)$. Since the choice of $a\in D_x$ was arbitrary, we get $D_x \subseteq \pr_{(x,u)}(R)$, proving the claim.

Additionally, we consider the projection for pairs of distinct coordinates. For $i=(x,u)$ and $j=(y,v)$, define $\pr_{ij}(R) = \{(r_i,r_j)\mid r\in R\}$. We claim that for $x\neq y$, we have $Q_{\psi_{xy}} \subseteq \pr_{ij}(R)$, while for $x=y$ and $u\neq v$, we get $\{(a,a)\mid a\in D_x\}\subseteq \pr_{ij}(R)$. Let $x\neq y$, and consider arbitrary $(a,b)\in Q_{\psi_{xy}}$. Select a CSAC tuber system $L^l$ such that $\pi(L_x^l)=\{a\}$ and $\pi(L_y^l)=\{b\}$. Since $u\in P_x$ and $v\in P_y$, we have $u_l\in L_x^l$ and $v_l\in L_y^l$, and so the same tuple $r^l\in R$ satisfies both $r_i^l = \pi(u_l)=a$ and $r_j^l=\pi(v_l)=b$, i.e., $(a,b)\in\pr_{ij}(R)$. Since the choice of $(a,b)\in Q_{\psi_{xy}}$ was arbitrary, we have $Q_{\psi_{xy}} \subseteq \pr_{ij}(R)$. For $i=(x,u)$ and $j=(x,v)$ with $u\neq v$, choose an arbitrary $a\in D_x$, and select a CSAC tuber system $L^l$ such that $\pi(L^l_x)=\{a\}$. Since now both $u_l,v_l\in L_x^l$, we get $r_i^l = \pi(u_l) = a = \pi (v_l) = r_j^l$, proving that $(a,a)\in\pr_{ij}(R)$ for each $a\in D_x$.

We now construct an instance $\mathfrak H$ of $\CSP([k];R, (D_x)_{x\in A}, (\Delta_{D_x})_{x\in A})$, where $\Delta_{D_x}\coloneqq \{(a,a)\mid a\in D_x\}$. Note that all relations of this template are invariant under $\widetilde g_1, \widetilde g_2$, i.e., the template has 3-4 weak near-unanimity polymorphisms. The instance $\mathfrak H$ is defined as follows:\begin{itemize}
    \item The variable set of $\mathfrak H$ is $\Lambda$. Fix an enumeration $v_1,\dots,v_\lambda\in\Lambda$, where $\lambda = |\Lambda|$.
    \item For each variable  $(x,u)\in\Lambda$, add a constraint $D_x\left((x,u)\right)$.
    \item For each pair of variables $(x,u),(x,w)\in\Lambda$, add a constraint $\Delta_{D_x}\left((x,u),(x,w)\right)$.
    \item Add one $R$-constraint on all variables of $\Lambda$, i.e., $R(v_1,\dots,v_\lambda)$.
\end{itemize}
The instance $\mathfrak H$ is AC, which follows directly from our previously proved claims. 

To see that $\mathfrak H$ is also cycle consistent, note that we can use SAC of the tuber system $W$ from Lemma~\ref{lem:certificates} (ii). The only constraint of $\mathfrak H$ not having a corresponding set in $W$ is $R$, but since $Q_{\psi_{xy}}\subseteq \pr_{ij}(R)$ and all $Q_{\psi_{xy}}$ are sets in $W$, all steps possible in $W$ are possible in $R$. Fix a closed constraint walk in $\mathfrak H$ starting at $v_0=(x_0,u_0)$ and a value $a\in D_{x_0}$. Choose an AC tuber system $W'\leq W$ with $W'_{x_0}=\{a\}$ and $W'_{\psi_{xy}}\subseteq Q_{\psi_{xy}}$ for each $\psi_{xy}\in\Psi$, the existence of which is given by SAC of $W$. We set the values $d_i$ for each variable $v_i=(x_i,u_i)$ of the constraint walk recursively, beginning with $d_0\coloneqq a$. If for $v_{i+1}$ we have $x_{i+1}=x_i$, we set $d_{i+1}\coloneqq d_i$. This works for each possible constraint $\theta_{i+1}$ between $v_i$ and $v_{i+1}$: if $\theta_{i+1}$ is either of the form $D_{x_i}$ or $\Delta_{D_{x_i}}$, it follows from the definitions. If $\theta_{i+1}$ is $R$, this follows since we have shown before that $\{(a,a)\mid a\in D_x\}\subseteq \pr_{(x,u)(x,v)}(R)$. If, on the other hand, for $v_{i+1}$ we have $x_{i+1}\neq x_i$, then the constraint $\theta_{i+1}$ has to be $R$, and we set $d_{i+1}$ such that $(d_i,d_{i+1})\in W'_{\psi_{x_ix_{i+1}}}$. This choice is compatible with $R$ since $Q_{\psi_{xy}}\subseteq \pr_{ij}(R)$. At the end of the constraint walk, we get $d_m\in W'_{x_m}=W'_{x_0}=\{a\}$, proving cycle consistency of $\mathfrak H$.

By Theorem~\ref{thm:external-cycle}, the instance $\mathfrak H$ has a solution, i.e., a map $p\colon\Lambda\to[k]$ preserving all constraint relations. By construction of $\mathfrak H$, we have $p(x,u) = p(x,u')$ for all $(x,u),(x,u')\in\Lambda$. We define $c$ as this constant value for each $x$, i.e., $c(x)\coloneqq p(x,u)$. Since the relation $R$ is generated by the $|\Lambda|$-tuples $r^1,\dots,r^s$, there is an $s$-ary $\{g_1,g_2\}$-term $t$ such that, evaluated in $\E$, we have $t^\E(r^1,\dots,r^s) = (p((x,u)))_{(x,u)\in\Lambda}$. We evaluate the term $t$ in $\B$ and define $L'\coloneqq t^\B(L^1,\dots,L^s)$. Since $t^\B\in\Pol(\B)$, the tuber system $L'$ is finite and CSAC by Lemma~\ref{lem:products}. Also, for each $x\in A$ and every $b\in L'_x$, there is $u=(u_1,\dots,u_s)\in P_x$ such that $b=t^\B(u_1,\dots,u_s)$. Since \[\pi(b) = \pi(t^\B(u_1,\dots,u_s)) = t^\E(r^1_{(x,u)},\dots,r^s_{(x,u)})=p(x,u)=c(x),\]we get $L'_x\subseteq C_{c(x)}$. Since $L'\leq T$, we also get $L'_\phi\subseteq \phi^\B$ for all $\phi\in\Phi$. Defining $U$ as the restriction of $L'$ to $\Phi$ and $A$ (i.e., removing the sets $L'_\psi$ for all $\psi\in\Psi$) gives the required CSAC tuber system $U$ of $\A$, finishing the proof. 
\end{proof}

\section{Solutions from Determined Colours}\label{sec:temporal}
In this section we show that an instance with a colour-determined CSAC tuber system has a solution. The goal of this section is to prove the following lemma.

\begin{lemma}\label{lem:temporal}
Let $\B$ be finite-on-top-of-temporal satisfying the assumptions of (4) from Theorem~\ref{thm:algebraic}. Let $\A$ be an instance of $\CSP(\B)$ with constraints $\Phi$. Assume there is a map $c\colon A\to[k]$ and a finite non-empty $c$-colour-determined CSAC tuber system $S$. %\bluesout{satisfying $S_\phi\subseteq \phi^{\B}\cap\prod_{x\in V_\phi} C_{c(x)}$ for every constraint $\phi\in\Phi$}. 
Then the instance $\A$ has a solution in $\B$ with colours $c$.
\end{lemma}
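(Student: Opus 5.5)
Since the colours are fixed by $c$, each variable $x$ has to be sent into the copy $C_{c(x)}\cong\Q$. The plan is therefore to turn the problem into a temporal-like CSP over several sorts and then apply Zhuk's palette-map criterion~\cite{ZhukSingleton} on each sort.

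First I form the multi-sorted structure $\B_c$. Its sorts are the colours $i\in[k]$, each with domain $C_i$. Every constraint $\phi(x_1,\dots,x_s)$ is replaced by the relation $\phi^\B\cap\prod_j C_{c(x_j)}$, regarded as a relation of sort $(c(x_1),\dots,c(x_s))$. A solution of $\A$ with colours $c$ is exactly a solution of this multi-sorted instance. Each restricted relation is first-order definable in the disjoint union of the orders $(C_i;<_i)$ with prescribed sorts. Hence it is a relation of the algebraic product of $k$ copies of $(\Q;<)$, and by the normal form of~\cite{Products} it is a finite conjunction of relations, each involving the coordinates of a single sort, together with product-type combinations. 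The polymorphisms available are the restrictions of $\Pol(\B)$ to the sorts. By condition~\conditionref{cond:temporal} of Theorem~\ref{thm:algebraic}, for each $i$ some binary $f_i$ acts on $C_i^2$ as $e_i\circ\mathfrak t_i$ with $\mathfrak t_i\in\mathcal T$. I need one term acting on all sorts simultaneously, so I compose the $f_i$ in sequence. On sort $i$, $f_i$ acts as $\mathfrak t_i$ up to an embedding. On the other sorts it acts by some binary operation; by canonisation it is (after embedding) one of finitely many temporal behaviours, in the worst case a projection or a lex-like operation. Iterating, and using Proposition~\ref{prop:temporal-generation} to absorb lex/pp behaviour, I obtain a single binary polymorphism whose restriction to each $C_i$ is order-equivalent to a temporal operation in $\mathcal T$ of the right type for that sort.

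Second, I run Zhuk's argument sort by sort. Zhuk shows that for temporal templates with $\min$, $\mi$, $\tll$ or their duals, a non-empty CSAC instance on a finite sample has a solution. The proof builds the solution layer by layer: using palette maps, it selects the variables whose value can be put at the current extreme (minimum for primal, maximum for dual operations), removes them, and recurses. The certificates needed at each step are AC-certificates of constraint tuples, which is exactly what the CSAC system $S$ provides. Everything stays within $\prod_j C_{c(x_j)}$ because $S$ is $c$-colour-determined. I would perform this peeling on all sorts in parallel, and since the sorts are ordered independently, the steps can be taken in any interleaving. At each step I pick one sort $i$ and apply the palette/extremal step for $\mathfrak t_i$ on the variables of colour $i$ only. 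This produces the set of variables taking the current extreme value in $C_i$. I then restrict $S$ accordingly and check that the restricted system is again CSAC (or at least non-empty AC with certificates), so that induction on the number of remaining variables applies. The base case, with no variables left, is trivial.

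The main obstacle is the cross-sort interaction in the normal form. A constraint may tie the relative order of variables in $C_i$ to that in $C_j$, through a disjunction such as ``$x<_i x'$ or $y<_j y'$''. Then a peeling step in sort $i$ can destroy the CSAC certificates used in sort $j$. To handle this, I would use the normal form of~\cite{Products} to show that such product relations are preserved by the combined polymorphism only if their cross-sort disjunctions are Horn-like in a way compatible with every $\mathfrak t_i$. This would let me argue that a tuple chosen in one sort's extremal step extends, via its AC-certificate from CSAC, to tuples realising the extremes in the other sorts.
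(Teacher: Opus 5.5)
Your proposal has two genuine gaps, and the second is where the real difficulty of the lemma lies.

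First, the single binary polymorphism acting as $\mathfrak t_i$ on every $C_i$ is asserted, not constructed. Composing the $f_i$ gives no control over what $f_j$ does on sorts $i\neq j$. For example, if $f_1$ acts as $\min$ on $C_1$ and $f_2$ acts there as a lex operation $\ell$, then $f_1(f_2(x,y),f_2(y,x))$ acts on $C_1$, up to an embedding, as $\ell(\min(x,y),\max(x,y))$. That operation is symmetric, so it is not $\mi$ or $\tll$ (or their duals), and it is not $\min$ or $\max$ either, since it separates $(0,1)$ from $(0,2)$ and $(0,2)$ from $(1,2)$. Appealing to ``canonisation'' and ``absorbing lex/pp behaviour'' does not close this.

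Second, and more seriously, the peeling scheme leaves the cross-sort interaction open. You name it as the main obstacle, but you only hope that the normal form makes the disjunctions ``Horn-like''. You also give no reason why the system obtained after forcing a set of variables to be extremal in one sort is again CSAC. CSAC only supplies certificates for pinning a single tuple of a single constraint, so your induction has no foundation.

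The paper needs neither a global binary polymorphism nor any peeling. It lifts each $\phi^\B_c$ to $\hat R_\phi\subseteq(\Q^k)^s$ by padding unused coordinates, so $\Pf$ is a first-order expansion of $\mathfrak Q^{(k)}$. From then on only per-coordinate witnesses are used. Let $J$ be the set of coordinates at which some polymorphism of $\Pf$ is a lex operation. Proposition~\ref{prop:temporal-generation} gives an ll-operation on $J$, and off $J$ a pp-operation with no lex, since there $\mathfrak t_p\in\{\min,\mi\}$. Theorem~\ref{thm:product-normal} therefore yields $\Lambda_\phi\wedge\bigwedge(\chi\vee\psi)$, with three features:
\begin{itemize}
\item cross-coordinate interaction occurs only through the disjunctions $\chi$ of disequalities on $J$-coordinates;
\item the flattened $J$-part $\Lambda_\phi$ is an ll-invariant temporal relation;
\item each $\psi$ lives on a single coordinate $p\notin J$.
\end{itemize}
For each block configuration, the paper then builds a palette totally symmetric $f$ coordinatewise: $L=e_n(\min\bar v_1,\dots,\min\bar v_n)$ on $J$, and the $\mi$-palette function or $\min$ off $J$. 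The key point is that $L$ reflects equalities on palette tuples: every input row fills a whole block, so equal block minima force equal entries in every row. Hence a disequality disjunct satisfied by some row is satisfied by the output. If no row satisfies $\chi$, then $\pr_p(\hat R_\phi\cap\neg\chi)$ is $\mathfrak t_p$-invariant and so is preserved by $G_p$. This makes $f$ a palette polymorphism of $\B_c$, and Theorem~\ref{thm:palette-csac} is applied to $S$ as a black box.
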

The proof of this lemma will use syntactical normal forms for relations of algebraic products of $(\Q;<)$ as well as results about palette totally-symmetric maps, both are given below.
%\bluesout{We state the needed content here; Appendix~\ref{app:temporal-tools} gives the precise external statements, their hypotheses, and their specializations.}

\subsection{Syntactical Normal Forms for Algebraic Products of $(\Q;<)$}\label{sec:products}
In the following, we write $\mathfrak Q^{(d)}$ for the $d$-ary algebraic product of $(\Q;<)$, i.e., the structure $(\Q^d; <_1,\dots,<_d,=_1,\dots,=_d)$, where $u <_i v$ means $u_i < v_i$, and $u =_i v$ means $u_i = v_i$. Let $\mathfrak P$ be a first-order expansion of $\mathfrak Q^{(d)}$. Since all polymorphisms of $\mathfrak P$ preserve $=_i$ for all $i\in[d]$, the $i$-th coordinate of the output depends only on the $i$-th coordinates of the inputs. Thus, the restriction of $f\in\Pol(\mathfrak P)$ to its $i$-th coordinates, written $f|_i$, is unique and yields an operation of $(\Q;<)$.
%\bluesout{The proof and the elementary encoding of sorted relations appear in
%Appendix~\ref{sec:external-temporal}.}
\begin{theorem}[{\cite[Proposition~4.38]{Products}}]\label{thm:product-normal}
    Let $d\geq 2$, and let $\mathfrak P$ be a first-order expansion of $\mathfrak Q^{(d)}$. Let $S\subseteq[d]$ be such that \begin{itemize}
        \item for every $p\in S$ there exists $f_p\in\Pol(\mathfrak P)$ such that $f_p|_p$ is an ll-operation, and
        \item for every $p\in[d]\setminus S$ there exists $f_p\in\Pol(\mathfrak P)$ such that $f_p|_p$ is a pp-operation, but there is no $g\in\Pol(\mathfrak P)$ such that $g|_p$ is a lex-operation.
    \end{itemize}
    Then, every relation of $\mathfrak P$ can be defined by a conjunction of clauses of one of the following forms, for some $p\in[d]$:\begin{enumerate}[(a)]
        \item if $p\in[d]\setminus S$, then the clauses can be chosen to be of the form \[u_1 \neq_{i_1} v_1 \lor \dots \lor u_m \neq_{i_m} v_m \lor y_1 \neq_p x \lor \dots \lor y_k \neq_p x \lor z_1 \leq_p x \lor \dots \lor z_l \leq_p x,\]where $i_1,\dots,i_m\in S$, and
        \item if $p\in S$, then the clauses can be chosen to be clauses of the form \[x_1 \neq_{i_1} y_1 \lor \dots \lor x_m \neq_{i_m} y_m \lor z_1 <_p z_0 \lor \dots \lor z_l <_p z_0 \lor (z_0 =_p \dots =_p z_l)\]for $i_1,\dots,i_m\in S$, and where the last disjunct may be omitted.
    \end{enumerate}
\end{theorem}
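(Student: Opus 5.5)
The plan is to use the standard ``implied clauses'' strategy from the temporal Horn-type classifications, carried out coordinatewise on the algebraic product. Fix a relation $R$ of $\mathfrak P$ of arity $r$, and let $\Theta$ be the set of all clauses of the forms (a) and (b) (over the $r$ variables, with $p$ ranging over $[d]$) that are implied by $R$. Since $\mathfrak P$ is $\omega$-categorical, $\Theta$ is finite up to equivalence. Clearly $R\subseteq\Theta^{\mathfrak P}$. The work is the converse: every tuple $t\in(\Q^d)^r$ that satisfies all clauses of $\Theta$ should lie in $R$. By homogeneity of $\mathfrak Q^{(d)}$, membership of $t$ depends only on the orbit of $t$, i.e., on the $d$ weak orders that $t$ induces on its entries coordinatewise. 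I would therefore argue by induction on a measure of how far the orbit of $t$ is from orbits realised in $R$.

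\textbf{Step 1 (coordinates in $S$).} For coordinates $p\in S$ I would mimic the ll-Horn argument of \cite{tcsps-journal}. Consider the minimal entries of $t$ in coordinate $p$. Clauses of form (b) with $l=0$, $z_0$ alone and the equality disjunct, together with the premise part $x_i\neq_{i_j}y_i$ restricting to $S$-coordinates, guarantee that the ``minimal block'' of coordinate $p$ is realisable in $R$ relative to the equalities of $t$ on $S$-coordinates. By Proposition~\ref{prop:temporal-generation}, $\Pol(\mathfrak P)$ contains operations whose $p$-restrictions are lex and pp. Applying a lex-operation on coordinate $p$ (and suitable operations elsewhere) to a tuple realising the minimal block and a tuple realising the rest lets me peel off the minimal block and recurse on the remaining entries. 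The premises $u\neq_i v$ for $i\in S$ let the recursion condition on equalities in other $S$-coordinates. This works because lex/ll operations in those coordinates can \emph{create} distinctions, so negative literals there are harmless.

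\textbf{Step 2 (coordinates outside $S$).} For $p\notin S$ only pp (from $\min$/$\mi$ or duals) is available, and no lex. The argument follows the $\min$-Horn normal form: for each variable $x$, the clauses of form (a) say that $x$ is not strictly below all of $y_1,\dots,y_k$ and $z_1,\dots,z_l$ in a way forbidden by $R$. Given $t$, I would realise the order on coordinate $p$ by iteratively applying a pp-operation to tuples of $R$ that witness each ``$x$ is a minimum'' situation. These witnesses exist by contraposition of the implied clauses.

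\textbf{Step 3 (main obstacle).} The key difficulty is the interaction between coordinates. A polymorphism $f$ acts on all $d$ coordinates simultaneously, so an operation chosen to be pp on $p$ may scramble coordinate $q$. Also, the clauses of type (a) may only refer to $S$-coordinates in their negated-equality premises, never to other coordinates outside $S$. To handle this, I would first show that for $q\notin S$ the absence of a lex-restriction forces $R$ to be ``independent'' of equalities in coordinate $q$. The tool here is the classification of restrictions $f|_q$ via Theorem~\ref{thm:temporal_descriptive}-type dichotomies together with a canonisation (Ramsey) argument on $\Pol(\mathfrak P)$. Concretely, this should yield a single binary polymorphism that is simultaneously ll on all of $S$ and pp on all of $[d]\setminus S$, by composing the given $f_p$ in a suitable order. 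It should also yield the fact that any equality $=_q$ for $q\notin S$ can be broken by some polymorphism injective enough elsewhere. With such a uniform operation, Steps 1 and 2 run in parallel: I process coordinates in $S$ first (they only depend on $S$-premises), then the remaining coordinates, where the pp-behaviour does not disturb the already-fixed $S$-structure. I expect the construction of this uniform operation, and the proof that it exists only under the stated no-lex hypothesis, to be the hardest part.
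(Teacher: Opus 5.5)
The paper does not prove this statement. It imports it from \cite[Proposition~4.38]{Products} and uses it only as a black box in the proof of Lemma~\ref{lem:temporal}. Taken on its own, your sketch has a genuine gap exactly where the content of the theorem lies.

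Steps~1 and~2 are one-coordinate arguments in the style of \cite{tcsps-journal}. Everything specific to products is deferred to Step~3, which only announces a plan: that premises $u\neq_i v$ only ever come from $S$, and that a single polymorphism acting on all $d$ coordinates at once can realise a prescribed orbit in every coordinate simultaneously.

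The hypotheses give you separate operations $f_p$ and no control over $f_p|_q$ for $q\neq p$. These restrictions may be projections, $\min$, or $\ppop$-like, and then naive compositions lose the required behaviour. For instance, $f_p(f_q(x,y),f_q(y,x))$ is symmetric, hence not $\ppop$, on coordinate $q$ as soon as $f_p|_q=\min$. It is also not injective on coordinate $p$ as soon as $f_q|_p=\min$.

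Moreover, the uniform operation you hope for is not a stepping stone but essentially the theorem itself. Any binary $h$ with $h|_i$ an $\tll$-operation for all $i\in S$ and $h|_q=\ppop$ for all $q\notin S$ preserves every clause of type (a) and (b). Indeed, if a premise $u\neq_i v$ is false in the image, injectivity of $h|_i$ makes it false in both arguments, and the rest of the clause is $\ppop$-Horn resp.\ $\tll$-Horn in a single coordinate. So the theorem implies $h\in\Pol(\mathfrak P)$, and postulating $h$ assumes the hard part.

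What is missing is an actual analysis of the off-coordinate restrictions. One example of what is needed is an argument that an essential premise $u\neq_q v$ with $q\notin S$ cannot survive $f_q$. Recall that $\ppop(a,b)=\ppop(a',b')$ with $(a,b)\neq(a',b')$ happens exactly when $a=a'\leq 0$, or when $a,a'>0$ and $b=b'$. In these two regimes, preservation forces $f_q|_p$ to satisfy the clause head from its first, resp.\ second, argument alone. For a head such as $z\leq_p w$ this contradicts preservation of $<_p$. Neither Theorem~\ref{thm:temporal_descriptive}, which concerns single temporal structures, nor an unspecified canonisation argument supplies this.

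There is also a concrete error in Step~1. Proposition~\ref{prop:temporal-generation} turns an $\tll$-restriction on $p\in S$ into a lex-restriction, not into a $\ppop$-restriction, and in general no $\ppop$-restriction exists. Consider the relation $x\neq_p y\lor u=_p v$. It is defined by two clauses of type (b) and is preserved by every operation that is injective on coordinate $p$. It is not preserved by $\ppop$: apply $\ppop$ to the tuples $(x,y,u,v)$ whose $p$-coordinates are $(1,2,0,5)$ and $(5,5,0,0)$, and you obtain $x=_p y$ but $u\neq_p v$. Any argument using $\ppop$ on $S$-coordinates would therefore establish a false, stronger normal form. Applying Proposition~\ref{prop:temporal-generation} to a restriction clone at all also requires that clone to be locally closed, cf.\ \cite[Proposition~3.5]{Products}, as the paper checks in the proof of Lemma~\ref{lem:temporal}.

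Finally, even granting a uniform $h$, your plan to ``process $S$ first and then the rest without disturbing the $S$-structure'' does not match how a binary polymorphism acts. The tuple $h(t^1,t^2)$ changes all coordinates, and its $\tll$-part refines the $S$-order types of the inputs. You would need two further facts, neither of which is addressed:
\begin{itemize}
\item the witnesses for violated implied clauses can be chosen with the correct configuration on all other coordinates;
\item the refinement of $S$-order types caused by $h|_S$ can be controlled so that the orbit of the target tuple is actually reached.
\end{itemize}
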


\subsection{Palette Maps}
We follow the terminology from \cite{ZhukSingleton}.
\begin{definition}
	Let $k_1,\dots,k_n\in\mathbb N$, and let $m=\sum_i k_i$. A function $f\colon B^m\to B$ is \emph{$(k_1,\dots,k_n)$-block totally symmetric} if the value $f(\bar a_1,\dots, \bar a_n)$ depends only on the sets $\left(\{(\bar a_i)_j\mid j\in[k_i]\} \right)_{i\in[n]}$. 
	An $m$-tuple $t$ is a \emph{$(k_1,\dots,k_n)$-palette tuple} if every value appearing somewhere in $t$ fills some block entirely, i.e., if for all $i\in[n]$ and $j\in[k_i]$, there is $l\in[n]$ such that, for $c\coloneqq t_{k_1+\cdots+k_{i-1}+j}$, the $l$-th block of $t$ is $(c,\dots,c)$.
	A function $f\colon B^m\to B$ is \emph{$(k_1,\dots,k_n)$-palette totally symmetric} if it is $(k_1,\dots,k_n)$-block totally symmetric on all $(k_1,\dots,k_n)$-palette tuples from $B^m$.
	A function $f\colon B^m \to B$ is called a \emph{$(k_1,\dots,k_n)$-palette polymorphism} of $\B$ if, for each relation symbol $R$ and each $(k_1,\dots,k_n)$-palette tuple $(\bar a_1,\dots,\bar a_m) \in (R^\mathfrak{B})^m$, we have $f(\bar a_1,\dots,\bar a_m) \in R^\B$. We denote the set of all $(k_1,\dots,k_n)$-palette polymorphisms of $\B$ by $(k_1,\dots,k_n)$-$\Pol(\B)$.
\end{definition}
\begin{lemma}[{\cite[Corollary 6.5 and Lemma 6.12]{ZhukSingleton}}]\label{lem:mi_palette}
    Let $\B$ be a temporal structure with $\mi\in\Pol(\B)$. Then $\Pol(\B)$ contains palette totally symmetric functions of all arities and block sizes.
    %\todo{Ph: So okay, oder?}
\end{lemma}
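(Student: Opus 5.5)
Since the lemma is quoted from \cite{ZhukSingleton}, I will only reconstruct its proof. The plan is to build the required operation explicitly from $\mi$, using local closure of $\Pol(\B)$ to reduce to finitely many inputs. $\B$ is a temporal structure, so it is $\omega$-categorical, and $\Pol(\B)$ is a locally closed clone. Hence, to produce a $(k_1,\dots,k_n)$-palette totally symmetric function in $\Pol(\B)$, it suffices to find, for each finite set $F$ of palette tuples in $\Q^m$ ($m=\sum_i k_i$), a term operation $f_F$ over $\{\mi\}$ such that $f_F$ is block totally symmetric on $F$. A compactness (K\"onig's lemma) argument over an enumeration of the countably many palette tuples then yields a global operation. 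For this, every finite approximation must be block totally symmetric *up to automorphisms of $(\Q;<)$*. This freedom is available because $\mi$ and its compositions may be postcomposed with any order embedding and remain polymorphisms.

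The core step is an explicit description of the order type of $\mi$-generated terms. I will show by induction on terms that for the $m$-ary term $t_m$ built as a balanced iteration of $\mi$, the relative order of $t_m(\bar a)$ and $t_m(\bar b)$ is determined by a recursive ``signature''. First, compare $\min(\bar a)$ with $\min(\bar b)$. If the minima are equal, compare the set of coordinates where the minimum is attained. Then recurse on the remaining coordinates. This mirrors the case split $x=y$, $x>y$, $x<y$ in the definition of $\mi$ via $\alpha,\beta,\gamma$.

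I will then modify this term by symmetrising inside blocks, i.e.\ by replacing the coordinate sets by the sets of blocks that contain the current minimum. On palette tuples, every occurring value fills some block entirely. So the information ``which blocks contain the current minimal value $v$'' together with ``which blocks are constantly $v$'' is a function of the family of per-block value sets $\bigl(\{(\bar a_i)_j \mid j\in[k_i]\}\bigr)_{i\in[n]}$. The intended operation therefore compares palette tuples lexicographically along the increasing list of occurring values, and at each value uses only block-level data. This makes it block totally symmetric on palette tuples. It remains to realise this order type by a term over $\mi$ (and, by Proposition~\ref{prop:temporal-generation}, $\ppop$), which I would do by nesting one layer of $\mi$ across blocks over one layer within each block.

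The main obstacle is this last realisation step: showing that the symmetrised comparison rule is genuinely induced by some $\mi$-term on the finite set $F$. A general $\mi$-term distinguishes positions within a block, whereas we need it to ignore them exactly on palette inputs. I would first try to use the palette property to show that, on $F$, the within-block positional information that $\mi$ records is redundant. The reason is that any value occurring in a block also fills a whole block, so the ``attained set'' at each stage is a union of full blocks plus positions already determined by the block sets. If this direct argument fails, the fallback is to verify preservation semantically. One would define the operation by the comparison rule and check that it preserves every relation of $\B$, using the known syntactic description of $\mi$-invariant temporal relations from \cite{tcsps-journal}, clause by clause on palette tuples of satisfying tuples.
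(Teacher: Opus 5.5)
The paper does not prove this lemma; it quotes it from \cite{ZhukSingleton}. Measured against what a proof has to do, your sketch has a genuine gap at exactly the step you flag, and the route as designed cannot close it.

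First, your order-type analysis of $\mi$-terms is wrong. Canonically composed $\mi$-terms depend only on the minimum and the set of positions attaining it; already $\mi(x,y)=\beta(y)$ for every $x>y$. So there is no ``recursion on the remaining coordinates'', and no such term realises a comparison that continues along the larger occurring values.

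Worse, that comparison is not a palette polymorphism in general. For block sizes $(1,1)$ every pair is a palette tuple and block symmetry is vacuous, so your rule would have to be an ordinary binary polymorphism. On pairs with $x<y$ it is lexicographic. Precomposing with order embeddings $e_1\colon\Q\to(-\infty,0)$ and $e_2\colon\Q\to(0,\infty)$, which lie in $\Pol(\B)$ by local closure, therefore yields a lex-operation. Now take $R=\{(x,z_1,z_2)\mid x>z_1\lor x\geq z_2\}$. It is preserved by $\mi$: check the four cases according to which disjunct each argument satisfies. It is not preserved by lex: a lex-operation applied to $(0,5,0)$ and $(0,-1,1)$, in this order, leaves $R$. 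So your semantic fallback fails as well.

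Finally, the ``direct argument'' is false. Block value sets do not determine where the minimum sits inside a non-constant block. For block sizes $(2,1,1)$ and $a<b$, the palette tuples $(a,b,a,b)$ and $(b,a,a,b)$ have the same block sets. Yet $\mi(\mi(x_1,x_3),\mi(x_2,x_4))$ sends them to $\gamma(\alpha(a))\neq\gamma(\beta(a))$.

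The missing idea is to aim for a much coarser invariant. On a palette tuple the global minimum $v$ fills some block. Hence the pair consisting of $v$ and the set of blocks that are constantly $v$ is determined by the block value sets, and it suffices to realise an operation depending only on this pair. Such an operation can be built as follows.
\begin{itemize}
\item Put $t_1=\mathrm{id}$ and $t_{j+1}(\bar x,y)=\mi(t_j(\bar x),\alpha^{j-1}(y))$, with $\alpha$ from the definition of $\mi$.
\item By induction, $t_j$ depends only on the minimum and the set attaining it, and tuples with smaller minimum get smaller values.
\item Among inputs with minimum $v$, the least value $\alpha^{j-1}(v)$ is attained exactly at the constant tuple.
\item With $K=\max_i k_i$, pad each block to length $K$ by repeating an entry, apply $t_K$ blockwise, and apply $t_n$ to the $n$ results.
\end{itemize}
On a palette tuple, the least inner value is $\alpha^{K-1}(v)$, attained exactly at the blocks that are constantly $v$. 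The outer $t_n$ only sees this minimum and where it is attained, so the output depends only on the block value sets. The result is a composition of $\mi$ with self-embeddings of $(\Q;<)$, hence a genuine polymorphism of $\B$, and no compactness argument is needed.
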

\begin{lemma}[{\cite[Corollary~6.15 and Lemma 6.12]{ZhukSingleton}}]\label{lem:ll_palette}
    Let $\B$ be a temporal structure with $\tll\in\Pol(\B)$. Then $(k_1,\dots,k_n)$-$\Pol(\B)$ contains a $(k_1,\dots,k_n)$-palette totally symmetric function for all $k_1,\dots,k_n,n\in\mathbb N$. One possible choice for such a function is $f(\bar v_1,\dots,\bar v_n) \coloneqq e_n\left(\min(\bar v_1),\dots,\min(\bar v_n)\right)$, where $e_n\colon\Q^n\hookrightarrow\Q$ preserves the lexicographic order.
\end{lemma}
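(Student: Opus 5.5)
The statement immediately above is Lemma~\ref{lem:ll_palette}, which is cited from~\cite{ZhukSingleton}. The plan is to verify directly that the displayed function $f(\bar v_1,\dots,\bar v_n)=e_n(\min(\bar v_1),\dots,\min(\bar v_n))$ is palette totally symmetric and a palette polymorphism.

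\textbf{Palette total symmetry.} This is immediate: $f$ depends on each block only through $\min(\bar v_i)$, and the minimum of a block is a function of the set of its entries. So $f$ is block totally symmetric on all tuples, in particular on palette tuples.

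\textbf{Preservation.} Let $R$ be a relation of $\B$ of arity $r$, and let $\bar a_1,\dots,\bar a_m\in R$ be such that, in each of the $r$ coordinates, the $m$-tuple of entries is a $(k_1,\dots,k_n)$-palette tuple. I would proceed in three steps.
\begin{itemize}
\item[(1)] Replace each block by a single tuple. Define $\bar b_i$ coordinatewise as the minimum of the tuples in block $i$. Then $f(\bar a_1,\dots,\bar a_m)=e_n(\bar b_1,\dots,\bar b_n)$ coordinatewise.
\item[(2)] Use the palette property to place each $\bar b_i$ in $R$. In each coordinate $j$, the value $\min$ of block $i$ is some entry $c$ of the palette tuple. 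By the palette property, some block $l$ is constantly $c$ in coordinate $j$. The goal is to show that the coordinatewise minima $\bar b_i$ themselves lie in $R$. That is not automatic, since $\min$ need not be a polymorphism. So rather than proving this directly, I would argue about the whole tuple $(\bar b_1,\dots,\bar b_n)$ against $(\bar a_1,\dots,\bar a_m)$ via orbits.
\item[(3)] Reduce to a lexicographic combination of tuples from $R$. Since $\tll\in\Pol(\B)$, Proposition~\ref{prop:temporal-generation}(2) gives a lex-operation, and hence lexicographic $e_n$-combinations of tuples of $R$ stay in $R$. It therefore suffices to find $\bar c_1,\dots,\bar c_n\in R$ with $e_n(\bar c_1,\dots,\bar c_n)$ in the same orbit as $e_n(\bar b_1,\dots,\bar b_n)$.
\end{itemize}

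\textbf{Main obstacle.} Step~(3) is the hard part: its naive choice $\bar c_i=\bar b_i$ needs $\bar b_i\in R$, which fails in general. Closing this gap requires the ll-Horn syntactic description of $R$ (the temporal analogue of Theorem~\ref{thm:product-normal}(b)). The plan is to check each ll-Horn clause
\[x_1\neq y_1\lor\dots\lor x_m\neq y_m\lor z_1<z_0\lor\dots\lor z_l<z_0\lor(z_0=\dots=z_l)\]
directly on $e_n(\bar b_1,\dots,\bar b_n)$, arguing by the lexicographically first block at which the relevant coordinates differ. At that block, the palette property lets us choose a single original tuple $\bar a$ that is constant equal to the minimum on every coordinate where that minimum is attained, and that tuple witnesses the clause. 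I expect this combinatorial case analysis — matching the clause's disjuncts through the lexicographic first-difference block — to be the main difficulty. Since the statement is taken from~\cite{ZhukSingleton} (Corollary~6.15 together with Lemma~6.12), one may alternatively simply invoke it.
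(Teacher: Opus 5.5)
The paper gives no argument for this lemma; it imports it from Zhuk, which is also your fallback. Your direct verification has a genuine gap, and it sits exactly where the palette hypothesis must be used.

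You read the hypothesis coordinatewise: in each coordinate, the $m$-tuple of entries is palette. The definition instead requires $(\bar a_1,\dots,\bar a_m)$ to be a palette tuple over the domain $R^{\mathfrak B}$, so every tuple $\bar a_p$ itself fills some block entirely. The coordinatewise reading is strictly weaker, and under it the statement is false. Take block sizes $(2,1,1)$, the ll-Horn relation $R=\{(x,y,z_0,z_1)\mid x\neq y\lor z_1<z_0\}$, and the tuples $(0,1,0,5)$, $(1,0,0,5)$, $(0,0,0,-1)$, $(1,1,0,-1)\in R$. In each coordinate the $4$-tuple of entries is palette. Yet $f$ yields $e_3(0,0,1)$ at both $x$ and $y$, $e_3(0,0,0)$ at $z_0$, and $e_3(5,-1,-1)$ at $z_1$, which violates the clause.

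Your last paragraph has a related problem. You propose choosing ``a single original tuple that is constant equal to the minimum on every coordinate where that minimum is attained''. No form of the palette property supplies this: a block consisting of $(0,1)$ and $(1,0)$ has minimum $(0,0)$, attained by neither tuple. You also never say how the disequality disjuncts $x_i\neq y_i$ are handled. That is the only place where anything beyond taking minima is needed.

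The missing idea is \emph{equality reflection}. Under the correct hypothesis, each $\bar a_p$ fills some block $l$, so $\bar a_p$ equals the blockwise minimum $\bar b_l$. Since $e_n$ is injective, equal output entries at $x$ and $y$ force $\bar b_l[x]=\bar b_l[y]$ for every block $l$. Hence $\bar a_p[x]=\bar a_p[y]$ for every $p$.

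So if all disequalities of an ll-Horn clause fail in the output, they fail in every input tuple. Each input then satisfies $D\coloneqq z_1<z_0\lor\dots\lor z_l<z_0\lor(z_0=\dots=z_l)$, or $D$ without its last disjunct. This remainder needs no palette assumption, for two reasons.
\begin{itemize}
\item A coordinatewise minimum $u$ satisfies $D$. Pick an input $t$ with $t[z_0]=u[z_0]$. If $u[z_j]\geq u[z_0]$ for all $j$, then $t[z_j]\geq u[z_j]\geq t[z_0]$, so $t$ must use the last disjunct. Then $u[z_0]\le u[z_j]\le t[z_j]=u[z_0]$, so $u$ uses it too.
\item The lexicographic map $e_n$ preserves $D$. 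Inspect the first block in which the $z$-values are not all equal.
\end{itemize}

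Combined with the ll-Horn normal form, this is the whole proof. You also need that projecting a palette family onto the variables of a clause keeps it palette, which holds since a tuple filling block $l$ projects to one filling block $l$. Your orbit reduction via a lex polymorphism in steps (1)--(3) is then unnecessary.
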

\begin{remark}\label{rem:min_palette}
	The function $\min\colon B^m\to B$ is totally symmetric, and hence also palette totally symmetric for all arities and block sizes.
\end{remark}
\begin{theorem}[{\cite[Lemma 6.13]{ZhukSingleton}}]\label{thm:palette-csac}
    Let $\B$ be a relational structure, and suppose that for all $n,k_1,\dots,k_n\in \mathbb N$ there exists a $(k_1,\dots,k_n)$-palette totally symmetric operation in $(k_1,\dots,k_n)$-$\Pol(\B)$. Then every instance $\A$ of $\CSP(\B)$ with a finite non-empty CSAC tuber system has a solution.
\end{theorem}
%%%%%%%%%%%%%%%%%%%%%%%%%%%%%%%%%%%%%%%%%%%%%%%%%%%%%%
%%%%%%%%%%%%%%%%%%%%%%%%%%%%%%%%%%%%%%%%%%%%%%%%%%%%%%
%\bluesout{The useful extra property of $L$ is \emph{equality reflection}:	if two columns of a palette matrix have equal outputs under $L$, they
%	are equal in every row. Indeed, their block minima agree, and every
%	row occurs as an entire constant block.}
\subsection{Proof of Lemma~\ref{lem:temporal}}
We now have all ingredients required to prove that any instance $\A$ of $\CSP(\B)$ with a non-empty $c$-colour-determined CSAC tuber system $S$ has a solution with colours $c\colon A\to[k]$. We refer to tuples $u\in S_\phi$ satisfying the colouring $c$ as \emph{$c$-coloured}.

\begin{proof}[Proof of Lemma~\ref{lem:temporal}]
We first construct an operation $f$ from the operations $f_1,\dots,f_k$ and show that $f$ preserves all relations on $c$-coloured palette tuples with the help of Theorem~\ref{thm:product-normal}.
We then use this map to apply Theorem~\ref{thm:palette-csac}. Assume that every $f_i|_{C_i}$ is of the form $e_i\circ \mathfrak{t}_i$
%\todo{greek letters were formulas, use small roman letters for operations} 
for some order-preserving embedding $e_i\colon C_i\hookrightarrow C_i$ and $\mathfrak{t}_i\in\{\min,\mi,\tll\}$. 
This is without loss of generality: 
If some $\mathfrak{t}_i$ is a dual of those, we reverse the order in $C_i$ and consider the primal operation. Assume also that $k\geq 2$; for $k=1$ the result is given in \cite{ZhukSingleton}.
%\todo{why needed?\\ Ph: Der zitierte Satz für die Syntax Normalform von Produkten braucht $k\geq 2$. MB: OK!}

For every constraint $\phi\in\Phi$ with variables $V_\phi=\{x_1,\dots,x_s\}$, we write $d_j \coloneqq c(x_j)$ and define a relation $\hat R_\phi\subseteq (\Q^k)^s$ by \[ \hat R_\phi(z_1,\dots,z_s) \colon\Leftrightarrow \left( (d_1,z_1[d_1]),\dots,(d_s,z_s[d_s])\in\phi^\B\right),\]where by $z[d]$ we refer to the $d$-th entry of $z\in\Q^k$. We define a structure over $\Q^k$ by $\Pf\coloneqq (\Q^k;(\hat R_\phi)_{\phi\in\Phi}, (=_i)_{i\in[k]}, (<_i)_{i\in[k]})$. Each $f_i$ induces coordinate polymorphisms for each $d\in[k]$, by taking $f_i|_d$ as $f_i|_{C_d}$, so, like in Section~\ref{sec:products}, we get coordinate-wise defined polymorphisms $w_i\in\Pol(\Pf)$ with $w_i|_i = e_i \circ \mathfrak{t}_i$. We may take each $w_i$ such that $w_i|_i = \mathfrak{t}_i$, this preserves the same relations, i.e., is still a polymorphism of $\Pf$. Define $J \coloneqq \{i\in[k] \mid \exists f\in\Pol(\Pf)\text{ such that } f|_i \text{ is a lex-operation}\}$, and write $\mathscr D_i=\{w|_i:w\in\Pol(\Pf)\}$. Since $\Pf$ is a first-order expansion of $\mathfrak Q^{(k)}$,
each $\mathscr D_i$ contains $\Aut(\Q;<)$. Additionally, each $\mathscr{D}_i$ is locally closed by \cite[Proposition~3.5]{Products}. Thus $\mathfrak T_i=(\Q;\Inv(\mathscr D_i))$ is a temporal structure with $\Pol(\mathfrak T_i)=\mathscr D_i$. Applying Proposition~\ref{prop:temporal-generation} yields an ll-operation in $\mathscr D_i$ for all $i\in J$, while for all $i\notin J$, we get $\mathfrak t_i\in\{\min,\mi\}$, and thus also a pp-operation in $\mathscr D_i$. 
By definition, these operations are
coordinates of polymorphisms of $\Pf$.
%%%%%%%%%%%%%%%%%%%%%%%%%%%%%%%%%%%%%%%%%%%%%%%%%%%
%By Proposition~\ref{prop:temporal-generation}, $J$ contains exactly those $i\in[k]$ for which there is $w\in\Pol(\Pf)$ with $w|_i = \tll$, while for all $i\notin J$, there is $w'\in\Pol(\Pf)$ with $w'|_i\in\{\min,\mi\}$. 
%%%%%%%%%%%%%%%%%%%%%%%%%%%%%%%
Theorem~\ref{thm:product-normal} now gives us for each $\hat R_\phi$ a definition of the form \[\Lambda_\phi(\bar z) \land \bigwedge_{\substack{p\notin J \\ \lambda}}\left( \chi_{\phi,p,\lambda}(\bar z) \lor \psi_{\phi,p,\lambda} (\bar z[p]) \right), \]where $\Lambda_\phi$ is preserved by $\tll$ and uses only coordinates in $J=\{p_1,\dots,p_m\}$, while each $\chi_{\phi,p,\lambda}$ is a disjunction of disequalities between same coordinate entries from $J$, and each $\psi_{\phi,p,\lambda}$ uses only values at coordinate $p$. If $J=\varnothing$, then we take $\Lambda_\phi = \top$.
Considering the tuples $\operatorname{flat}(\bar z)\in \Q^{sm}$, defined by $\operatorname{flat}(\bar z)=(z_1[p_1],\dots,z_1[p_m],\dots,z_s[p_1],\dots,z_s[p_m])$, the formula $\Lambda_\phi$ defines an ll-invariant temporal relation $T_\phi\subseteq\Q^{sm}$. Denote by $\mathfrak T$ the expansion of $(\Q;<)$ by all $T_\phi, \phi\in\Phi$. Clearly, we have $\tll\in\Pol(\mathfrak T)$. 

Fix arbitrary $t,b_1,\dots,b_t\in\mathbb N$ and write $q=\sum_i b_i$. By Lemma~\ref{lem:ll_palette}, there exists a $(b_1,\dots,b_t)$-palette totally symmetric $L\colon \Q^q\to\Q$ preserving all $T_\phi$ on $(b_1,\dots,b_t)$-palette tuples. Choose $L$ as described in Lemma~\ref{lem:ll_palette}. Let $M$ be a $q$-ary $(b_1,\dots,b_t)$-palette totally symmetric operation preserving all $\mi$-invariant temporal relations, the existence of which is guaranteed by Lemma~\ref{lem:mi_palette}. We define $G_p\colon \Q^q\to\Q$ by \[G_p\coloneqq \begin{cases}
    L\text{, if } p\in J\\ M\text{, if } p\notin J\text{ and } \mathfrak{t}_p = \mi \\ \min\text{, if } p\notin J\text{ and } \mathfrak{t}_p = \min,
\end{cases}\]
and from $G_p$ we obtain $f\colon B^q\to B$ by defining $f\left( (i,a_1),\dots,(i,a_q)\right)\coloneqq (i, G_i(a_1,\dots,a_q))$ on monochromatic tuples, and $f\left( (i_1,a_1),\dots,(i_q,a_q)\right)\coloneqq (1,0)$ for all tuples where not all $i_j$ are equal. Since all $G_p$ are $(b_1,\dots,b_t)$-palette totally symmetric, so is $f$.
%To keep both your operation on \(B\) and Theorem 19, apply the theorem to the structure \[\mathfrak B_c = \bigl(B;(R_\phi^c)_{\phi\in\Phi},(C_i)_{i\in[k]}\bigr), \qquad R_\phi^c = \phi^{\mathfrak B}\cap \prod_{x\in V_\phi}C_{c(x)}. \] This has the same domain \(B\). Each \(R_\phi^c\) has one fixed colour pattern, so every palette sequence in it has monochromatic columns. Your corrected preservation argument therefore establishes \[ f\in(b_1,\ldots,b_t)\text{-}\operatorname{Pol}(\mathfrak B_c). \] The candidate system \(S\) supplies a CSAC instance over these relations, and a solution of that instance is a solution in \(\mathfrak B\) with the prescribed colours.

In order to apply Theorem~\ref{thm:palette-csac}, we define for each $\phi\in\Phi$ a relation $\phi^\B_c \coloneqq \phi^\B \cap \prod_{x\in V_\phi} C_{c(x)}$, and a structure $\B_c\coloneqq \left(B; (\phi^\B_c)_{\phi\in\Phi}, (C_i)_{i\in[k]}\right)$. By construction, we get $f\in(b_1,\dots,b_t)$-$\Pol(\B_c)$.
We now have to verify that $f$ preserves every constraint on $(b_1,\dots,b_t)$-palette tuples. Let $\phi\in\Phi$ be arbitrary, and let $(a^1,\dots,a^q)$ be a $(b_1,\dots,b_t)$-palette tuple, with $a^r=((d_1,a_1^r),\dots,(d_s,a_s^r))\in \phi^{\mathfrak B^c}$ for each $r\in[q]$. We extend each $a^r$ to an $s$-tuple $\hat a^r$ of elements of $\Q^k$ by defining $\hat a_j^r[p]\coloneqq a_j^r$ if $p=d_j$, and $\hat a_j^r[p]\coloneqq 0$ else. By construction, we have $\hat a^r\in \hat R_\phi$ for each $r\in[q]$, and the sequence $\hat a^1,\dots,\hat a^q$ still constitutes a $(b_1,\dots,b_t)$-palette tuple. Define $h_j[p]\coloneqq G_p(\hat a_j^1[p],\dots,\hat a_j^q[p])$, and put $\bar h\coloneqq (h_1,\dots,h_s)$. We show that $\bar h \in \hat R_\phi$.

Since every $\hat a^r$ satisfies $\Lambda_\phi$, we get $\operatorname{flat}(\hat a^r)\in T_\phi$, and all $q$ of them constitute a palette tuple, so applying $L$ preserves $T_\phi$. Since $G_p=L$ for each $p\in J$, we get that $\bar h$ satisfies $\Lambda_\phi$. 

Take any clause of the form $\chi_{\phi,p,\lambda} \lor \psi_{\phi,p,\lambda}$. We know that each $\hat a^r$ satisfies this clause. Assume first that some $\hat a^r$ satisfies $\chi_{\phi,p,\lambda}$. Then there are $j,j'\in[s]$ and $p'\in J$ such that $\hat a^r_j[p'] \neq \hat a^r_{j'}[p']$. Since $p'\in J$, we have $G_{p'}=L$, and by the choice of $L$ as in Lemma~\ref{lem:ll_palette}, we get $h_j[p'] \neq h_{j'}[p']$. Since this holds for all $p'\in J$ and all $j,j'\in[s]$ with $\hat a^r_j[p'] \neq \hat a^r_{j'}[p']$, it follows that $\bar h$ satisfies $\chi_{\phi,p,\lambda}$. 
Assume now that no $\hat a^r$ satisfies $\chi_{\phi,p,\lambda}$. Define a relation $ P_{\phi,p,\lambda}\coloneqq \pr_p \left( \hat R_\phi \cap \neg \chi_{\phi,p,\lambda}\right)$. All $\hat a^r[p]$ are in $P_{\phi,p,\lambda}$. Since $\neg\chi_{\phi,p,\lambda}$ is a conjunction of equalities, the relation $P_{\phi,p,\lambda}$ is preserved by $\mathfrak t_p$, which is either $\mi$ or $\min$. It follows that $\bar h[p] = G_p(\hat a^1[p],\dots,\hat a^q[p]) \in P_{\phi,p,\lambda}$, i.e., that $\bar h$ satisfies $\psi_{\phi,p,\lambda}$.

%Since each $\psi_{\phi,p,\lambda}$ uses only values at the coordinate $p$, it follows that $\psi_{\phi,p,\lambda}$ defines an $s$-ary relation $P_{\phi,p,\lambda}$ on $(\Q;<)$. Also, since $\psi_{\phi,p,\lambda}$ uses only values at $p$ and $w_p\in\Pol(\Pf)$ satisfies $w_p|_p=\mathfrak{t}_p$, we have that $P_{\phi,p,\lambda}$ is preserved by $\mathfrak{t}_p$, which is now either $\mi$ or $\min$. 
%
Putting this together, we get $\bar h\in \hat R_\phi$, which implies, by construction, that $f(a^1,\dots,a^q)\in\phi^{\B_c}$, showing that $f$ preserves each constraint $\phi\in\Phi$ on $c$-coloured $(b_1,\dots,b_t)$-palette tuples. Since $t,b_1,\dots,b_t$ were chosen arbitrarily, we can apply Theorem~\ref{thm:palette-csac} and get a solution of the instance $\A'$ of $\CSP(\B_c)$, where $\A'$ has domain $A$ and is constructed from $\A$ by replacing each constraint $\phi$ by $\phi_c$, which corresponds to the determined colours $c$. Since each variable is assigned a value in $B$ with colour according to $c$, this homomorphism is also a solution of $\A$ for $\CSP(\B)$.
\end{proof}
\begin{proposition}[Correctness]\label{prop:algorithm}
Let $\B$ be a relational structure satisfying the conditions of (4) from Theorem~\ref{thm:algebraic}. Then the algorithm described in Section~\ref{sec:algorithm} decides $\CSP(\B)$.
\end{proposition}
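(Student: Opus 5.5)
Correctness splits into soundness and completeness: if $\A$ has a solution the algorithm accepts, and if the algorithm accepts then $\A$ has a solution. The second direction is essentially the chaining of Proposition~\ref{prop:monochromatic} and Lemma~\ref{lem:temporal}. Termination in polynomial time has already been noted: the sets $L_\theta$ only shrink, and each is a subset of a polynomially bounded set of tuples over $S$.

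\emph{Soundness.} Suppose $h\colon\A\to\B$ is a homomorphism. Since $\B$ is a first-order expansion of $\D_k$ and the sample contains $n=|A|$ elements from each $C_i$, I compose $h$ with an automorphism of $\D_k$. This automorphism applies, on each copy $C_i$, an order-preserving bijection of $\Q$ sending the at most $n$ values of $h(A)\cap C_i$ into $\{(i,1),\dots,(i,n)\}$. Automorphisms of $\D_k$ preserve all first-order definable relations, so we may assume $h(A)\subseteq S$. Now consider the tuber system $L^h$ with $L^h_\phi=\{h(\bar x)\}$ for $\phi\in\Phi$ and $L^h_{\psi_{xy}}=\{(h(x),h(y))\}$. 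It is arc-consistent, and it is its own AC-certificate for each of its tuples, so it is CSAC. It is also contained in the initial lists. No step of the algorithm ever removes a tuple lying in a CSAC (indeed AC) subsystem: AC-pruning removes only unsupported tuples, and singleton restriction to a tuple of $L^h$ keeps $L^h$ as a witness. Hence the invariant $L^h\le L$ is maintained throughout, in particular across the outer pair tests of step~2, because restricting $\psi$ to $u=h$-value again leaves $L^h$ intact. So no empty set is ever derived, and the algorithm accepts.

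\emph{Completeness.} Suppose the algorithm accepts, with final system $S$. Proposition~\ref{prop:monochromatic} yields a colouring $c\colon A\to[k]$ and a finite non-empty $c$-colour-determined CSAC tuber system $U$ of $\A$. Lemma~\ref{lem:temporal} then applies verbatim and gives a solution of $\A$ in $\B$ with colours $c$.

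The only point needing care is the monotonicity invariant in the soundness direction, namely that the outer restart loop never deletes a tuple belonging to a genuine solution. This is a routine induction over the steps of the algorithm. I expect no real obstacle here, since the hard work is already done in Sections~\ref{sec:colour} and~\ref{sec:temporal}.
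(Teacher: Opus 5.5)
Your proposal is correct and follows the paper's proof. For a solution, an automorphism of $\D_k$ moves it into the sample, and the singleton tuber system it induces survives all three levels of deletion; for an accepted instance, Proposition~\ref{prop:monochromatic} followed by Lemma~\ref{lem:temporal} gives a solution. One small caution: your parenthetical ``(indeed AC)'' claim is false for arbitrary AC subsystems, since a singleton test can delete a tuple of such a subsystem, but you only need it for $L^h$, whose lists are singletons and which is therefore its own certificate.
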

\begin{proof}
Let $\A$ be an instance of $\CSP(\B)$. If a solution exists, i.e., a homomorphism $s\colon A\to B$, then, since $|A|=n$, we can use automorphisms of $\B$ to move $s(A)$ into the sample $S$ on which the algorithm operates. By construction, no values occurring in $s(A)$ will be deleted from any set of the tuber system, so the algorithm accepts.

Assume now the algorithm accepts $\A$ as an instance of $\CSP(\B)$. By Proposition~\ref{prop:monochromatic}, there is a colour-determined CSAC tuber system of $\A$, and Lemma~\ref{lem:temporal} then gives a solution of $\A$.
\end{proof}

\section{Fixed-Point Implementation}\label{sect:FP}
The algorithm described in Section~\ref{sec:algorithm} has a uniform logical implementation in $\FPnum$.
Fix a structure $\B$ which is finite-on-top-of-temporal. We will specify an $\FPnum$ sentence $\Psi$ such that a given finite 
structure $\A$ satisfies $\Psi$ if and only if $\A$ has a homomorphism to $\B$. 
A fixed first-order sentence handles
inputs with fewer than $M\geq k$ elements; this allows each colour to be represented by
one numerical variable on the remaining inputs.
The sentence $\Psi$ has three nested fixed points to perform AC deletions, singleton tests for CSAC, and colour-pair tests, respectively. 
We will first define $\Psi$ and then prove that it has the required properties. 

Let $\sigma$ be the signature of $\B$. 
Let $r_R$ be the arity of $R\in\sigma$, and put
$\mathcal J=\{R\in\sigma:r_R>0\}\mathbin{\dot\cup}\{\rho\}$,
where $\rho$ labels the auxiliary pair constraints and $r_\rho=2$.
Fix a bijection $\ell\colon\mathcal J\to[|\mathcal J|]$.
Define $M:=\max\{k,|\mathcal J|\}$.

Define the numerical first-order formulas
\[
\begin{aligned}
 \nu_0(q)&\coloneqq\neg\exists p\,(p<q),\\
 \nu_{i+1}(q)&\coloneqq\exists p\,
   \bigl(\nu_i(p)\land p<q\land\neg\exists s\,(p<s\land s<q)\bigr)
                 \qquad(0\leq i<M).
\end{aligned}
\]
Thus, $\nu_i$ defines
the number $i$ whenever it is present, and
$\operatorname{Large}\coloneqq\exists q\,\nu_M(q)$ tests whether the
input has at least $M$ elements.

Let $\mathcal F$ be the finite set of all $\sigma$-structures on
$\{1,\ldots,t\}$, for integers $0\leq t<M$, that admit a homomorphism
to $\B$. For each $\mathfrak F\in\mathcal F$, let $\theta_{\mathfrak F}$
be the first-order sentence asserting that the input is isomorphic to
$\mathfrak F$: existentially name all its elements, assert that they are
distinct and exhaust the universe, and specify the truth value of every
relation tuple.
Put
$\kappa_{\B}\coloneqq\bigvee_{\mathfrak F\in\mathcal F}\theta_{\mathfrak F}$.
This fixed first-order sentence handles all inputs with fewer than $M$
elements. 

For the remaining inputs, a \emph{value code} $v=(c,p)$ consists of
one numerical variable $c$ for its colour and one numerical variable $p$
for its \emph{rank}. 
Define $\operatorname{Value}(c,p)\coloneqq
\bigvee_{a\in[k]}
\nu_a(c) 
\land\neg\nu_0(p)$.

It is easy to see that the expansion of $\bD_k$ by the unary relations $C_1,\dots,C_k$ is homogeneous, 
and therefore has \emph{quantifier elimination}, i.e., for each $R\in\sigma$ of positive arity there exists 
a quantifier-free definition $\chi_R$ of $R^{\B}$ in $(\bD_k,C_1,\dots,C_k)$.
Translate it into a numerical formula
$\chi_R^{\#}(v_1,\ldots,v_{r_R})$ as follows. Equality becomes equality
of value codes, $\sim$ becomes equality of colour variables, $C_a$ becomes
$\operatorname{Col}_a$, and $<_a$ becomes equality of both colours to
$a$ together with the numerical comparison of ranks.

We encode the constraint lists of a tuber system, including the auxiliary
pair constraints $\psi_{xy}$ from Section~\ref{sec:algorithm}.
Throughout this
section, $L_x$ abbreviates $\pr_x(L_{\psi_{xx}})$. Thus, a non-empty
system is AC precisely when $\pr_x(L_\phi)=L_x$ for every constraint
$\phi$ containing $x$. Testing every constraint tuple also tests every
variable value, since pinning $(a,a)$ in $L_{\psi_{xx}}$ pins $x$ to $a$.
The outer tests will pin individual tuples of the pair constraints and
run CSAC on the resulting system.

For the translation of colour predicates, put
$\operatorname{Col}_a(c,p)\coloneqq\nu_a(c)$ for $a\in[k]$;
thus $\operatorname{Value}(c,p)$ is
$(\bigvee_{a\in[k]}\nu_a(c))\land\neg\nu_0(p)$.
All of these choices depend only on $\B$.

Put $m=\max_{j\in\mathcal J}r_j$. For $R\in\mathcal J\cap\sigma$,
let $I_R(x_1,\ldots,x_{r_R})$ be $R(x_1,\ldots,x_{r_R})$;
put $I_\rho=\top$ and $\chi_\rho^{\#}=\top$.
A label is represented by one numerical variable $\lambda$, using
$\operatorname{Lab}_j(\lambda)\coloneqq\nu_{\ell(j)}(\lambda)$.
A \emph{constraint code} is a tuple $\alpha=(\lambda,\mathbf x)$,
where $\mathbf x=(x_1,\ldots,x_m)$ consists of input variables.
A \emph{tuple code} is $t=(\alpha,\mathbf v)$, where
$\mathbf v=(v_1,\ldots,v_m)$ consists of value-code blocks.
The following formulas describe the constraints and their initial lists:
\[
\begin{aligned}
 \operatorname{Con}(\lambda,\mathbf x)
 &\coloneqq\bigvee_{j\in\mathcal J}\Bigl(
   \operatorname{Lab}_j(\lambda)\land I_j(x_1,\ldots,x_{r_j})
   \land\bigwedge_{r_j<i\leq m}x_i=x_1\Bigr),\\
 \operatorname{Tup}(\lambda,\mathbf x,\mathbf v)
 &\coloneqq\operatorname{Con}(\lambda,\mathbf x)
   \land\bigwedge_{i\in[m]}\operatorname{Value}(v_i)\\
 &\hspace{5mm}\land\bigwedge_{i,j\in[m]}(x_i=x_j\to v_i=v_j)\\
 &\hspace{5mm}\land\bigvee_{j\in\mathcal J}
   \bigl(\operatorname{Lab}_j(\lambda)\land
         \chi_j^{\#}(v_1,\ldots,v_{r_j})\bigr).
\end{aligned}
\]
Positions beyond the arity of a constraint repeat $x_1$ and its value.
The equality implications also ensure that repeated variables receive
equal values. In particular, the initial list of $\psi_{xx}$ consists
exactly of the diagonal pairs. Each tuple on the distinct variables of
a constraint has a unique code. The formula $\operatorname{Con}$ is
independent of $\operatorname{Tup}$, so an initially empty constraint
list still has a constraint code.

In what follows, $t,u$ denote tuples of code variables, $\alpha$ a tuple
of constraint-code variables, $x$ an input variable, and $v$ a value-code
block. Expressions such as $\alpha(t)$, $x_i(t)$ and $v_i(t)$ abbreviate
the corresponding blocks of variables. Define
\[
\begin{aligned}
 \operatorname{At}(\alpha,t)
   &\coloneqq\operatorname{Tup}(t)\land\alpha=\alpha(t),\\
 \operatorname{Inc}(\alpha,x)
   &\coloneqq\operatorname{Con}(\alpha)\land
                     \bigvee_{i\in[m]}x=x_i(\alpha),\\
 \operatorname{Val}(t,x,v)
   &\coloneqq\operatorname{Tup}(t)\land
                     \bigvee_{i\in[m]}(x=x_i(t)\land v=v_i(t)),\\
 \operatorname{Pair}(t)
   &\coloneqq\operatorname{Tup}(t)\land
                     \operatorname{Lab}_\rho(\lambda(t)).
\end{aligned}
\]
Tuple equalities are coordinatewise. Quantifiers over codes abbreviate
quantification over their input and numerical variables.

Let $D$ be a relation variable with the argument places of a tuple code;
it records the deleted tuples. Put
\[
\begin{aligned}
 \operatorname{Reject}(D)
  &\coloneqq\exists\alpha\,\bigl(\operatorname{Con}(\alpha)\land
       \forall t\,(\operatorname{At}(\alpha,t)\to D(t))\bigr),\\
 \operatorname{Bad}(D,t)
  &\coloneqq\operatorname{Tup}(t)\land
     \exists x\exists v\exists\alpha\,\bigl(
       \operatorname{Val}(t,x,v)\land\operatorname{Inc}(\alpha,x)\\
  &\hspace{24mm}\land
       \forall u\,((\operatorname{At}(\alpha,u)\land
                     \operatorname{Val}(u,x,v))\to D(u))\bigr).
\end{aligned}
\]
The first formula detects an empty remaining constraint list, including
one that was empty initially. The second detects a tuple whose value at
some variable has no support in another constraint list containing that
variable. The diagonal pair constraints supply all the information
otherwise stored in variable lists.

For tuple-code variables $t,u$, define
\[
 P_t(u)\coloneqq\operatorname{Tup}(t)\land\operatorname{Tup}(u)
                    \land\alpha(t)=\alpha(u)\land u\ne t.
\]
Adding $P_t$ to the deletion set restricts the constraint list containing
$t$ to that tuple. If $t$ has already been deleted, the resulting list
is empty.
For a formula $F$ with a deletion-predicate parameter and a formula
$\theta$ on tuple codes, $F[\theta]$ denotes substitution for that
parameter, with bound variables renamed to avoid capture. The same
convention applies to $\operatorname{Reject}(\theta)$.
For example, $Z\lor P_t$ means $w\mapsto Z(w)\lor P_t(w)$, with $w$
fresh. Pinning parameters remain free in the inner fixed points.

With $Z$ a relation variable of the same arity as $D$, define
\begin{subequations}\label{eq:fixedpoints}
\begin{equation}
 \mathcal A[D](t)\coloneqq
   [\operatorname{lfp}_{Z,u}\,
     (D(u)\lor Z(u)\lor\operatorname{Bad}(Z,u))](t).
\end{equation}
This formula performs the deletions needed for AC. Testing each tuple
of every constraint list gives
\begin{equation}
\begin{aligned}
 \mathcal C[D](t)\coloneqq
 [\operatorname{lfp}_{Z,u}\,\bigl(&D(u)\lor Z(u)\lor\mathcal A[Z](u)\\
   &{}\lor(\operatorname{Tup}(u)\land
        \operatorname{Reject}(\mathcal A[Z\lor P_u]))\bigr)](t).
\end{aligned}
\end{equation}
Finally, test each tuple of each auxiliary pair constraint by running
CSAC after pinning that tuple:
\begin{equation}
\begin{aligned}
 \mathcal O[D](t)\coloneqq
 [\operatorname{lfp}_{Z,u}\,\bigl(&D(u)\lor Z(u)\lor\mathcal C[Z](u)\\
   &{}\lor(\operatorname{Pair}(u)\land
        \operatorname{Reject}(\mathcal C[Z\lor P_u]))\bigr)](t).
\end{aligned}
\end{equation}
\end{subequations}
All deletion predicates occur positively. Since the code lengths and
the sets of labels are fixed, these are $\FPnum$ formulas. They implement
the three levels of deletion in Section~\ref{sec:algorithm}, with all
currently justified deletions performed in parallel. Define
\begin{equation}\label{eq:fp-sentence}
 \Psi\coloneqq\kappa_{\B}\lor
   \bigl(\operatorname{Large}
   %\land\eta_{\B}
   \land
         \neg\operatorname{Reject}(\mathcal O[\bot])\bigr),
\end{equation}
where $\bot$ is the always-false predicate on tuple codes. Expanding the
abbreviations yields a closed $\FPnum$ sentence in signature $\sigma$.
It uses neither counting terms nor an order on input elements.

\begin{proposition}\label{prop:fp-upper}
Under conditions~\conditionref{cond:temporal} and~\conditionref{cond:colour},
for every finite $\sigma$-structure $\A$,
$\A\models\Psi$ if and only if there is a homomorphism $\A\to\B$.
In particular, $\CSP(\B)$ is definable in $\FPnum$.
\end{proposition}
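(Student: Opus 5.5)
The proof splits on the size of $\A$, matching the two disjuncts of $\Psi$. We must also check that the fixed-point formulas faithfully simulate the algorithm of Section~\ref{sec:algorithm}; correctness of that algorithm then comes from Proposition~\ref{prop:algorithm}.

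\emph{Small inputs.} If $|A|<M$, then $\operatorname{Large}$ fails, since $\nu_M$ defines no element when the number sort is $\{0,\ldots,|A|\}$ with $|A|<M$. So $\A\models\Psi$ iff $\A\models\kappa_\B$. That holds iff $\A$ is isomorphic to some $\mathfrak F\in\mathcal F$, i.e.\ iff $\A\to\B$, because $\mathcal F$ contains all such structures up to isomorphism on $\{1,\dots,t\}$. Conversely, if $|A|\geq M$, no $\theta_{\mathfrak F}$ holds, since each fixes the universe size to $t<M$. So $\Psi$ reduces to $\neg\operatorname{Reject}(\mathcal O[\bot])$.

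\emph{Large inputs: encoding.} Let $n=|A|\geq M$. Value codes $(c,p)$ with $c\in[k]$ (using $k\leq M\leq n$ so each $\nu_a$ names a number) and $1\leq p\leq n$ are in bijection with the sample $S=[k]\times[n]$ via $(c,p)\mapsto(c,p)$. The ranks carry the order of $\Q$, and since $\B$ is an fo-expansion of a homogeneous structure, $\chi_R^{\#}$ defines exactly $R^\B\cap S^{r_R}$ on codes. Labels fit into numbers since $|\mathcal J|\leq M$.

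I will verify that the tuple codes satisfying $\operatorname{Tup}$ and having a fixed constraint code $\alpha$ correspond bijectively to the initial list $L_\phi=\phi^\B\cap S^{\ar(\phi)}$. For pair labels they correspond to $S^2$ or to the diagonal. Padding positions and the equality implications make the code unique. Hence a deletion set $D$ encodes a tuber system $L^D$ (initial lists minus deleted tuples). Under this encoding:
\begin{itemize}
\item $\operatorname{Reject}(D)$ holds iff some list of $L^D$ is empty;
\item $\operatorname{Bad}(D,t)$ holds iff $t$ violates AC in $L^D$, where $L_x$ is read off $\psi_{xx}$;
\item $P_t$ pins the list of $t$ to $\{t\}$.
\end{itemize}

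\emph{Large inputs: the fixed points.} I show by induction on the three nesting levels that $\mathcal A[D]$ defines the deletion set of the largest AC subsystem of $L^D$. The inflationary iteration of $\mathcal A$ only deletes unsupported tuples, and every tuple outside the largest AC subsystem is eventually deleted. Both facts follow from monotonicity: deletions are justified in every AC subsystem. Similarly, $\mathcal C[D]$ yields the largest CSAC subsystem, and $\mathcal O[D]$ the largest subsystem that is CSAC and passes all pair tests. The key observation is that the set of these systems is closed under union, so a greatest one exists. Moreover, parallel deletion of justified tuples reaches exactly that greatest fixed point, independent of the order used in Section~\ref{sec:algorithm}.

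Thus $\neg\operatorname{Reject}(\mathcal O[\bot])$ holds iff the algorithm accepts. By Proposition~\ref{prop:algorithm}, which applies under conditions (a),(b), this holds iff $\A\to\B$. Finally, all relation variables occur positively, and the arities depend only on $\B$. No counting terms or input order are used, so $\Psi\in\FPnum$.

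\emph{Main obstacle.} The hardest step is the order-independence (confluence) argument. Singleton tests at the CSAC and pair levels are not deletions of a single monotone operator in a naive sense, since the result of a singleton test depends on the current $Z$. I would first prove that each test is antitone in the system: shrinking the system can only make a pinned test fail more. This shows the operators are monotone in $Z$, so the least fixed point equals the result of any fair sequential run.
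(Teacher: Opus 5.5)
Your proof is correct, but it is organised differently from the paper's. You show that the three nested fixed points compute, respectively, the greatest AC subsystem, the greatest CSAC subsystem, and the greatest subsystem that is CSAC and has non-empty CSAC certificates for all its pair tuples. For this you use that each of these classes is closed under unions and that the deletion predicates occur positively. You then deduce that $\neg\operatorname{Reject}(\mathcal O[\bot])$ holds exactly when the sequential algorithm of Section~\ref{sec:algorithm} accepts, whatever order it deletes in, and you invoke Proposition~\ref{prop:algorithm}.

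The paper never relates $\Psi$ to the sequential procedure. It characterises only $\mathcal A[D]$ and $\mathcal C[D]$ as greatest subsystems, and handles the outer level directly.
\begin{itemize}
\item For soundness, it takes a solution with image in the sample. It then shows that the induced system of singleton lists is CSAC and survives every stage of $\mathcal O[\bot]$.
\item For completeness, it reads the needed properties off the fixed-point equation for $D_*=\mathcal O[\bot]$. The system $L^{D_*}$ is CSAC, and every surviving pair tuple $t$ satisfies $\neg\operatorname{Reject}(\mathcal C[D_*\lor P_t])$, so it has a CSAC certificate. These are exactly the hypotheses of Proposition~\ref{prop:monochromatic}, after which Lemma~\ref{lem:temporal} applies.
\end{itemize}
The paper thus avoids the confluence step you identify as the main obstacle. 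Your modular route pays for that step, but it gains the stronger fact that the formula computes exactly the final system of any run of the algorithm.

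One small imprecision: you attribute both facts at each level to monotonicity, but only one of them follows from it. Monotonicity (positivity) shows that tuples of the greatest AC, CSAC, or pair-certified subsystem are never deleted. The other fact, that every tuple outside that subsystem is eventually deleted, comes from the fixed-point equation. The surviving system has no unsupported tuple, and every surviving tuple passes its pinned test. Because the pair constraints propagate emptiness, the survivor is therefore either entirely empty or a member of the class, and hence contained in its greatest member. You should state this explicitly at each of the three levels.
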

\begin{proof}
Write $n=|A|$. If $n<M$, then $\operatorname{Large}$ is false and the
definition of $\kappa_{\B}$ proves the assertion, including for the empty
input. Suppose $n\geq M$. Then $\operatorname{Large}$ is true and
$\kappa_{\B}$ is false. 

The number sort $N_n=\{0,\ldots,n\}$ contains every colour and label
code. Value codes represent the sample $S=[k]\times[n]$, and
$\chi_R^{\#}$ defines $R^{\B}\cap S^{r_R}$.
Consequently, $\operatorname{Con}$ and $\operatorname{Tup}$ encode the
constraints of $\A$, including the auxiliary pair constraints, and
their initial lists $L^0_\phi$, with nullary constraints omitted. Let $\Theta$ be this set of constraints, including
all $\psi_{xy}$ for $x,y\in A$. Identify the set of tuple codes with
$\mathcal U=\{(\phi,a):\phi\in\Theta,\ a\in L^0_\phi\}$.
For $D\subseteq\mathcal U$, write
$L^D_\phi=\{a\in L^0_\phi:(\phi,a)\notin D\}$ and
$L^D_x=\pr_x(L^D_{\psi_{xx}})$.
We identify the formulas with the relations they define on these codes.

Every non-empty AC subsystem of $L^D$ survives the deletions in
$\mathcal A[D]$, since it supplies a support for each of its tuples in
every incident constraint list. Conversely, if
$\neg\operatorname{Reject}(\mathcal A[D])$, no remaining tuple is
unsupported. The projections of all constraint lists containing $x$
therefore agree, and equal the projection of the diagonal list
$L^{\mathcal A[D]}_{\psi_{xx}}$. Thus $L^{\mathcal A[D]}$ is AC.
It is the greatest non-empty AC subsystem of $L^D$; rejection occurs
precisely when no such subsystem exists.

Similarly, every non-empty CSAC subsystem $K\leq L^D$ survives
$\mathcal C[D]$. AC deletions preserve $K$, and an AC-certificate in
$K$ for a tuple $t$ prevents rejection after adding $P_t$.
Conversely, put $Z=\mathcal C[D]$ and suppose
$\neg\operatorname{Reject}(Z)$. Closure gives $\mathcal A[Z]=Z$,
so $L^Z$ is AC. For each surviving tuple code $t$, closure also gives
$\neg\operatorname{Reject}(\mathcal A[Z\lor P_t])$.
The preceding paragraph supplies a non-empty AC subsystem of $L^Z$
whose list at $\alpha(t)$ consists only of the tuple coded by $t$.
Hence $L^Z$ is CSAC. Certificates for variable values follow by applying
this observation to diagonal pair tuples. Thus $\mathcal C[D]$ leaves
the greatest non-empty CSAC subsystem, or rejects if none exists.

Suppose first that a homomorphism $\A\to\B$ exists. By sampling, there
is one, say $h$, whose image lies in $S$. Let $K_\phi$ contain the tuple
assigned by $h$ to the variables of $\phi$, for each
$\phi\in\Theta$. This defines a non-empty CSAC subsystem of
$L^0$. By induction, none of its tuples is deleted at any stage of
$\mathcal O[\bot]$. Indeed, CSAC deletions preserve $K$.
If a pair tuple $t$ belongs to $K$, pinning $t$ also preserves $K$,
which witnesses that the ensuing CSAC test cannot reject. Every list
therefore retains its tuple from $h$, and $\A\models\Psi$.

Conversely, suppose $\A\models\Psi$, and put
$D_*=\mathcal O[\bot]$ and $L=L^{D_*}$.
Then $\neg\operatorname{Reject}(D_*)$ and $\mathcal C[D_*]=D_*$,
so $L$ is a finite non-empty CSAC tuber system. For every
$x,y\in A$ and $a\in L_{\psi_{xy}}$, let $t$ code $(\psi_{xy},a)$.
Since $t$ survives and $\operatorname{Pair}(t)$ holds, the outer
fixed-point equation implies
$\neg\operatorname{Reject}(\mathcal C[D_*\lor P_t])$.
Its remaining constraint lists form a finite non-empty CSAC subsystem
$K\leq L$ with $K_{\psi_{xy}}=\{a\}$. Thus
\begin{equation}\label{eq:fp-pair-certificates}
 \begin{gathered}
 \text{for every $x,y\in A$ and $a\in L_{\psi_{xy}}$, there is a finite}\\
 \text{non-empty CSAC subsystem $K\leq L$ with $K_{\psi_{xy}}=\{a\}$.}
 \end{gathered}
\end{equation}
With the variable lists read from the diagonal constraints, this is
exactly the certificate property used in the proof of
Proposition~\ref{prop:monochromatic}. That proof applies to $L$ and
yields a finite non-empty colour-determined CSAC tuber system.
Lemma~\ref{lem:temporal} then gives a homomorphism $\A\to\B$.
\end{proof}

%%%%%%%%%%%%%%%%%%%%%%%%%%%%%%%%%%%%%%%%%%%%%%%%%%%%%%%%%%%%%%%%%%%%%%%%%%%%%%%%%%%%%%%%%%%%%%%%%%%%%%%%%%%%%%%%%%%%%%%%%
%%%%%%%%%%%%%%%%%%%%%%%%%%%%%%%%%%%%%%%%%%%%%%%%%%%%%%%%%%%%%%%%%%%%%%%%%%%%%%%%%%%%%%%%%%%%%%%%%%%%%%%%%%%%%%%%%%%%%%%%%
\section{Obstructions and the Classification}\label{sec:converse}
In this section we show that the CSP of a structure that does not satisfy one of the conditions in (4) of Theorem~\ref{thm:algebraic} is inexpressible in $\FPC$, and then finish the proof of Theorem~\ref{thm:main} and Theorem~\ref{thm:algebraic}.
\begin{lemma}\label{lem:local_closure}
   Let $\B$ be finite-on-top-of-temporal, let $i\in[k]$, and let $t\colon C_i^2\to C_i$. Then $t\in\overline{\{f|_{C_i}\mid f\in\Pol(\B)\}}$ if and only if there exists a binary $f\in\Pol(\B) $ and an embedding $ e\colon C_i\hookrightarrow C_i$ such that $f|_{C_i}=e\circ t$. 
\end{lemma}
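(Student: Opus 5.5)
We prove both directions; the backward one is routine and the forward one is a compactness argument. The closure $\overline{\,\cdot\,}$ is the local closure on $C_i$: $t$ lies in it iff for every finite $F\subseteq C_i^2$ some binary $f\in\Pol(\B)$ satisfies $f|_F=t|_F$. Here $f|_{C_i}$ makes sense because every polymorphism preserves $\sim$ and the colour classes, by idempotency of $\widetilde f$.

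\emph{Backward direction.} Suppose $f|_{C_i}=e\circ t$ with $e$ an order-embedding $C_i\hookrightarrow C_i$. Fix a finite $F\subseteq C_i^2$. The set $e(t(F))$ is finite, so there is an automorphism $\beta$ of $(\Q;<)$ with $\beta(e(t(a)))=t(a)$ for all $a\in F$. Such a $\beta$ exists because $e^{-1}$ restricted to the finite set $e(t(F))$ is order-preserving, and $(\Q;<)$ is homogeneous. Extend $\beta$ to $\hat\beta$ on $B$ by acting as $\beta$ on $C_i$ and as the identity on the other colour classes. Then $\hat\beta\in\Aut(\D_k)$.

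At this point we need $\hat\beta$ (or some automorphism agreeing with it on $C_i$) to lie in $\Aut(\B)$. The relations of $\B$ are first-order definable in $\D_k$ but need not be preserved by all of $\Aut(\D_k)$, since an fo-expansion could permute colours. Our $\hat\beta$, however, fixes every colour class setwise and acts inside each class as an order automorphism. Such maps preserve every relation that is quantifier-free definable in $(\D_k,C_1,\dots,C_k)$. By the quantifier-elimination remark in Section~\ref{sect:FP}, this covers every relation of $\B$. Hence $\hat\beta\circ f\in\Pol(\B)$ agrees with $t$ on $F$, so $t$ lies in the closure.

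\emph{Forward direction.} Suppose $t$ is in the closure. Enumerate $C_i^2$ as an increasing chain of finite sets $F_1\subseteq F_2\subseteq\cdots$ and choose binary $f_n\in\Pol(\B)$ with $f_n|_{F_n}=t|_{F_n}$. We want a single $f$ and an embedding $e$ with $f|_{C_i}=e\circ t$. This is the standard $\omega$-categorical compactness argument (K\"onig's lemma).

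The group $G$ of colour-preserving automorphisms of $\D_k$ lies in $\Aut(\B)$, as shown above, and has finitely many orbits on $n$-tuples, since $f_{\D_k}$ is finite. Fix an enumeration $b_1,b_2,\dots$ of $B^2$ and consider the $G$-orbits of the tuples $(f_n(b_1),\dots,f_n(b_m))$ for each $m$. Only finitely many orbits occur at each level $m$, so K\"onig's lemma yields a branch. Along that branch we compose with automorphisms $\gamma_n\in G$ and pass to a limit $g=\lim_n\gamma_n\circ f_{n}$ that is defined on all of $B^2$. Since $\Pol(\B)$ is locally closed, $g\in\Pol(\B)$.

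It remains to identify $g$ on $C_i$. Points of $C_i^2$ eventually lie in $F_n$, where $f_n=t$. So for each finite $F\subseteq C_i^2$ there is $\gamma\in G$ with $g|_F=\gamma\circ t|_F$. In particular $g|_{C_i}$ has the same order type as $t$: for all $a,b\in C_i^2$, $g(a)<g(b)$ iff $t(a)<t(b)$, and $g(a)=g(b)$ iff $t(a)=t(b)$. Define $e$ on $t(C_i^2)$ by $e(t(a))\coloneqq g(a)$. This is well defined and order-preserving on the image of $t$. We then extend it to an order embedding $C_i\hookrightarrow C_i$, which takes a little care and is the main obstacle.

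A partial order-embedding of $\Q$ defined on an arbitrary subset need not extend to all of $\Q$. For example, the image could be bounded above while the domain is not, leaving no room to place further points. The fix is to bake extendability into the limit. While building the branch, also record orbits of tuples that include finitely many extra points of $C_i$ in the image of $t$'s complement. Concretely, apply the same compactness argument to the pair $(f_n,\delta_n)$, where $\delta_n$ is a finite partial map on $C_i$ witnessing $g|_F=\gamma\circ t|_F$, and take the limit of $\gamma_n^{-1}$-type maps. Alternatively, replace $g$ by $h\circ g$ for a unary $h\in\Pol(\B)$ that acts as an embedding on $C_i$ and has dense complement of its image; such an $h$ is obtained by compactness from automorphisms. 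With this, the partial map extends, giving $f\coloneqq g$ and the required embedding $e$ with $f|_{C_i}=e\circ t$.
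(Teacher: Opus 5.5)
Your backward direction is correct and is the paper's argument. The detour through quantifier elimination is harmless but unnecessary: every automorphism of $\D_k$ preserves all relations first-order definable in $\D_k$, and it cannot permute colours anyway, since it preserves each $<_j$.

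The forward direction has a genuine gap, exactly at the step you call the main obstacle. Your K\"onig limit $g=\lim_n\gamma_n\circ f_n$ only yields an order-embedding $e_0$ on $t(C_i^2)$, and the proposal never produces an embedding of all of $C_i$. This is a real issue for the $g$ you construct, not a hypothetical one. Take $k=1$, $\B=\D_1$, an order-isomorphism $\iota$ from $\Q$ onto $\Q\setminus\{0\}$, $t=\iota\circ\pr_1$, and $f_n=t$ for all $n$. The $\gamma_n$ only have to agree on finitely many points of $t(C_1^2)$ at a time. So a back-and-forth choice can make $e_0$ a bijection from $t(C_1^2)$ onto $C_1$, and then no injective $e$ with $g|_{C_1^2}=e\circ t$ exists. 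Neither of your repairs closes this.
\begin{itemize}
\item A fixed unary $h$ whose image has dense complement can still leave no room. In the example, let $e_0$ map the negative elements of $t(C_1^2)$ onto $\Q_{<\sqrt2}$. Let $h$ be an order-isomorphism from $\Q$ onto $\Q\setminus\mathbb Z[1/2]$ mapping $\Q_{<\sqrt2}$ onto the elements below $\sqrt2$. Then $h\circ e_0$ still leaves no rational available as an image of $0$. A working $h$ must depend on $g$, opening a suitable gap at each of the countably many cuts of $e_0(t(C_i^2))$ that come from points of $C_i\setminus t(C_i^2)$.
\item Partial maps ``witnessing $g|_F=\gamma\circ t|_F$'' only record $\gamma$ on $t(F)$. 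Their limit is therefore again defined only on $t(C_i^2)$.
\end{itemize}

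The missing idea is to make the embedding part of the limit from the start, rather than something to extend afterwards. The paper does this with the elementary diagram. It uses constants $c_a$ for all $a\in B$, fresh constants $z_{(a,b)}$ with the relation-preservation sentences, and the equations $z_{(a,b)}=c_{t(a,b)}$ for $a,b\in C_i$. By $\omega$-categoricity, a countable model is isomorphic to $\B$ via some $h$. Then $e(a)\coloneqq h(c_a^{\mathfrak M})$ is an order-embedding defined on all of $C_i$, and the equations force $f|_{C_i^2}=e\circ t$.

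In your combinatorial setting the same effect comes from running K\"onig's lemma on the $G$-orbits of $(f_n(b_1),\dots,f_n(b_m),c_1,\dots,c_m)$, where $(c_j)_j$ enumerates $C_i$. Put $e\coloneqq\lim_n\gamma_n|_{C_i}$, which is total. For $a\in C_i^2$ and $n$ large enough that $a\in F_n$, you get $g(a)=\lim_n\gamma_n(f_n(a))=\lim_n\gamma_n(t(a))=e(t(a))$ directly, with no extension step. Your remark about recording extra points of $C_i$ points in this direction. However, the proposal never defines $e$ as such a limit and instead asserts without proof that the partial map extends.
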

\begin{proof}
Suppose first that $f \in \Pol(\B)$ is such that $f|_{C_i^2}=e\circ t$, and let
$F\subseteq C_i^2$ be finite. The map sending $e(a)$ to $a$
for $a\in t(F)$ is a finite partial automorphism of $(C_i;<_i)$.
Extend it to an automorphism of this order, and act as the identity
on the other colours. The resulting map $\alpha$ is an automorphism
of $\mathfrak D_k$, hence of $\B$. Thus $\alpha\circ f$ agrees
with $t$ on $F$, proving that $t \in \overline{\{f|_{C_i}\mid f\in\Pol(\B)\}}$.

Conversely, suppose $t\in\overline{\{f|_{C_i}\mid f\in\Pol(\B)\}}$.
Take the \emph{elementary diagram} of $\B$, using constants $c_a$ for
$a\in B$, i.e., the set of all first-order sentences that hold in the expansion of $\B$ with the additional constants $c_a$ for $a \in B$. 
Then further 
expand the signature by fresh constant symbols $z_{(a,b)}$, one for every $(a,b)\in B^2$.
For every basic $r$-ary relation $R$ and all
$(a_1,\ldots,a_r),(b_1,\ldots,b_r)\in R^{\B}$, add
$R(z_{(a_1,b_1)},\ldots,z_{(a_r,b_r)})$.
Also add $z_{(a,b)}=c_{t(a,b)}$ for all $a,b\in C_i$. Let $T$ be the resulting set of first-order sentences. 
Every finite subset of $T$ is satisfiable in $\B$:
interpret the $c_a$ by $a$ and the $z_{(a,b)}$ using a global
polymorphism whose restriction agrees with $t$ at the finitely many
specified pairs. A standard model theory argument using Compactness and downward L\"owenheim--Skolem
therefore give a countable model $\mathfrak M$ of $T$ (see, e.g.,~\cite{Hodges}).
Its reduct to the signature of $\B$ is isomorphic to $\B$ by
$\omega$-categoricity of $\B$. Fix such an isomorphism $h$, and define
$f(a,b)=h(z_{(a,b)}^{\mathfrak M})$ and
$e(a)=h(c_a^{\mathfrak M})$ for $a\in C_i$.
The relation-preservation sentences make $f$ a polymorphism of $\B$.
The elementary diagram makes $e$ an increasing embedding into $C_i$,
and the added equalities give $f|_{C_i^2}=e\circ t$.
\end{proof}
%%%%%%%%%%%%%%%%%%%%%%%%%%%%%%%%%%%%%%%%%%%%%%%%%%%%%%%%%%%%%%%%%%%%%%%%%%%%%%%%%%%%%%%%%%%%%%%%%%%%%%%%%%%%%%%%%%%
\begin{proposition}[{\cite{AtseriasBulatovDawar}, \cite[Theorem 2.7, Proposition 2.8]{RydvalDescr}}]\label{prop:pp-construction_transfer}
    Let $\B$ and $\mathfrak C$ be structures with finite relational signatures such that $\B$ is pp-constructible from $\mathfrak C$. If $\CSP(\mathfrak C)$ is expressible in $\FPC$, then $\CSP(\B)$ is expressible in $\FPC$.
\end{proposition}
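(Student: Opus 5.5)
The plan is to transport a defining sentence for $\CSP(\mathfrak C)$ along the pp-construction. Pp-construction is a composition of two steps: taking a pp-power, and passing to a homomorphically equivalent structure. So it suffices to check that $\FPC$-definability survives each step and then compose.

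\emph{Homomorphic equivalence.} Suppose $\B$ is homomorphically equivalent to $\mathfrak C'$. Then $\CSP(\B)=\CSP(\mathfrak C')$ as classes of finite structures, because homomorphisms compose. So this step needs no work.

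\emph{Pp-powers.} Let $\mathfrak C'$ be a $d$-th pp-power of $\mathfrak C$, where each $r$-ary relation $R'$ of $\mathfrak C'$ is given by a pp-formula $\phi_{R'}(\bar x_1,\dots,\bar x_r)$ over $\mathfrak C$ with $|\bar x_i|=d$. For an instance $\A$ of $\CSP(\mathfrak C')$, build an instance $I(\A)$ of $\CSP(\mathfrak C)$ as follows.
\begin{itemize}
\item Each element $a$ of $\A$ is replaced by $d$ fresh variables $a^1,\dots,a^d$.
\item Each tuple in $R'^{\A}$ is replaced by a copy of the conjunction in $\phi_{R'}$. The quantified variables of $\phi_{R'}$ become fresh variables indexed by that tuple.
\end{itemize}
Then $\A\to\mathfrak C'$ holds if and only if $I(\A)\to\mathfrak C$. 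A homomorphism $h$ from $\A$ gives values to the $a^j$ via the coordinates of $h(a)$, and the existential witnesses fill the fresh variables. Conversely, a homomorphism of $I(\A)$ reassembles the coordinates into a homomorphism of $\A$.

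The map $\A\mapsto I(\A)$ is a first-order interpretation with parameters-free formulas. Its domain is a definable subset of a bounded power of $A$, tagged by finitely many constant labels, and taken modulo a definable equivalence to handle equality atoms in the pp-formulas. $\FPC$ is closed under such interpretations by the standard substitution lemma. Here equality atoms are the point needing care: either quotient by the definable equivalence, or eliminate them syntactically beforehand. Composing with the $\FPC$ sentence for $\CSP(\mathfrak C)$ then gives one for $\CSP(\mathfrak C')$.

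\emph{Main obstacle.} The delicate step is that the interpretation's domain is a quotient and the counting terms must count equivalence classes correctly. Where counting is present, I would rely on the known closure of $\FPC$ under $\FPC$-reductions with quotients. This is exactly the content of the cited results of Atserias--Bulatov--Dawar and Rydval.
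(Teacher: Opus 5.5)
Your proposal is correct and is essentially the standard argument behind the results the paper cites here; the paper gives no proof of its own beyond \cite{AtseriasBulatovDawar} and \cite[Theorem~2.7, Proposition~2.8]{RydvalDescr}, and your argument matches that route: homomorphic equivalence leaves the CSP unchanged, and a pp-power yields a reduction from the CSP of the pp-power to $\CSP(\mathfrak C)$ under which $\FPC$ is closed. One correction: equality atoms between free variables of the pp-formulas cannot in general be eliminated syntactically, and the equivalence they generate on the variables $a^j$ is a transitive closure along chains of constraint tuples, hence $\FP$-definable but not first-order definable in general, so your map is an $\FPC$-interpretation with an $\FP$-definable quotient rather than a first-order interpretation---this is harmless, since it is covered by the closure of $\FPC$ under such reductions, which you invoke at the end.
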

\begin{proposition}\label{prop:fp-inexpr}
    Let $\B$ be finite-on-top-of-temporal which does not satisfy one of the conditions
    in Theorem~\ref{thm:algebraic}. Then $\B$ pp-constructs $\mathfrak X$ or some finite affine structure, and $\CSP(\B)$ is inexpressible in $\FPC$. 
\end{proposition}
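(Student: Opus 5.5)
We split into the two cases according to which condition of item (4) in Theorem~\ref{thm:algebraic} fails. In both cases we exhibit a pp-construction of either $\mathfrak X$ or a finite affine structure from $\B$. Inexpressibility in $\FPC$ then follows from Proposition~\ref{prop:pp-construction_transfer}, because $\CSP(\mathfrak X)$ is not in $\FPC$ \cite{RydvalDescr} and $\CSP$ of a finite affine structure (linear equations over $\mathbb Z_p$) is not in $\FPC$ \cite{AtseriasBulatovDawar}.

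\emph{Case 1: condition~\conditionref{cond:colour} fails.} The relation $\sim$ is a pp-definable equivalence with $k$ classes, so the finite structure $\E_\B$ on $[k]$ is pp-interpretable in $\B$. We equip $\E_\B$ with all relations $\pi(R)$ for $R$ pp-definable in $\B$, together with the singletons $\{c\}$; the singletons are obtained via the pp-definable sets $C_c$, since $C_c$ is an orbit of $\Aut(\B)$ and is first-order definable, and we argue it is pp-definable by $\omega$-categoricity. The interpretation is through the quotient of the domain of $\B$ by $\sim$; working with pp-powers modulo a pp-definable equivalence, this is a pp-construction.

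Its polymorphisms are exactly the idempotent maps $\widetilde f$ for $f\in\Pol(\B)$. The inclusion ``$\supseteq$'' is immediate. For ``$\subseteq$'', the standard argument is that every polymorphism of the quotient structure lifts, on the finite reduct, to a polymorphism of $\B$, using local closure and $\omega$-categoricity, similarly to Lemma~\ref{lem:local_closure}. Consequently, if no $\widetilde g_1,\widetilde g_2$ form 3-4 weak near-unanimity operations, the finite idempotent structure $\E_\B$ has no such polymorphisms. By the finite-domain theory (Barto--Kozik / Larose--Z\'adori), $\E_\B$ then pp-constructs $(\mathbb Z_p;+,1)$, an affine structure. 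Transitivity of pp-constructions finishes this case.

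\emph{Case 2: condition~\conditionref{cond:temporal} fails for some colour $i$.} We consider the temporal structure $\mathfrak T_i$ on $C_i\cong\Q$ whose relations are all relations $R\cap C_i^r$ with $R$ pp-definable in $\B$. Since $C_i$ is pp-definable, $\mathfrak T_i$ is pp-definable in $\B$. By Lemma~\ref{lem:local_closure}, $\Pol(\mathfrak T_i)$ is the local closure of the restrictions $\{f|_{C_i}\}$, and a binary operation in $\mathcal T$ lies in this closure if and only if condition~(a) holds for $i$. Hence $\mathfrak T_i$ is preserved by none of $\min$, $\mi$, $\tll$ or their duals. Restrictions of unary polymorphisms of $\B$ are injective on $C_i$, because $<_i$ is preserved, so $\mathfrak T_i$ has no constant polymorphism either.

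Theorem~\ref{thm:temporal_descriptive} now gives that $\mathfrak T_i$ pp-constructs $(\{0,1\};\operatorname{1IN3})$ or $\mathfrak X$. $\operatorname{1IN3}$ pp-constructs everything, in particular $\mathfrak X$, or alternatively $(\mathbb Z_2;+,1)$. Either way we obtain the desired obstruction.

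The main obstacle is the exactness of the correspondence in both cases, i.e.\ showing that polymorphisms of the derived structures ($\E_\B$, respectively $\mathfrak T_i$) are precisely the restrictions of global polymorphisms of $\B$. I would handle this with the compactness argument already used in Lemma~\ref{lem:local_closure}, together with the fact that over $\omega$-categorical templates pp-definability coincides with invariance under polymorphisms \cite{bod_nes}.
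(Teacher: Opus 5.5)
Your two-case argument is essentially the paper's proof. In the colour case you build a finite structure on $[k]$ whose polymorphisms are exactly the maps $\widetilde f$ and invoke the finite 3-4 weak near-unanimity theory. In the temporal case you build a temporal structure on $C_i$ whose polymorphism clone is the local closure of the restrictions, and combine Lemma~\ref{lem:local_closure} with Theorem~\ref{thm:temporal_descriptive}. A few slips need fixing.

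First, $(\{0,1\};\operatorname{1IN3})$ does \emph{not} pp-construct $\mathfrak X$. Since $(a,a,b)\in X$ iff $a<b$, every endomorphism of $\mathfrak X$ is strictly increasing. Hence $\mathfrak X$ is not homomorphically equivalent to any finite structure, whereas everything pp-constructible from a finite structure is. It is your fallback to $(\mathbb Z_2;+,1)$, which is also what the paper uses, that carries this step.

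Second, Theorem~\ref{thm:temporal_descriptive} and the finite-domain characterisation are stated for finite signatures, while your $\mathfrak T_i$ and $\E_\B$ have infinitely many relations. You must first pass to finite reducts. For $\mathfrak T_i$, take one relation violating each operation of $\mathcal T$, together with $<$. For $\E_\B$, take one relation violating each ternary or quaternary operation outside $\{\widetilde f\mid f\in\Pol(\B)\}$. This is exactly the paper's $\mathfrak G$ and $\mathfrak C$.

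Third, the exactness you flag as the main obstacle needs no compactness in the colour case. For any $R'$ invariant under all $\widetilde f$, the relation $\pi^{-1}(R')$ is preserved by $\Pol(\B)$, hence pp-definable, and $\pi(\pi^{-1}(R'))=R'$. The finite Galois connection then gives $\Pol(\E_\B)=\{\widetilde f\mid f\in\Pol(\B)\}$.

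In the temporal case, $\Pol(\mathfrak T_i)=\overline{\{f|_{C_i}\}}$ comes from the Galois connection for $\omega$-categorical structures, not from Lemma~\ref{lem:local_closure}. The lemma is only needed to translate membership of some $\mathfrak t\in\mathcal T$ into condition~\conditionref{cond:temporal}.

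Finally, $C_c$ is pp-definable simply as $\exists y\,(x<_c y)$. Being a first-order definable orbit does not by itself give pp-definability.
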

\begin{proof}
    Assume first that $\B$ does not satisfy condition~(\ref{cond:temporal}) for $i\in[k]$. By Lemma~\ref{lem:local_closure}, we get that $\mathscr C_i\coloneqq\overline{\{f|_{C_i}\mid f\in\Pol(\B)\}}$ contains no operation from $\mathcal T$. Note that $(C_i;<_i) \cong (\Q;<)$, and that every automorphism of $(C_i;<_i)$ can be extended to an automorphism of $\B$. Hence, 
    after appropriately identifying $C_i$ with ${\mathbb Q}$ we obtain
    $\Aut(\Q;<)\subseteq\mathscr C_i$, %, so every relation in $\Inv(\mathscr C_i)$ is first-order definable in $(\Q;<)$.\todo{wo wird das verwendet?}
    and since $\mathscr C_i$ is locally closed, $\Pol(\Q;\Inv(\mathscr C_i))=\mathscr C_i$. Since for each $\mathfrak t \in \mathcal{T}$ we have $\mathfrak t\notin\mathscr{C}_i$, there is some $R_\mathfrak{t}\in\Inv(\mathscr{C}_i)$ such that $R_\mathfrak t$ is not preserved by $\mathfrak t$. We take $\mathfrak G \coloneqq (\Q;<,(R_{\mathfrak t})_{\mathfrak t\in\mathcal T})$. This $\mathfrak G$ is a temporal structure with finite signature and no polymorphism from $\mathcal T$. By Theorem~\ref{thm:temporal_descriptive}, it pp-constructs $\mathfrak X$ or $(\{0,1\};\operatorname{1IN3})$, and $\CSP(\mathfrak G)$ is inexpressible in $\FPC$. 
    For every relation $R$ of $\mathfrak G$, the relation $R^{(i)} \coloneqq \{((i,a_1),\ldots,(i,a_r))\mid(a_1,\ldots,a_r)\in R\}$ is preserved by all polymorphisms of $\B$ and hence is pp-definable in $\B$. We obtain a pp-construction of $\mathfrak G$ in $\B$. Proposition~\ref{prop:pp-construction_transfer} implies that $\CSP(\B)$ is inexpressible in $\FPC$. By transitivity of pp-constructions, $\B$ pp-constructs either $\mathfrak X$ or $(\{0,1\};\operatorname{1IN3})$.
    In the former case, we are done.
    In the latter case, by \cite[Theorem~3.2.2]{Book}, it also constructs a finite affine structure.
    %Assume first that $\B$ does not satisfy condition~(\ref{cond:temporal}), i.e., there is some $i\in[k]$ such that the \bluesout{locally closed} \todo{Brauche ich hier nicht eigentich schon locally closed? MB: stimmt, dafuer brauchen wir es, und eigentlich muss man das auch zeigen! Bei Josia gilt das analoge ja nicht.} clone $\mathscr C_i$ generated by $\left\{f|_{C_i}\mid f\in\Pol(\B)\right\}$ does not contain any $\mathfrak{t}\in\mathcal T$. Define $\mathfrak G \coloneqq (\Q; \Inv(\mathscr C_i))$. Since every $f\in\Pol(\B)$ preserves $<_i$, the structure $\mathfrak G$ is a temporal structure,\todo{MB: more detail needed} but $\Pol(\mathfrak G)$ does not contain any $\mathfrak{t}\in\mathcal T$. By Theorem 1.3 from \cite{BR}, $\mathfrak G$ pp-constructs $\mathfrak X$ or $(\{0,1\};\operatorname{1IN3})$, which in turn pp-constructs all finite structures by Theorem 3.2.2 from \cite{Book}. Since a pp-construction uses only finitely many steps, there is some finite-signature reduct $\mathfrak G'$ of $\mathfrak G$ which pp-constructs $\mathfrak X$ or $(\{0,1\};\operatorname{1IN3})$. By Theorem 1.3 from \cite{BR}, $\CSP(\mathfrak G')$ is inexpressible in $\FPC$. Since $\B$ pp-constructs $\mathfrak G'$, Proposition~\ref{prop:pp-construction_transfer} gives the inexpressibility of $\CSP(\B)$ in $\FPC$, and thus also $\B$ pp-constructs $\mathfrak X$ or some finite affine structure.

    Assume now that $\B$ does not satisfy condition~(\ref{cond:colour}), i.e., there are no $g_1,g_2\in\Pol(\B)$ such that $\widetilde g_1, \widetilde g_2$ are 3-4 weak near-unanimity operations. We write $\mathscr{C}=\{\widetilde f\mid f\in\Pol(\B)\}$. Since $\sim$ is a relation of $\B$ and thus taking induced operations commutes with composition, $\mathscr C$ is a clone over $[k]$ which does not contain 3-4 weak near-unanimity operations. Write $\mathscr{H}$ for the set of ternary and quaternary operations on $[k]$ which are not in $\mathscr{C}$, i.e., $\mathscr{H}\cap\mathscr{C}=\varnothing$. For each $h\in \mathscr{H}$, choose $R_h\in\Inv(\mathscr{C})$ not preserved by $h$, and take $\mathfrak C \coloneqq ([k]; (R_h)_{h\in\mathscr{H}})$. By the choice of all $R_h$ from $\Inv(\mathscr{C})$, all ternary and quaternary polymorphisms of $\mathfrak C$ are precisely the $3$- and $4$-ary operations in $\mathscr{C}$. By Theorem 2.8 from \cite{3-4-wnu} (also see~\cite[Theorem 15.62]{GHUA}), we know that $\mathfrak C$ pp-constructs a finite affine structure and, by Theorem 10 of \cite{AtseriasBulatovDawar}, $\CSP(\mathfrak C)$ is inexpressible in $\FPC$. For every relation $R$ of $\mathfrak C$, the relation $\hat R \coloneqq \{(b_1,\dots,b_r)\mid (\pi(b_1),\dots,\pi(b_r))\in R \}$ is preserved by all polymorphisms of $\B$, and so has a pp-definition. This yields a pp-construction of $\mathfrak C$ in $\B$, so Proposition~\ref{prop:pp-construction_transfer} again gives the inexpressibility of $\CSP(\B)$ in $\FPC$, and by transitivity of pp-constructions, also $\B$ pp-constructs a finite affine structure. 
\end{proof}
\begin{proof}[Proof of Theorems~\ref{thm:main} and~\ref{thm:algebraic}]
We prove the equivalence of the four items by showing $(4)\to (3) \to (1) \to (2) \to (4)$.
By Proposition~\ref{prop:fp-upper}, if $\B$ is finite-on-top-of-temporal satisfying the conditions of (4), then $\CSP(\B)$ is definable in $\FPnum$, which is item (3). By definition, $\FPnum \subseteq \FPC$, so (3) implies (1). The implication (1) to (2) follows by contraposition and transitivity of pp-constructions, since neither $\mathfrak X$ nor any finite affine structure have a CSP expressible in $\FPC$ (\cite{RydvalDescr}, \cite{AtseriasBulatovDawar}), and the implication from (2) to (4) follow by contraposition of Proposition~\ref{prop:fp-inexpr}.
\end{proof}

%\todo[inline]{For a given $\B$, it is decidable whether $(a)$ and $(b)$ hold! Explain why...}

\begin{remark}[Decidability of the meta-problem]
Given $k$ and first-order definitions of the additional relations of $\mathfrak B$, it is decidable whether conditions (a) and (b) in Theorem~\ref{thm:algebraic} hold, and hence it is decidable whether $\operatorname{CSP}(\mathfrak B)$ is definable in FPC. This essentially follows from the results and techniques in~\cite{BPT-decidability-of-definability} and will be presented in the full version of this paper.
%To see this, first eliminate quantifiers after adding the colour predicates. The finite Ramsey \todo{Braucht das keine Reference?}theorem permits the search for witnesses to be restricted to operations whose output colours and comparisons depend only on the joint colour and order type of their inputs. For the ll operations and their duals, name the threshold\todo{"threshold point" ist undefiniert und meiner Meinung nach unklar, im Kontext des Papers} point in the relevant colour. This canonisation\todo{Auch Canonisation tauchte bisher gar nicht auf} preserves the required temporal behaviour\todo{behaviour ebenso} up to increasing output embeddings and preserves the induced operations on colours.
%\red{For each fixed arity, there are only finitely many possible behaviour\todo{What is this?} tables. Their consistency and preservation of the finitely many basic relations can be checked on finitely many input types. Consistent tables are realised by embedding the resulting countable ordered sets of output classes into the rational orders. For condition (a), test the six prescribed binary temporal behaviours on each colour, using the embedded-lift argument\todo{Was ist das für ein Argument?}. For condition (b), enumerate ternary and quaternary colour operations satisfying the required identities and test whether they admit such lifts. Thus both searches are finite and effective.}
\end{remark}

\section{Conclusion and Open Problems}\label{sec:conclusion}
We proved that every CSP for a template which is finite-on-top-of-temporal is either expressible in FP$^*$, or not even expressible in FPC. The difficult part is showing that there exists an FP$^*$ sentence
in the situation where known obstructions to FPC fail. 
Our FP$^*$ sentence implements a relatively expensive algorithm based on first sampling a sufficiently large finite substructure of the template and then running a consistency algorithm which establishes an extension of the constraint singleton arc-consistency algorithm of Zhuk. 
We do not know whether a simpler, less expensive form of consistency suffices (but our proof strategy relies on it).

The proof also leaves open whether ordinary $\FP$
suffices throughout the positive class. The full version will extend the classification
to arbitrary first-order reducts of $\mathfrak D_k$.
We mention two orthogonal directions for future research. 
\begin{itemize}
    \item Classify the CSPs for finite-on-top-of-temporal templates with respect to containment in P and NP-hardness. This is  challenging, because it is a proper generalisation of the Bulatov-Zhuk dichotomy result; in fact, already the case $k=2$ is challenging, because it requires to deal with combinations of quotient clones that cannot be solved by Datalog with algorithms for temporal CSPs that cannot be expressed in FPC; it is in general unclear which tractable templates can arise in this setting.
    \item Determine which structures of polynomial orbit growth have a CSP in FP$^*$ and in FPC.
    %\item \red{Determine which structures with a first-order interpretation in $({\mathbb Q};<)$ have a CSP in FP$^*$ and in FPC.} 
\end{itemize}

\label{maintext-end}

%\clearpage
\bibliographystyle{plainurl}
\bibliography{lit}{}

%\clearpage
%\appendix
%%%%%%%%%%%%%%%%%%%%%%%%%%%%%%%%%%%%%%%%%%%%%%%%%%%%%%%
\end{document}